\def\maxwidth{ %
  \ifdim\Gin@nat@width>\linewidth
    \linewidth
  \else
    \Gin@nat@width
  \fi
}
\theoremstyle{definition}
\newtheorem{theorem}{Theorem}
\newtheorem{lemma}{Lemma}
\newtheorem{corollary}{Corollary}
\newtheorem{proposition}{Proposition}
\newcommand{\R}{\ensuremath{\mathbb{R}}}
\newcommand{\bbone}{\ensuremath{\mathbbm{1}}}
\newcommand{\E}{\ensuremath{\mathbb{E}}}
\newcommand{\calL}{\ensuremath{\mathcal{L}}}
\newcommand{\trace}{\text{trace}}
\newcommand{\scm}{\text{scm}}
\newcommand{\adj}{\text{adj}}
\newcommand{\aug}{\text{aug}}
\newcommand{\ridge}{\text{ridge}}
\newcommand{\cov}{\text{cov}}
\def\super{\textsuperscript}
\def\b1{\boldsymbol{1}}
\newcommand\undermat[2]{%
  \makebox[0pt][l]{$\smash{\underbrace{\phantom{%
    \begin{matrix}#2\end{matrix}}}_{\text{$#1$}}}$}#2}
\newcommand*{\addFileDependency}[1]{
  \typeout{(#1)}
  \@addtofilelist{#1}
  \IfFileExists{#1}{}{\typeout{No file #1.}}
}
\definecolor{RED}{RGB}{255,0,0}
\title{
The Augmented Synthetic Control Method%
\thanks{email: \texttt{afeller@berkeley.edu}. We thank Alberto Abadie, Josh Angrist, Matias Cattaneo, Alex D'Amour, Peng Ding, Erin Hartman, Chad Hazlett, Steve Howard, Guido Imbens, Brian Jacob, Pat Kline, Caleb Miles, Luke Miratrix, Sam Pimentel, Fredrik S{\"a}vje, Jas Sekhon, Jake Soloff, Panos Toulis, Stefan Wager, Yiqing Xu, Alan Zaslavsky, and Xiang Zhou for thoughtful comments and discussion, as well as seminar participants at Stanford, UC Berkeley, UNC, the 2018 Atlantic Causal Inference Conference, COMPIE 2018, and the 2018 Polmeth Conference. We also thank editors and referees for constructive feedback.}}
\author{Eli Ben-Michael, Avi Feller, and Jesse Rothstein\\[1em] UC Berkeley}
\date{July 2020}
\begin{document}

\maketitle
\thispagestyle{empty}
\pagenumbering{gobble}

\begin{abstract}

The  synthetic  control  method  (SCM)  is  a  popular  approach  for  estimating  the  impact  of a treatment on a single unit in panel data settings.
The ``synthetic control'' is a weighted average of control units that balances the treated unit's pre-treatment outcomes as closely as possible. 
A critical feature of the original proposal is to use SCM only when the fit on pre-treatment outcomes is excellent. 
We propose Augmented SCM as an extension of SCM to settings where such pre-treatment fit is infeasible. 
Analogous to bias correction for inexact matching, Augmented SCM uses an outcome model to estimate the bias due to imperfect pre-treatment fit and then de-biases the original SCM estimate. 
Our main proposal, which uses ridge regression as the outcome model, directly controls pre-treatment fit while minimizing extrapolation from the convex hull.
This estimator can also be expressed as a solution to a modified synthetic controls problem that allows negative weights on some donor units.   
We bound the estimation error of this approach under different data generating processes, including a linear factor model, and show how regularization helps to avoid over-fitting to noise.
We demonstrate gains from Augmented SCM with extensive simulation studies and apply this framework to estimate the impact of the 2012 Kansas tax cuts on economic growth.
We implement the proposed method in the new \texttt{augsynth} \texttt{R} package.
\end{abstract}

\clearpage
\pagenumbering{arabic}

\def\spacingset#1{\renewcommand{\baselinestretch}%
{#1}\small\normalsize} \spacingset{1}

\onehalfspacing

\section{Introduction}

The \emph{synthetic control method} (SCM) is a popular approach for estimating the impact of a treatment on a single unit in panel data settings with a modest number of control units and with many pre-treatment periods \citep{Abadie2003,AbadieAlbertoDiamond2010,Abadie2015}. 
The idea is to construct a weighted average of control units, known as a synthetic control, that matches the treated unit's pre-treatment outcomes. The estimated impact is then the difference in post-treatment outcomes between the treated unit and the synthetic control.
SCM has been widely applied --- the main SCM papers have over 4,000 citations --- and has been called ``arguably the most important innovation in the policy evaluation literature in the last 15 years'' \citep{athey2017state}.

A critical feature of the original proposal, not always followed in practice, is to use SCM only when the synthetic control's pre-treatment outcomes closely match the pre-treatment outcomes for the treated unit \citep{Abadie2015}. 
When it is not possible to construct a synthetic control that fits pre-treatment outcomes well, the original papers advise against using SCM.
At that point, researchers often fall back to linear regression. This allows better (often perfect) pre-treatment fit, but does so by applying negative weights to some control units, extrapolating outside the support of the data.

We propose the \emph{augmented synthetic control method} (ASCM) as a middle ground in settings where excellent pre-treatment fit using SCM alone is not feasible. 
Analogous to bias correction for inexact matching \citep{rubin1973adjust, abadie2011bias}, ASCM begins with the original SCM estimate, uses an outcome model to estimate the bias due to imperfect pre-treatment fit, and then uses this to de-bias the estimate.
If pre-treatment fit is good, the estimated bias will be small, and the SCM and ASCM estimates will be similar. Otherwise, the estimates will diverge, and ASCM will rely more heavily on extrapolation.

Our primary proposal is to augment SCM with a ridge regression model, which we call \emph{Ridge ASCM}. 
We show that, like SCM, the Ridge ASCM estimator
can be written as a weighted average of the control unit outcomes. 
We also show that Ridge ASCM weights can be written as the solution to a modified synthetic controls problem, targeting the same imbalance metric as traditional SCM.
However, where SCM weights are always non-negative, Ridge ASCM admits negative weights, using extrapolation to improve pre-treatment fit.
The regularization parameter in Ridge ASCM directly parameterizes the level of extrapolation by penalizing the distance from SCM weights.
By contrast, 
(ridge) regression alone, which can also be written as a modified synthetic controls problem with possibly negative weights, allows for arbitrary extrapolation and possibly unchecked extrapolation bias.

We relate Ridge ASCM's improved pre-treatment fit to a finite sample bound on estimation error
under several data generating processes, including a simple linear model and the linear factor model often invoked in this setting \citep{AbadieAlbertoDiamond2010}. 
Under a linear model, improving pre-treatment fit directly reduces bias, and the Ridge ASCM penalty term negotiates a bias-variance trade-off. Under a latent factor model, improving pre-treatment fit again reduces bias, though there is now a risk of over-fitting. The penalty term also directly parameterizes this trade-off. Thus, choosing the hyperparameter will be important for practice; we propose a cross-validation procedure in Section \ref{sec:cv}.

Finally, we also describe how the Augmented SCM approach can be extended to incorporate auxiliary covariates other than pre-treatment outcomes. 
We first propose to include the auxiliary covariates in parallel to the lagged outcomes in both the SCM and outcome models. 
We also propose an alternative when there relatively few covariates, extending a suggestion from \citet{Doudchenko2017}: first residualize both the pre- and post-treatment outcomes against the auxiliary covariates, then fit Ridge ASCM on the residualized outcome series. 
We show that this controls the estimation error under a linear factor model with auxiliary covariates.

We demonstrate the properties of Augmented SCM both via calibrated simulation studies and by using it to examine the effect of an aggressive tax cut in Kansas in 2012 on economic output, finding a substantial negative effect. Overall, we see large gains from ASCM relative to alternative estimators, especially under model mis-specification, in terms of both bias and root mean squared error.
We implement the proposed methodology in the \texttt{augsynth} package for \texttt{R}, available at \href{https://github.com/ebenmichael/augsynth}{\texttt{https://github.com/ebenmichael/augsynth}}.

The paper proceeds as follows.
Section \ref{sec:related_work} briefly reviews related work.
Section \ref{sec:scm_overview} introduces notation and the SCM estimator.
Section \ref{sec:aug_scm_overview} gives an overview of Augmented SCM.
Section \ref{sec:ridge_ascm} gives key numerical results for Ridge ASCM.
Section \ref{sec:bias_section} bounds the estimation error under a linear model and a linear factor model, the standard setting for SCM.
Section \ref{sec:extensions} extends the ASCM framework to incorporate auxiliary covariates and alternative outcome models.
Section \ref{sec:empirical} reports on extensive simulation studies as well as the application to the Kansas tax cuts.
Finally, Section \ref{sec:discussion} discusses some possible directions for further research. 
The appendix includes all of the proofs, as well as additional derivations and technical discussion, including a discussion of inference.

\subsection{Related work}
\label{sec:related_work}

SCM was introduced by \citet{Abadie2003} and \citet{AbadieAlbertoDiamond2010, Abadie2015} and is the subject of an extensive methodological literature; see \citet{abadie2019synthreview} and \citet{samartsidis2019assessing} for recent reviews.
We briefly highlight some relevant aspects of this literature. 

A first set of papers adapts the original SCM proposal to allow for more robust estimation while retaining the simplex constraint on the weights. 
\citet{Robbins2017, Doudchenko2017, Abadie_LHour, minard2018dispersion} incorporate a penalty on the weights into the SCM optimization problem, building on a suggestion in \citet{Abadie2015}.
\citet{gobillon2016regional} and \citet{hazlett2018trajectory} explore dimension reduction strategies and other data transformations that can improve the performance of the subsequent estimator.
See also \citet{bilinski2020goldilocks}, who propose a cross-validation procedure for selecting the number of lagged outcomes to include.

A second set of papers relaxes various constraints imposed in the original SCM problem.
In particular, \citet{Doudchenko2017} relax the SCM restriction that control unit weights be non-negative, arguing that there are many settings in which negative weights would be desirable. 
\citet{amjad2018robust} propose an interesting variant that combines negative weights with a pre-processing step.
\citet{powell2018imperfect} instead allows for extrapolation via a Frisch-Waugh-Lovell-style projection, which similarly generalizes the typical SCM setting.
\citet{Doudchenko2017} and \citet{ferman2018revisiting} both propose to incorporate an intercept into the SCM problem, which we discuss in Section \ref{sec:estimator_overview}. 
There have also been several other proposals to reduce bias in SCM, developed independently and contemporaneously with ours. 
\citet{Abadie_LHour} also propose bias correcting SCM using regression, but focus on bias due to interpolation rather than poor pre-treatment fit. \citet{kellogg2020combining} propose using a weighted average of SCM and matching, trading off interpolation and extrapolation bias. Finally, \citet{arkhangelsky2018synthetic} propose the \emph{Synthetic Difference-in-Differences} estimator, which can be seen as a special case of our proposal with a constrained outcome regression.

Finally, there have been several recent proposals to use outcome modeling rather than SCM-style weighting in this setting. These include the matrix completion method in \citet{athey2017mcp}, the generalized synthetic control method in \citet{Xu2017}, and the combined approaches in \citet{hsiao2018panel}. We explore the performance of select methods, both in isolation and in combination with SCM, in Section \ref{sec:sim_results}.

\section{Overview of the Synthetic Control Method} \label{sec:scm_overview}
\subsection{Notation and setup}
\label{sec:setup}

We consider the canonical SCM panel data setting with $i = 1, \ldots, N$ units observed for $t = 1, \ldots, T$ time periods; for the theoretical discussion below, we will consider both $N$ and $T$ to be fixed.
Let $W_i$ be an indicator that unit $i$ is treated at time $T_0<T$ where units with $W_i=0$ never receive the treatment. 
We restrict our attention to the case where a single unit receives treatment, and follow the convention that this is the first one, $W_1 = 1$; see \citet{benmichael2019multisynth} for an extension to multiple treated units.
The remaining $N_0=N-1$ units are possible controls, often referred to as \emph{donor units} in the SCM context.
To simplify notation, we limit to one post-treatment observation, $T=T_0+1$, though our results are easily extended to larger $T$.

We adopt the potential outcomes framework~\citep{neyman1923, rubin1974} and invoke SUTVA, which assumes a well-defined treatment and excludes interference between units \citep{rubin1980}; the potential outcomes for unit $i$ in period $t$ under control and treatment are $Y_{it}(0)$ and $Y_{it}(1)$, respectively. 
The observed outcomes are then:
\begin{equation}
Y_{it}=\begin{cases} Y_{it}(0) & \text{if } W_i=0 \text{ or } t\leq T_0 \\
Y_{it}(1) & \text{if } W_i=1 \text{ and } t>T_0. \\  
\end{cases}
\end{equation}

\noindent We next assume that control potential outcomes are generated as a fixed component $m_{it}$ plus mean zero, additive noise $\varepsilon_{it}$ drawn from some distribution $P(\cdot)$,
$$Y_{it}(0) = m_{it} + \varepsilon_{it}.$$
The treated potential outcome is then $Y_{it}(1) = Y_{it}(0) + \tau_{it}$, where the treatment effects $\tau_{it}$ are the key estimands, and are fixed parameters.
The treatment effect of interest is thus $\tau = \tau_{1T} = Y_{1T}(1) - Y_{1T}(0)$.

In Section \ref{sec:bias_section}, we consider special cases where $m_{it}$ is a linear function of lagged outcomes or where $m_{it}$ is a linear factor model; in the Appendix, we also consider the case where $m_{it}$ is a linear model with Lipshitz deviations from linearity.
We make two assumptions about the distribution of the noise terms $\varepsilon_{it}$. 
First, we assume that treatment assignment $W_i$ is ignorable given $m_{it}$; specifically that the noise terms in the post-treatment time periods $\bm{\varepsilon}_T = (\varepsilon_{1T},\ldots,\varepsilon_{NT})$ are mean-zero and uncorrelated with treatment assignment,
\begin{equation}
  \label{eq:ignore}
    \E_{\bm{\varepsilon}_{T}}\left[W_i\varepsilon_{iT}\right] = \E_{\bm{\varepsilon}_{T}}\left[(1 - W_i)\varepsilon_{iT}\right] =
    \E_{\bm{\varepsilon}_{T}}\left[\varepsilon_{iT}\right] = 0,
\end{equation}
where the expectation is taken with respect to the noise term at the post-treatment time $T$, $\bm{\varepsilon}_T$. As a result, the noise terms for the treated and control units do not systematically deviate from each other. We discuss this in more detail in the context of our application in Section \ref{sec:empirical}.
Second, for the theoretical results in Sections \ref{sec:bias_section} and \ref{sec:extensions}, we assume that the error terms $\varepsilon_{it}$ are independent (across units and over time) sub-Gaussian random variables with scale parameter $\sigma$. See, for example, \citet{chernozhukov2017exact} for an extended discussion of this general setup in panel data settings.

To emphasize that pre-treatment outcomes serve as covariates in SCM, we use $X_{it}$, for $t\leq T_0$, to represent pre-treatment outcomes; we use the terms \emph{pre-treatment fit} and \emph{covariate balance} interchangeably.
With some abuse of notation, we use $\bm{X}_{0\cdot}$ to represent the $N_0$-by-$T_0$ matrix of control unit pre-treatment outcomes and $\bm{Y}_{0T}$ for the $N_0$-vector of control unit outcomes in period $T$. 
With only one treated unit, $Y_{1T}$ is a scalar, and $\bm{X}_{1\cdot}$ is a $T_0$-row vector of treated unit pre-treatment outcomes.
The data structure is then:

\begin{align}\label{eq:scm_setup}
\left( \begin{array}{c c c c c}
Y_{11} & Y_{12} & \ldots & Y_{1T_0} & Y_{1T} \\
Y_{21} & Y_{22} & \ldots & Y_{2T_0} & Y_{2T}\\
\vdots & & & & \vdots\\
Y_{N1} & Y_{N2} & \ldots & Y_{NT_0} & Y_{NT}\\
\end{array} \right) \equiv
\left(\begin{array}{c  c c c | c} 
X_{11} & X_{12} & \hdots & X_{1T_0} & Y_{1T} \\
\hline
X_{21} & X_{22} & \hdots & X_{2T_0} & Y_{2T} \\
\vdots &  & & & \vdots \\
\undermat{\text{pre-treatment outcomes}}{X_{N1} & X_{N2} & \hdots & X_{NT_0}} & Y_{NT} \\
\end{array}\right) 
\equiv
\left(\begin{array}{c | c}
  \bm{X}_{1\cdot} & Y_{1T} \\
\hline
\bm{X}_{0\cdot} & \bm{Y}_{0T}
\end{array}\right)\\
\nonumber
\end{align}

\subsection{Synthetic Control Method}
\label{sec:scm}	

The Synthetic Control Method imputes the missing potential outcome for the treated unit, $Y_{1T}(0)$, as a weighted average of the control outcomes, $\bm{Y}_{0T}'\bm{\gamma}$ \citep{Abadie2003, AbadieAlbertoDiamond2010, Abadie2015}. Weights are chosen to balance pre-treatment outcomes and possibly other covariates. We consider a version of SCM that chooses weights $\bm{\gamma}$ as a solution to the constrained optimization problem:
\begin{equation}
\begin{aligned}
\min_{\gamma} \;\;\;\;& \|\bm{V_x}^{1/2}( \bm{X}_{1\cdot} - \bm{X}_{0\cdot}'\bm{\gamma})\|_2^2 \;+\; \zeta \sum_{W_i = 0} f(\gamma_i) \\
\text{subject to} \;\;\;\; & \sum_{W_i=0} \gamma_i = 1\\
& \gamma_i \geq 0 \;\;\; i: W_i = 0
\end{aligned}
 \label{eq:vanillaSCM}
\end{equation}
\noindent where the constraints limit $\bm{\gamma}$ to the simplex $\Delta^{N_0} = \{\bm{\gamma} \in \R^{N_0} \mid \gamma_i \geq 0 ~\forall i, ~\sum_i\gamma_i = 1\}$, and where $\bm{V_x} \in \R^{T_0 \times T_0}$ is a symmetric importance matrix and $\|\bm{V_x}^{1/2} (\bm{X}_{1\cdot} - \bm{X}_{0\cdot}'\gamma)\|^2_2 \equiv (\bm{X}_{1\cdot} - \bm{X}_{0\cdot}'\gamma)' \bm{V_x} (\bm{X}_{1\cdot} - \bm{X}_{0\cdot}'\bm{\gamma})$ is the 2-norm on $\mathbb{R}^{T_0}$ after applying $\bm{V_x}^{1/2}$ as a linear transformation. 
To simplify the exposition and notation below, we will generally take $\bm{V_x}$ to be the identity matrix.
The simplex constraint in Equation \eqref{eq:vanillaSCM} ensures that the weights will be sparse and non-negative; \citet{AbadieAlbertoDiamond2010, Abadie2015} argue that enforcing this constraint is important for preserving interpretability.

Equation \eqref{eq:vanillaSCM} modifies the original SCM proposal in two ways. 
First, Equation \eqref{eq:vanillaSCM} excludes auxiliary covariates; we re-introduce them in Section \ref{sec:extensions}. 
Second, Equation \eqref{eq:vanillaSCM} penalizes the dispersion of the weights with hyperparameter $\zeta \geq 0$, following a suggestion in \citet{Abadie2015}.
Possible penalization functions include an elastic net penalty \citep{Doudchenko2017}, an entropy penalty \citep{Robbins2017}, and a penalty based on a measure of pairwise distance \citep{Abadie_LHour}. 
The choice of penalty is less central when weights are constrained to be on the simplex, but becomes more important below when we relax this constraint \citep{Doudchenko2017}.

We can view the SCM optimization problem in Equation \eqref{eq:vanillaSCM} as an approximate balancing weights estimator \citep[see, e.g.][]{Zubizarreta2015,hirshberg2019minimax}.
As with all balancing estimators, a central question is what quantity to balance.
Following the recent methodological literature \citep[see][]{Doudchenko2017, ferman2018revisiting},
Equation \eqref{eq:vanillaSCM} directly optimizes for the pre-treatment fit, minimizing the (possibly weighted) imbalance of pre-treatment outcomes between the treated unit and the weighted control mean.
In Section \ref{sec:bias_section}, we argue that this a natural quantity to target under both linearity and a latent factor model.
Many choices are possible, however, and we can easily modify Equation \eqref{eq:vanillaSCM} to balance other summary measures and functions of the lagged outcomes; see, for example, \citet{gobillon2016regional}, \citet{amjad2018robust}, and \citet{hazlett2018trajectory}.

When the treated unit's vector of lagged outcomes, $\bm{X}_{1\cdot}$, is inside the convex hull of the control units' lagged outcomes, $\bm{X}_{0\cdot}$, the SCM weights in Equation \eqref{eq:vanillaSCM} achieve perfect pre-treatment fit, and the resulting estimator has many attractive properties, including a bias bound established by \citet{AbadieAlbertoDiamond2010}.
Due to the curse of dimensionality, however, achieving perfect (or nearly perfect) pre-treatment fit is not always feasible with weights constrained to be on the simplex \citep[see][]{ferman2018revisiting}.
When ``the pre-treatment fit is poor or the number of pre-treatment periods is small,'' \citet{Abadie2015} recommend against using SCM.
Even if the pre-treatment fit is excellent,  \citet{AbadieAlbertoDiamond2010, Abadie2015} propose extensive placebo checks to ensure that SCM weights do not overfit to noise. 
Thus, the conditional nature of the analysis is critical to deploying SCM, excluding many practical settings.
Our proposal enables the use of (a modified) SCM approach in many of the cases where SCM alone is infeasible.

\section{Augmented SCM}
\label{sec:aug_scm_overview}

\subsection{Overview}

We now show how to modify the SCM approach to adjust for poor pre-treatment fit. 
Let $\hat{m}_{iT}$ be an estimator for the post-treatment control potential outcome $Y_{iT}(0)$. The \emph{Augmented SCM} (ASCM) estimator for $Y_{1T}(0)$ is:
\begin{align}
   \hat{Y}_{1T}^{\text{aug}}(0)  &= \sum_{W_i=0} \hat{\gamma}_i^\scm Y_{iT} \;\; + \; \left( \hat{m}_{1T} -  \sum_{W_i=0} \hat{\gamma}_i^\scm\hat{m}_{iT} \right)\label{eq:ascm_2}\\[0.5em]
    &= \hat{m}_{1T} \;\;\; + \;\;\; \sum_{W_i=0} \hat{\gamma}_i^\scm (Y_{iT} - \hat{m}_{iT}),     \label{eq:ascm_1}
\end{align}
where weights $\hat{\gamma}_i^\scm$ are the SCM weights defined above.
Standard SCM is a special case, where $\hat{m}_{iT}$ is a constant. We will largely focus on estimators that are a function of pre-treatment outcomes, $\hat{m}_{iT} \equiv \hat{m}(\bm{X}_i)$, where $\hat{m}:\R^{T_0} \to \R$.

Equations \eqref{eq:ascm_2} and \eqref{eq:ascm_1}, while equivalent, highlight two distinct motivations for ASCM. Equation \eqref{eq:ascm_2} directly corrects the SCM estimate, $\sum \hat{\gamma}_i^\scm Y_{iT}$, by the imbalance in a particular function of the pre-treatment outcomes $\hat{m}(\cdot)$. Intuitively, since $\hat{m}$ estimates the post-treatment outcome, we can view this as an estimate of the bias due to imbalance, analogous to bias correction for inexact matching \citep{rubin1973adjust, abadie2011bias}.
In this form, we can see that SCM and ASCM estimates will be similar if the estimated bias is small, as measured by imbalance in $\hat{m}(\cdot)$. 
In independent work, \citet{Abadie_LHour} also consider a bias-corrected estimator of this form, though their paper focuses on reducing bias due to interpolation rather than extrapolation. 

Equation \eqref{eq:ascm_1}, by contrast, is analogous to standard doubly robust estimation \citep{robins1994estimation}, which begins with the outcome model but then re-weights to balance residuals.
This is also comparable in form to the generalized regression estimator in survey sampling \citep{Cassel1976, Breidt2017}, which has been adapted to the causal inference setting by, among others, \citet{Athey2016} and \citet{Hirshberg2018}. 
We discuss a connection to inverse propensity score weighting in Appendix \ref{sec:ipw_connection}.

\subsection{Choice of estimator}
\label{sec:estimator_overview}

While this setup is general, the choice of estimator $\hat{m}$ is important both for understanding the procedure's properties and for practical performance. We give a brief overview of two special cases: (1) when $\hat{m}$ is linear in the pre-treatment outcomes; and (2) when $\hat{m}$ is linear in the comparison units. Ridge regression is an important example that is linear in both pre-treatment outcomes and comparison units;
we explore this estimator further in Sections \ref{sec:ridge_ascm} and \ref{sec:bias_section}.

First, consider a model that is linear in pre-treatment outcomes, $\hat{m}(\bm{X}) = \hat{\eta}_0 + \bm{\hat{\eta}} \cdot \bm{X}$. The augmented estimator \eqref{eq:ascm_2} is then:
\begin{equation}
    \label{eq:ridge_ascm}
    \hat{Y}_{1T}^{\aug}(0) = \sum_{W_i=0} \hat{\gamma}_i^{\scm} Y_{iT} \;\; + \;  \sum_{t=1}^{T_0} \hat{\eta}_t \left(X_{1t} -  \sum_{W_i=0} \hat{\gamma}_i^{\scm} X_{it} \right).
\end{equation}
Pre-treatment periods that are more predictive of the post-treatment outcome will have larger (absolute) regression coefficients and so imbalance in these periods will lead to a larger adjustment.
Thus, even if we do not \emph{a priori} prioritize balance in any particular pre-treatment time periods (via the choice of $\bm{V_x}$),
the linear model augmentation will adjust for the time periods that are empirically more predictive of the post-treatment outcome. As we show in Section \ref{sec:ridge_ascm}, the ridge-regularized linear model is an important special case in which the resulting augmented estimator is itself a penalized synthetic control estimator. This allows for a more direct analysis of the role of bias correction.

Second, consider an outcome model that is a linear combination of comparison units, $\hat{m}(\bm{X}) = \sum_{W_i = 0}\hat{\alpha}_i(\bm{X}) Y_{iT}$, for some weighting function $\hat{\alpha}:\R^{T_0} \to \R^{N_0}$.
Examples include $k$-nearest neighbor matching and kernel weighting as well as other  ``vertical'' regression approaches \citep{athey2017mcp}.
The augmented estimator \eqref{eq:ascm_2} is itself a weighting estimator that adjusts the SCM weights:\footnote{We thank an anonymous reviewer for suggesting this presentation.}
\begin{equation}
    \label{eq:augmented_with_weights}
    \hat{Y}_{1T}^\aug(0) = \sum_{W_i = 0}\left(\hat{\gamma}_i^\scm + \hat{\gamma}^\adj_i \right) Y_{iT}, \;\;\text{ where } \;\; \hat{\gamma}_i^\adj \equiv \hat{\alpha}_i(\bm{X}_1) - \sum_{W_j = 0}\hat{\gamma}_j^\scm \hat{\alpha}_i(\bm{X}_j).
\end{equation}
Here the adjustment term for unit $i$, $\hat{\gamma}^\adj_i$, is the imbalance in a unit $i$-specific transformation of the lagged outcomes that depends on the weighting function $\alpha(\cdot)$.
While $\hat{\gamma}^\scm$ are constrained to be on the simplex, the form of  $\hat{\gamma}^\adj$ makes clear that the overall weights can be negative.

There are many special cases to consider, though we do not explore them in depth. One is the linear-in-lagged-outcomes model with equal coefficients, $\hat{\eta}_t = \frac{1}{T_0}$, which estimates a fixed-effects outcome model as $\hat{m}(\bm{X}_i) = \bar{X}_i$. The corresponding treatment effect estimate adjusts for imbalance in all pre-treatment time periods equally, and yields a weighted difference-in-differences estimator:
\begin{equation}
  \label{eq:demeaned_scm_tau}
  \hat{\tau}^{\text{de}} = \left(Y_{1T} - \bar{X}_1\right) - \left(\sum_{W_i=0}\hat{\gamma}_{i}(Y_{iT} - \bar{X}_i)\right) = \frac{1}{T_0}\sum_{t=1}^{T_0}\left[ \left(Y_{1T} - X_{1t}\right) - \left(\sum_{W_i=0}\hat{\gamma}_i(Y_{iT} - X_{it})\right)\right].
\end{equation}
An augmented estimator of this form has appeared as the \emph{de-meaned} or \emph{intercept shift SCM} \citep{Doudchenko2017, ferman2018revisiting}.\footnote{In these proposals, the SCM weights balance the \emph{residual} outcomes $X_{it}-\bar{X}_i$ rather than the raw outcomes $X_{it}$.  We further consider balancing residuals in Section \ref{sec:extensions}.} 
See also \citet{arkhangelsky2018synthetic}, 
who extend this to weight across both units and time.

In Section \ref{sec:sim_results} we conduct a simulation study to inspect the performance of a range of estimators, including other penalized linear models, such as the LASSO, flexible machine learning models, such as random forests, and panel data methods, such as fixed effects models and low-rank matrix completion methods \citep{Xu2017, athey2017mcp}.

\section{Ridge ASCM improves pre-treatment fit while penalizing extrapolation}
\label{sec:ridge_ascm}

We now inspect the algorithmic and numerical properties for the special case where $\hat{m}(\bm{X}_i)$ is estimated via a ridge-regularized linear model, which we refer to as \emph{Ridge Augmented SCM} (Ridge ASCM). 
With Ridge ASCM, the estimator for the post-treatment outcome is $\hat{m}(\bm{X}_i)=\hat{\eta}_0^\ridge + \bm{X}_{i}^{\prime}\bm{\hat{\eta}}^\ridge$, where $\hat{\eta}_0^\ridge$ and $\bm{\hat{\eta}}^\ridge$ are the coefficients of a ridge regression of control post-treatment outcomes $\bm{Y}_{0T}$ on centered pre-treatment outcomes $\bm{X}_{0\cdot}$ with penalty hyper-parameter $\lambda^\ridge$:\footnote{Similar to the synthetic controls problem, we can regularize time periods differently with a generalized ridge penalty $\bm{\eta}' \bm{\Lambda} \bm{\eta}$ using an importance matrix $\bm{\Lambda}$. Following the typical case with diagonal elements, the generalized ridge penalty reduces to separate regularization on each time period.}
\begin{equation}
  \label{eq:ridge_params}
  \left\{\hat{\eta}^\ridge_0,\bm{\hat{\eta}}^\ridge\right\}=\arg\min_{\eta_0, \bm{\eta}} \;\; \frac{1}{2}\sum_{W_i=0}(Y_i - (\eta_0 + X_i'\bm{\eta}))^2 + \lambda^\ridge \|\bm{\eta}\|_2^2.
\end{equation}
\noindent The Ridge Augmented SCM estimator is then:
\begin{equation} \label{eq:ridge_ascm_intro}
\hat{Y}_{1T}^{\aug}(0) = \sum_{W_i=0} \hat{\gamma}_i^{\scm} Y_{iT} +  \left(\bm{X}_{1} -  \sum_{W_i=0} \hat{\gamma}_i^{\scm} \bm{X}_{i\cdot} \right) \cdot \bm{\hat{\eta}}^\ridge.
\end{equation}

\noindent We first show that Ridge ASCM is a linear weighting estimator as in Equation \eqref{eq:augmented_with_weights}. Unlike augmenting with other linear weighting estimators, when augmenting with ridge regression the implied weights are themselves the solution to a penalized synthetic control problem, as in Equation \eqref{eq:vanillaSCM}. 
Using this representation, we show that when the treated unit lies outside the convex hull of the control units, Ridge ASCM improves the pre-treatment fit relative to SCM alone by allowing for negative weights and extrapolating away from the convex hull.

Allowing for negative weights is an important departure from the original SCM proposal,
which constrains weights to be on the simplex.
Ridge regression alone also allows for negative weights, and may have negative weights even when the treated unit is inside of the convex hull.
In contrast, Ridge ASCM directly penalizes distance from the sparse, non-negative SCM weights, controlling the amount of extrapolation by the choice of $\lambda^{\ridge}$, and only resorts to negative weights if the treated unit is outside of the convex hull.

\subsection{Ridge ASCM as a penalized SCM estimator}

We now express both Ridge ASCM and ridge regression alone as special cases of the penalized SCM problem in Equation \eqref{eq:vanillaSCM}.
The Ridge ASCM estimate of the counterfactual is the solution to \eqref{eq:vanillaSCM}, replacing the simplex constraint with a penalty  $f(\gamma_i) = \left(\gamma_i - \hat{\gamma}^\scm_i\right)^2$ that penalizes \emph{deviations from the SCM weights}.

\begin{lemma}
\label{lem:ridge_ascm_weights}

\noindent The ridge-augmented SCM estimator \eqref{eq:ridge_ascm} is:
\begin{equation}
    \label{eq:ridge_ascm_weights}
    \hat{Y}_{1T}^{\aug}(0) = \sum_{W_i=0}\hat{\gamma}_i^{\aug}Y_{iT},
\end{equation}
where 
\begin{equation}
    \label{eq:greg}
    \hat{\gamma}_i^{\aug} = \hat{\gamma}_i^{\scm} + (\bm{X}_{1} - \bm{X}_{0\cdot}'\bm{\hat{\gamma}}^{\scm})'(\bm{X}_{0\cdot}'\bm{X}_{0\cdot} + \lambda^\ridge \bm{I_{T_0}})^{-1} \bm{X}_{i\cdot}.
\end{equation}
Moreover, the Ridge ASCM weights $\bm{\hat{\gamma}}^{\aug}$ are the solution to 
    \begin{equation}
        \label{eq:ridge_ascm_primal}
        \min_{\bm{\gamma}~ \text{s.t.} \sum_i\gamma_i=1} \frac{1}{2\lambda^\ridge}\|\bm{X}_{1\cdot} - \bm{X}_{0\cdot}'\gamma\|_2^2 + \frac{1}{2}\left\|\bm{\gamma} - \bm{\hat{\gamma}}^\scm\right\|_2^2.
    \end{equation}
\end{lemma}

\noindent When the treated unit is in the convex hull of the control units --- so the SCM weights exactly balance the lagged outcomes --- the Ridge ASCM and SCM weights are identical.
When SCM weights do not achieve exact balance, the Ridge ASCM solution will use negative weights to extrapolate from the convex hull of the control units. The amount of extrapolation is determined both by the amount of imbalance and by the hyperparameter $\lambda^\ridge$. 
When SCM yields good pre-treatment fit or when $\lambda^\ridge$ is large, the adjustment term will be small and $\bm{\hat{\gamma}}^\aug$ will remain close to the SCM weights.

We can similarly characterize ridge regression alone as a solution to a penalized SCM problem
where the penalty term, $f(\gamma_i) = \left(\gamma_i - \frac{1}{N_0}\right)^2$, penalizes the variance of the weights. 
Other penalized linear models, such as the LASSO or elastic net, do not have this same representation as a penalized SCM estimator.

\begin{lemma}
  \label{lem:ridge_weights_main}
  The ridge regression estimator $\hat{Y}_{1T}^{\ridge}(0) \equiv \hat{\eta}_0^\ridge + \bm{X}_{1} \cdot \bm{\hat{\eta}}^\ridge$ can be written as $\hat{Y}_{1T}^{\ridge}(0) = \sum_{W_i=0} \hat{\gamma}_i^{\ridge} Y_{iT},$
where the ridge weights $\bm{\hat{\gamma}}^{\ridge}$ are the solution to:
      \begin{equation}
        \label{eq:ridge_primal_main}
        \min_{\bm{\gamma}~ \mid~ \sum_i\gamma_i=1} \;\; \frac{1}{2\lambda^\ridge}\|\bm{X}_{1} - \bm{X}_{0\cdot}'\bm{\gamma}\|_2^2 + \frac{1}{2}\left\|\bm{\gamma} - \frac{1}{N_0}\right\|_2^2.
    \end{equation}

\end{lemma}

\noindent For ridge regression alone, the hyperparameter $\lambda^{\ridge}$ controls the variance of the weights rather than the degree of extrapolation from the simplex.
Thus, in order to reduce variance, the ridge regression weights might still be negative even if the treated unit is inside of the convex hull and SCM achieves perfect fit.

\begin{figure}[tbp]
  \centering
    \begin{subfigure}[t]{0.46\textwidth}  
  {\centering \includegraphics[width=0.8\textwidth]{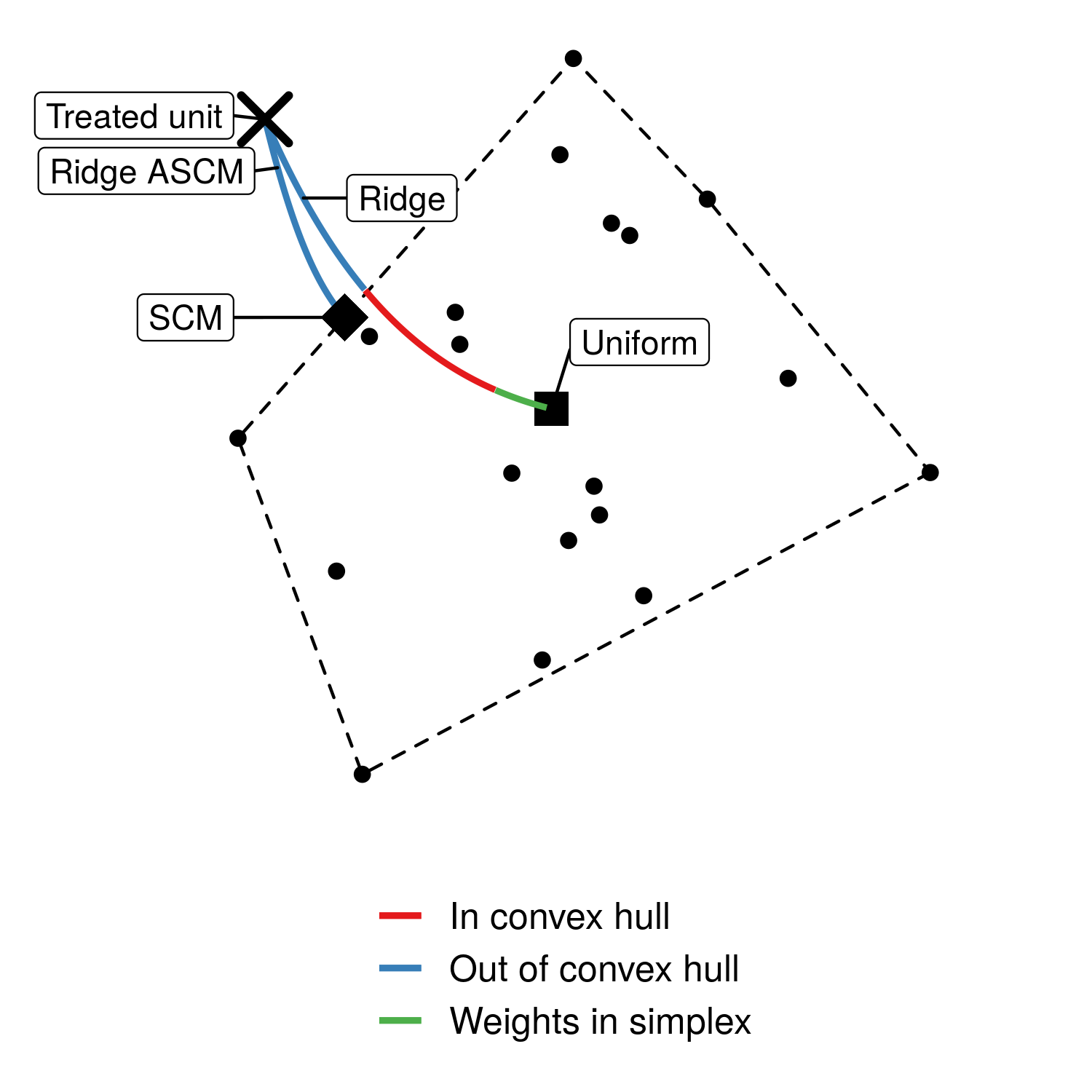} 
  }
  \caption{Treated and control units with the convex hull marked as a dashed line. Ridge and Ridge ASCM estimates in solid.} 
    \label{fig:convex_hull}
    \end{subfigure}%
    \quad
    \begin{subfigure}[t]{0.46\textwidth}  
    {\centering \includegraphics[width=0.8\textwidth]{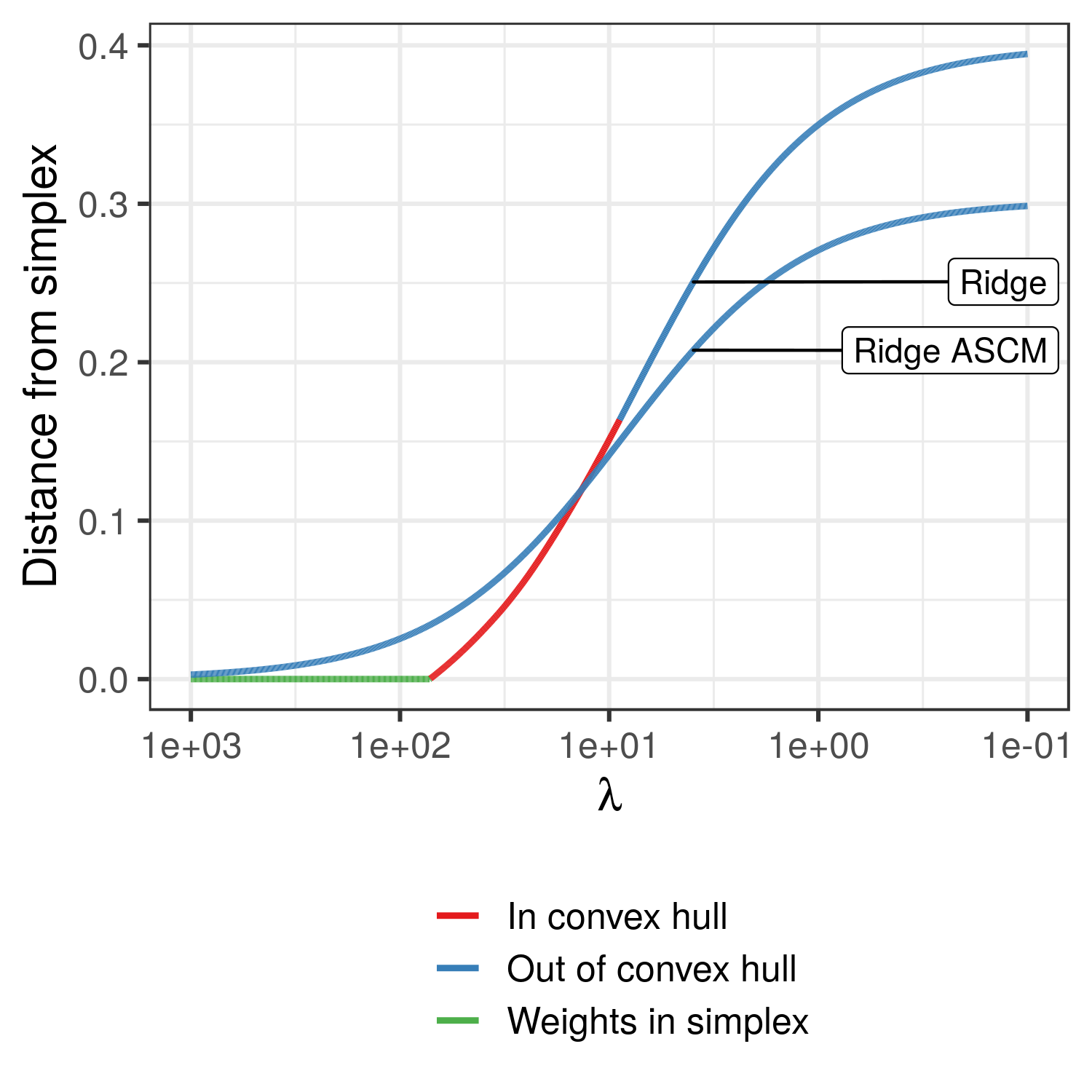} 
    }
    \caption{Distance of ridge and Ridge ASCM weights from the simplex.}
      \label{fig:dist_simplex}
      \end{subfigure}
\caption{Ridge ASCM vs. ridge regression alone for a two-dimensional example with the treated unit outside of the convex hull of the control units. Results shown varying $\lambda^\ridge$ from $10^{3}$ to $10^{-1}$. Green denotes that the weights are inside the simplex, red that the weights are outside the simplex but the weighted average is inside of the convex hull, and blue that the weighted average is outside the convex hull.} 
\label{fig:conceptual_numeric}
\end{figure}

Figure \ref{fig:conceptual_numeric} visualizes these results in two dimensions.
Figure \ref{fig:convex_hull} shows the distribution of control units and the treated unit outside the convex hull, along with the weighted average of control units using the ridge and Ridge ASCM weights, varying $\lambda^\ridge$. We see that ridge regression alone begins (for large $\lambda^\ridge$) at the center of the control units, while Ridge ASCM begins at the SCM solution; both move smoothly towards an exact fit solution as $\lambda^\ridge$ is reduced.
Figure \ref{fig:dist_simplex} shows the distance of the ridge and Ridge ASCM weights from the simplex as $\lambda^\ridge$ varies. Ridge ASCM weights begin at the SCM solution, which is on the boundary of the simplex, then extrapolate outside the convex hull. In contrast, ridge regression weights begin in the center of the simplex (i.e., uniform weights), but then leave the simplex (i.e., some negative weights) before the corresponding weighted average is outside of the convex hull. Over this range, marked as red in Figure \ref{fig:convex_hull}, it is possible to achieve the same level of balance with non-negative weights, but ridge regression uses negative weights in order to reduce the variance. Eventually, as $\lambda^{\ridge} \to 0$, both ridge and Ridge ASCM use negative weights to achieve perfect balance; the weight vectors are different, however, with the Ridge ASCM vector closer to the simplex.

As this example makes clear, both ridge and Ridge ASCM extrapolate from the support of the data to improve pre-treatment fit relative to SCM alone. 
Ridge ASCM, however, is equivalent to SCM weights when SCM achieves exact pre-treatment fit.
When achieving excellent pre-treatment fit with SCM is possible, \citet{Abadie2015} argue that we should prefer SCM weights over possibly negative weights: a slight balance improvement is not worth the extrapolation and the loss of interpretability.
In this case, the Ridge ASCM weights will be close to the simplex, while the ridge regression weights may be quite far away.
When this is not possible, however, and SCM has poor fit, 
some degree of extrapolation is critical; Ridge ASCM allows the researcher to directly penalize the amount of extrapolation in these cases.\footnote{See \citet{king2006dangers} for a discussion of extrapolation in constructing counterfactuals. As they note: ``If we learn that a counterfactual question involves extrapolation, we still might wish to proceed if the question is sufficiently important, but we would be aware of how much more model dependent our answers would be.''}

\subsection{Ridge ASCM improves pre-treatment fit relative to SCM alone}
\label{sec:better_fit}

Just as the hyper-parameter $\lambda^\ridge$ parameterizes the level of extrapolation, it also parameterizes the level of improvement in pre-treatment fit over the SCM solution.
Because we are removing the non-negativity constraint and allowing for extrapolation outside of the convex hull, the pre-treatment fit from Ridge ASCM will be at least as good as the pre-treatment fit from SCM alone, i.e., $\|\bm{X}_{1} - \bm{X}_{0\cdot}'\bm{\hat{\gamma}}^\aug\|_2 \leq \|\bm{X}_{1} - \bm{X}_{0\cdot}'\bm{\hat{\gamma}}^\scm\|_2$. We can exactly characterize the pre-treatment fit of Ridge ASCM using the singular value decomposition of the matrix of control outcomes.
\begin{lemma}
\label{lem:aug_imbal}
Let $\frac{1}{\sqrt{N_0}}\bm{X}_{0\cdot} = \bm{U} \bm{D} \bm{V}'$ 
be the singular value decomposition of the matrix of control pre-intervention outcomes, where $m$ is the rank of $\bm{X}_{0\cdot}$, $\bm{U} \in \R^{N_0 \times m}, \bm{V} \in \R^{T_0 \times m}$, and $\bm{D} = \text{diag}(d_1,\ldots,d_m) \in \R^{m \times m}$ is the diagonal matrix of singular values, where $d_1$ and $d_m$ are the largest and smallest singular values, respectively. Furthermore, let $\bm{\tilde{X}}_i = \bm{V}'\bm{X}_i$ be the rotation of $\bm{X}_i$ along the singular vectors of $\bm{X}_{0\cdot}$. Then $\bm{\hat{\gamma}}^{\aug}$, the Ridge ASCM weights with hyper-parameter $\lambda^\ridge = \lambda N_0$ satisfy
\begin{equation}
    \label{eq:greg_ridge_imbal}
    \begin{aligned}
      \|\bm{X}_{1\cdot} - \bm{X}_{0\cdot}'\bm{\hat{\gamma}}^\aug\|_2 = \lambda \left\|\left(\bm{D} + \lambda \bm{I}\right)^{-1}(\bm{\widetilde{X}}_{1} - \bm{\widetilde{X}}_{0\cdot}'\hat{\gamma}^\scm)\right\|_2 \leq \frac{\lambda}{d_m^2 + \lambda}\|\bm{X}_1 - \bm{X}_{0\cdot}' \hat{\gamma}^\scm\|_2,
        \end{aligned}   
\end{equation}
and the weights from ridge regression alone $\bm{\hat{\gamma}}^\ridge$ satisfy
\begin{equation}
  \label{eq:ridge_imbal}
  \begin{aligned}
    \|\bm{X}_1 - \bm{X}_{0\cdot}'\bm{\hat{\gamma}}^\ridge\|_2 = \lambda \left\|\left(\bm{D} + \lambda \bm{I}\right)^{-1}\bm{\widetilde{X}}_1\right\|_2 \leq \frac{\lambda}{d_m^2 + \lambda}\|\bm{X}_1\|_2.
      \end{aligned}   
\end{equation}
\end{lemma}

\noindent From Equation \eqref{eq:greg_ridge_imbal}, we see that the pre-treatment imbalance for Ridge ASCM weights is smaller than that of SCM weights by at least a factor of $\frac{\lambda}{d_m^2 + \lambda}$.
Thus, Ridge ASCM will achieve strictly better pre-treatment fit than SCM alone, except in corner cases where pre-treatment fit will be equal.
For example, these will be equal when the pre-treatment SCM residual $\bm{X}_1 - \bm{X}_{0\cdot}'\hat{\bm{\gamma}}^\scm$
is orthogonal to the lagged outcomes of the control units $\bm{X}_{0\cdot}$.
From Equation \eqref{eq:ridge_imbal}, we see that the relationship for pre-treatment imbalance fit between SCM and ridge regression is less clear; ridge regression penalizes deviations from uniformity, rather than deviations from SCM weights.

\section{Error under restrictions on the data generating processes}
\label{sec:bias_section}

We now relate Ridge ASCM's improved pre-treatment fit to
improved estimation error under several data generating processes. 
We first consider the case where the post-treatment outcome is linear in the pre-treatment outcomes, and then extend to the canonical linear factor model considered by 
\citet{AbadieAlbertoDiamond2010}.
In the appendix, we also consider a more general formulation, including when the outcome model is approximately linear, with Lipshitz deviations from linearity. 

Under a linear model, improving pre-treatment fit directly 
reduces bias, and the Ridge ASCM penalty term negotiates a bias-variance trade-off.
Under a latent factor model, improving pre-treatment fit again reduces bias,
though there is now a risk of over-fitting.
The penalty term also directly parameterizes this trade-off. 
Thus, choosing the hyper-parameter $\lambda^\ridge$ is important in practice.
In Section \ref{sec:cv}, we describe a cross-validation hyper-parameter selection procedure that builds on the in-time placebo check in \citet{Abadie2015}.
In Section \ref{sec:sim_results} we explore these trade-offs through simulations, finding substantial gains to augmentation via ridge regression in a variety of settings.

\subsection{Error under linearity}
\label{sec:linear}

We first illustrate the key balancing idea in the simple case where the post-treatment outcome is a linear combination of lagged outcomes plus additive noise:
\begin{equation}
  \label{eq:ark}
  Y_{iT}(0) = \sum_{t=1}^{T_0} \beta_t X_{it} + \varepsilon_{iT}.
\end{equation}
A special case of this setup is an auto-regressive process of order $K \leq T_0$.  Here we consider the pre-treatment outcomes $\bm{X}_i$ fixed 
and so the randomness in the post-treatment outcome $Y_{iT}(0)$ is due to the noise term $\varepsilon_{iT}$.
As in Section \ref{sec:scm_overview}, we assume that the $N$ units' noise terms are independent, mean zero sub-Gaussian random variables with scale parameter $\sigma$ that are uncorrelated with treatment assignment $W_i$.

We consider a generic weighting estimator with weights $\bm{\hat{\gamma}}$ that are independent of the post-treatment outcomes $Y_{1T},\ldots,Y_{NT}$; both SCM and Ridge ASCM take this form.
Under linearity, the difference between the counterfactual outcome $Y_{1T}(0)$ and the weighting estimator $\hat{Y}_{1T}(0)$ decomposes into: (1) systemic error due to imbalance in the lagged outcomes $\bm{X}$, and (2) idiosyncratic error due to the noise in the post-treatment period: 
\begin{equation}
  \label{eq:weights_err_lin}
  Y_{1T}(0) - \sum_{W_i = 0} \hat{\gamma}_i Y_{iT} = \underbrace{\bm{\beta} \cdot \left(\bm{X}_1 - \sum_{W_i = 0} \bm{X}_i\right)}_{\text{imbalance in $\bm{X}$}} + \underbrace{\varepsilon_{1T} - \sum_{W_i = 0}\hat{\gamma}_i \varepsilon_{iT}}_{\text{post-treatment noise}}.
\end{equation}
With this setup, a weighting estimator that exactly balances the lagged outcomes $\bm{X}$ will eliminate all systematic error.
Furthermore, if $\bm{\beta}$ is sparse, then it suffices to balance only the lagged outcomes with non-zero coefficients; for example, under an AR($K$) process, $(\beta_1,\ldots,\beta_{T_0 - K - 1}) = 0$, it is sufficient to balance only the first $K$ lags.

If the weighting estimator does not perfectly balance the pre-treatment outcomes $\bm{X}$, there will be a systematic component of the error, with the magnitude depending on the imbalance.
Below we construct a finite sample error bound for Ridge ASCM (and for SCM, the special case with $\lambda^\ridge=\infty$). This bound on the estimation error holds with high probability over the noise in the post-treatment period $\bm{\varepsilon}_T$.
\begin{proposition}
  \label{cor:ascm_error_ar}
  Under the linear model \eqref{eq:ark} with independent sub-Gaussian noise with scale parameter $\sigma$, for any $\delta > 0$, the Ridge ASCM weights with hyperparameter  $\lambda^\ridge = \lambda  N_0$ satisfy the bound
  \begin{equation}
    \label{eq:ascm_err_cor_ar}
  \left|Y_{1T}(0) - \sum_{W_i = 0}\hat{\gamma}^\aug_i Y_{iT}\right| \leq \|\bm{\beta}\|_2\underbrace{\left\|\text{diag}\left(\frac{\lambda}{d_j^2 + \lambda}\right)(\bm{\widetilde{X}}_1 - \bm{\widetilde{X}}_{0\cdot}'\bm{\hat{\gamma}}^\scm)\right\|_2}_{\text{imbalance in } \bm{X}} + \underbrace{\delta\sigma \left(1 + \|\bm{\hat{\gamma}}^\aug\|_2\right)}_{\text{post-treatment noise}},
  \end{equation}
  with probability at least $1 - 2e^{-\frac{\delta^2}{2}}$, where $\bm{\widetilde{X}}_i = \bm{V}'\bm{X}_i$ is the rotation of $\bm{X}_i$ along the singular vectors of $\bm{X}_{0\cdot}$, as above.
\end{proposition}

Proposition \ref{cor:ascm_error_ar} shows the finite sample error of Ridge ASCM weights is controlled by the imbalance in the lagged outcomes and the $L^2$ norm of the weights;
Lemma \ref{lem:aug_variance} in the Appendix gives a deterministic bound for $\|\bm{\hat{\gamma}}^{\aug}\|_2$. See \citet{Athey2016} for analogous results on balancing weights in high dimensional cross-sectional settings.

In the special case that SCM weights have perfect pre-treatment fit, ASCM and SCM weights will be equivalent, and the estimation error will only be due to variance of the weights and post-treatment noise.
When SCM weights do not achieve perfect pre-treatment fit, Ridge ASCM with finite $\lambda < \infty$ extrapolates outside the convex hull, improving pre-treatment fit and thus reducing bias.
This is subject to the usual bias-variance trade-off: The second term in \eqref{eq:ascm_err_cor_ar} is increasing in the $L^2$ norm of the weights, which will generally be larger for ASCM than for SCM. 
The hyperparameter $\lambda$ directly negotiates this trade off.

\subsection{Error under a latent factor model}

We now consider the case where control potential outcomes are generated according to a linear factor model.
Similar to the linear case above,
when pre-treatment fit is poor, improving the pre-treatment fit by extrapolating away from the convex hull reduces the bias,
with a possible bias-variance tradeoff. 
Unlike in the linear setting, however, there is additional 
possible error due to balancing noisy pre-treatment outcomes rather than the underlying factors. This raises the risk of over-fitting, though it can be reduced by appropriate choice of the extrapolation penalty. 
Specifically, in settings where the noise is high, ASCM can increase error relative to SCM alone.
SCM limits over-fitting in this high-noise case by constraining weights to the simplex, although SCM alone is likely to perform poorly here regardless.

Following the setup in \citet{AbadieAlbertoDiamond2010}, 
we assume that there are $J$ unknown, latent time-varying factors $\bm{\mu}_t=\{\mu_{jt}\} \in \R^{T}$, $j=1,\dots,J$,  with $\max_{jt} |\mu_{jt}| \leq M$, where $J$ will typically be small relative to $N$ and $T_0$. In addition, each unit has a vector of unknown factor loadings $\bm{\phi}_i \in \R^J$. We consider both the time-varying factors $\bm{\mu}_t$ and the unit-varying factor loadings $\bm{\phi}_i$ to be non-random quantities and will consider a fixed $N$ and $T_0$.
In this setup, the control potential outcomes are weighted averages of these factors plus additive noise:
\begin{equation}\label{eq:scm_factor_model}
Y_{it}(0) = \sum_{j=1}^J \phi_{ij} \mu_{jt} + \varepsilon_{it} = \bm{\phi}_i \cdot \bm{\mu}_t + \varepsilon_{it},
\end{equation}
where again the randomness in $Y_{it}(0)$ is only due to the noise term $\varepsilon_{it}$.
We assume that $\varepsilon_{it}$ are independent, mean zero sub-Gaussian random variables with scale parameter $\sigma$ that are uncorrelated with treatment assignment $W_i$.
Slightly abusing notation, we collect the pre-intervention factors into a matrix $\bm{\mu} \in \R^{T_0 \times J}$, where the $t$\super{th} row of $\bm{\mu}$ contains the factor values at time $t$, $\bm{\mu}_t^{\prime}$.
Following \citet{bai2009panel} and \citet{Xu2017}, we further assume that the factors are orthogonal and normalized, i.e., that $\frac{1}{T_0}\bm{\mu}'\bm{\mu} = \bm{I}_{J}$. 

Under this model, the finite-sample error of a weighting estimator depends on the imbalance in the latent $\bm{\phi}$ and a noise term due to the noise at time $T$:
\begin{equation}
    \label{eq:weights_err}
    Y_{1T}(0) - \hat{Y}_{1T}(0) = Y_{1T}(0) - \sum_{W_i=0}\hat{\gamma}_i Y_{iT} = \underbrace{\left(\bm{\phi}_1 - \sum_{W_i=0}\hat{\gamma}_i\bm{\phi}_i\right) \cdot \bm{\mu}_T}_{\text{imbalance in } \bm{\phi}} + \underbrace{\varepsilon_{1T} - \sum_{W_i=0}\hat{\gamma}_i \varepsilon_{it}}_{\text{noise}}.
\end{equation}
\noindent 
Balancing the observed pre-treatment outcomes $\bm{X}$ will not necessarily balance the latent factor loadings $\bm{\phi}$.
Following \citet{AbadieAlbertoDiamond2010}, we show in the appendix that, under Equation \eqref{eq:scm_factor_model}, we can decompose the imbalance term as:

\begin{equation}
\label{eq:factor_model_bias_approx}
    \left(\bm{\phi}_1 - \sum_{W_i=0}\gamma_i\bm{\phi}_i\right) \cdot \bm{\mu}_T = \frac{1}{T_0} \bm{\mu}' \underbrace{\left( \bm{X}_1 - \sum_{W_i=0} \gamma_i \bm{X}_i\right)}_{\text{imbalance in } \bm{X}}\cdot \mu_T - \frac{1}{T_0} \bm{\mu}' \underbrace{\left(\bm{\varepsilon}_{1(1:T_0)} - \sum_{W_i=0}\gamma_i\bm{\varepsilon}_{i(1:T_0)}\right) }_{\text{approximation error}}\cdot \bm{\mu}_T,
\end{equation}
where $\bm{\varepsilon}_{i(1:T_0)} = (\varepsilon_{i1},\ldots,\varepsilon_{iT_0})$ is the vector of pre-treatment noise terms for unit $i$.
The first term is the imbalance of observed lagged outcomes and the second term is an approximation error arising from the latent factor structure. 
With $\sigma=0$ and $J$ small, the approximation error is zero, and it is possible to express $Y_{iT}(0)$ as a linear combination of $Y_{it}(0)$, $t=1,\dots,T_0-1$, recovering the linear model \eqref{eq:ark}.
However, with $\sigma>0$ we cannot write the period-$T$ outcome as a linear combination of earlier outcomes plus independent, additive error.

With this setup, we can bound the finite-sample error in Equation \eqref{eq:weights_err} for Ridge ASCM weights (and for SCM weights as a special case). This bound is with high probability over the noise in all time periods $\varepsilon_{it}$, and accounts for the noise in the pre- and post-treatment outcomes separately. 

\begin{theorem}
  \label{thm:ascm_error}
  Under the linear factor model \eqref{eq:scm_factor_model} with independent sub-Gaussian noise, for any $\delta>0$, the Ridge ASCM weights with hyperparameter  $\lambda^\ridge = \lambda N_0$ satisfy the bound
  \begin{equation}
  \label{eq:ascm_error}
  \begin{aligned}
    \left|Y_{1T}(0) - \sum_{W_i=0}\hat{\gamma}^\aug_iY_{1T}(0)\right| & 
    \leq \frac{JM^2}{\sqrt{T_0}}\left(\vphantom{\left\|\text{diag}\left(\frac{\lambda}{d_j^2 + \lambda}\right)(\widetilde{X}_1 - \widetilde{X}_{0\cdot}'\hat{\gamma}^\scm)\right\|_2}\right.
    \underbrace{\left\|\text{diag}\left(\frac{\lambda}{d_j^2 + \lambda}\right)(\widetilde{\bm{X}}_1 - \widetilde{\bm{X}}_{0\cdot}'\bm{\hat{\gamma}}^\scm)\right\|_2}_{\text{imbalance in $\bm{X}$}} + \\[1em] 
    & \qquad\qquad\quad \underbrace{4(1 + \delta)\left\|\text{diag}\left(\frac{d_j\sigma}{d_j^2 + \lambda}\right)(\widetilde{\bm{X}}_1 - \widetilde{\bm{X}}_{0\cdot}'\hat{\gamma}^\scm)\right\|_2}_{\text{excess approximation error}} + \\[1em]
    & \qquad\qquad\quad \underbrace{2\left(\sqrt{\log 2 N_0} + \frac{\delta}{2}\right)}_{\text{SCM approximation error}}  \left.\vphantom{\left\|\text{diag}\left(\frac{\lambda}{d_j^2 + \lambda}\right)(\widetilde{\bm{X}}_1 - \widetilde{\bm{X}}_{0\cdot}'\bm{\hat{\gamma}}^\scm)\right\|_2}\right)\qquad +  \qquad \underbrace{\vphantom{\frac{JM^2}{\sqrt{T_0}}}\delta\sigma \left(1 + \|\bm{\hat{\gamma}}^\aug\|_2\right)}_{\text{post-treatment noise}}
  \end{aligned}
  \end{equation}
  with probability at least $1 - 6e^{-\frac{\delta^2}{2}} - e^{-2(\log 2 + N_0 \log 5)\delta^2}$.
  \end{theorem}
  
Theorem \ref{thm:ascm_error} shows that, relative to the linear case in Proposition \ref{cor:ascm_error_ar}, there is an additional source of error under a latent factor model: approximation error due to balancing lagged outcomes rather than balancing underlying factors.
In particular, it is now possible that a control unit only receives a large weight because of idiosyncratic noise, rather than because of similarity in the underlying factors.
See \citet{arkhangelsky2018synthetic} and \citet{Ferman2019} for asymptotic analogues of this finite sample bound.

In the special case where SCM achieves perfect pre-treatment fit, considered by 
\citet{AbadieAlbertoDiamond2010}, 
the ASCM and SCM weights are equivalent and 
the error is only due to post-treatment noise and the approximation error.
The bound in Theorem \ref{thm:ascm_error} accounts for the worst case scenario where the control unit with the largest weight is only similar to the treated unit due to idiosyncratic noise.
The approximation error, and thus the bias, converges to zero in probability as $T_0 \to \infty$ under suitable conditions on the factor loadings $\bm{\mu}_t$ \citep[see also][]{ferman2018revisiting}.
Intuitively, as we observe more $X_{it}$ --- and can exactly balance each one --- we are better able to match on the index $\bm{\phi}_i \cdot \bm{\mu}_t$ and, as a result, on the underlying factor loadings.\footnote{We show in the supplementary material that with dependent errors the probability of the worst-case error additionally scales with the maximum eigenvalue of the covariance matrix.
Dependence leads to a more complicated error structure overall; we leave a thorough analysis of this to future work.
} 

Without exact balance, Theorem \ref{thm:ascm_error} shows that a long pre-period may not be enough to control the error due to imbalance.
This emphasizes the critical role that conditioning on excellent pre-treatment fit plays in ensuring that SCM alone yields reasonable estimates of the counterfactual.

\begin{figure}[tbp]
  \centering
    \includegraphics[width=\maxwidth]{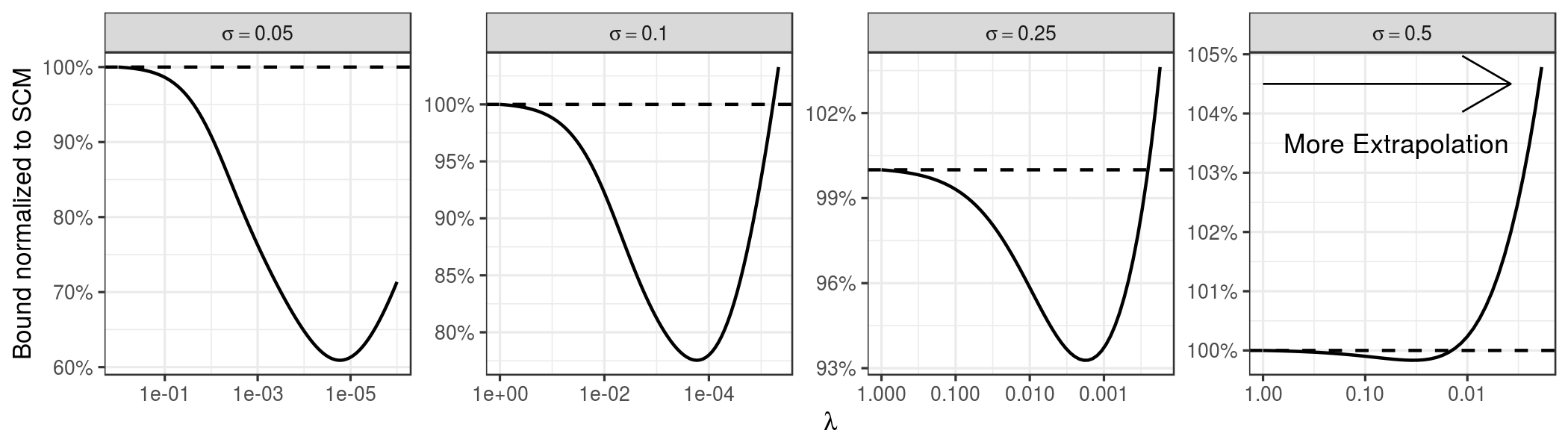} 
\caption{\label{fig:factor_bound} Sketch of the error due to imbalance and approximation error \eqref{eq:ascm_error} for the  linear factor model; the standard deviation of the treated unit's pre-treatment outcomes is normalized to one. We fit SCM weights on the empirical example in Section \ref{sec:empirical} and compute the vector of pre-treatment fit. Each line shows the sum of the error due to imbalance in $\bm{X}$, excess approximation error, and SCM approximation error in Theorem \ref{thm:ascm_error} (with $\delta = 0$) for different values of $\sigma$. These are normalized so that the SCM solution (with $\lambda$ large) equals 100\%; values below 100\% show improvement over the unadjusted weights for a given $\lambda$.}
\end{figure}

When perfect pre-treatment fit is not feasible with SCM, Ridge ASCM
with $\lambda < \infty$ will extrapolate outside the convex hull, where the hyper-parameter $\lambda$ controls the amount of extrapolation.
As in the linear case, extrapolation reduces the error due to imbalance in lagged outcomes. Unlike in the linear case, however, this could possibly 
lead to over-fitting to the noisy lagged outcomes.
While some extrapolation can be helpful on net, the optimal amount will depend on the synthetic control fit and the amount of noise. 
Figure \ref{fig:factor_bound} illustrates this using SCM weights from the empirical example we discuss in Section \ref{sec:empirical}, where pre-treatment fit is relatively poor.
For each value of $\sigma$, the figure plots the sum of the imbalance, SCM approximation error, and excess approximation error terms in the bound in Theorem \ref{thm:ascm_error}.
At each noise level, a small amount of extrapolation leads to a smaller error bound, but as $\lambda$ shrinks there is a point where further extrapolation leads to over-fitting and eventually to a worse error bound than without extrapolation. 
The risk of overfitting is greater when the noise is particularly large, though even here a sufficiently regularized ASCM estimate has lower error bound than SCM alone (represented as the $\lambda \to \infty$ bound). When noise is less extreme, the benefits of augmentation are larger and the optimal amount of regularization shrinks.

It is worth noting that Theorem \ref{thm:ascm_error} gives a worst-case bound. In Section \ref{sec:sim_results} we inspect the typical performance of the Ridge ASCM estimator via extensive simulation studies and find that gains to pre-treatment fit through augmentation outweigh increased approximation error in a range of practical settings, including when noise is very large.

\subsection{Hyper-parameter selection}
\label{sec:cv}

We propose a cross-validation approach for selecting $\lambda$ inspired by the in-time placebo check proposed by \citet{Abadie2015}. 
Let $\hat{Y}_{1k}^{(-t)} = \sum_{W_i=0}\hat{\gamma}^\aug_{i(-t)} Y_{ik}$ be the estimate of $Y_{1k}$ where time period $t$ is excluded from fitting the estimator in \eqref{eq:greg}. \citet{Abadie2015} propose to compare the difference $Y_{1t} - \hat{Y}_{1t}^{(-t)}$ for some $t \leq T_0$ as a placebo check. We can extend this idea to compute the leave-one-out cross validation MSE over time periods:
  \begin{equation}
    \label{eq:loocv}
    CV(\lambda) = \sum_{t=1}^{T_0}\left(Y_{1t} - \hat{Y}_{1t}^{(-t)}\right)^2.
  \end{equation}
We can then choose $\lambda$ to minimize $CV(\lambda)$ or follow a more conservative approach such as the ``one-standard-error'' rule \citep{hastie2009elements}.
This proposal is similar to the leave-one-out cross validation proposed by \citet{Doudchenko2017}, who select hyperparameters by holding out control units and minimizing the MSE of the control units in the post-treatment time $T$.
Finally, only excluding time period $t$ might be inappropriate for some outcome models, e.g. the linear model in Section \ref{sec:linear}. In these settings we can extend the procedure to exclude all time periods $\geq t$ when estimating $\hat{\gamma}_{(-t)}^\aug$. For related proposals, see \citet{kellogg2020combining} and \citet{bilinski2020goldilocks}.

\section{Auxiliary covariates}
\label{sec:extensions}

Thus far, we have focused exclusively on lagged outcomes as predictors. We now consider the case where there are also a small number of auxiliary covariates $\bm{Z}_i \in \R^K$ for unit $i$.
These auxiliary covariates may include summaries of lagged outcomes or time-varying covariates such as the pre-treatment mean $\bar{X}_i$.
Let $\bm{Z}_{0\cdot} \in \R^{N_0 \times K}$ denote the matrix of covariates for the donor units. We assume that the covariates are centered, $\bar{\bm{Z}}_{0\cdot} = \mathbf{0}$.

 These auxiliary covariates can be incorporated both into the balance objective for SCM and into the outcome model used for augmentation in ASCM. For example, we can extend SCM to choose weights to solve
\begin{equation}
\min_{\gamma\in\Delta^{N_0}} \;\;\;\; \theta_x\|\bm{X}_{1} - \bm{X}_{0\cdot}'\bm{\gamma}\|_2^2 + \theta_z\|\bm{Z}_1 - \bm{Z}_{0\cdot}\gamma\|_2^2 \;+\; \zeta \sum_{W_i = 0} f(\gamma_i),
 \label{eq:vanillaSCM_withZ}
\end{equation}
\noindent where $\Delta^{N_0}$ is the $N_0$-simplex.
Similarly, we can augment these weights with an outcome model $\hat{m}(\bm{X}_i, \bm{Z}_i)$ that is a function of both the lagged outcomes and auxiliary covariates. For example, we can extend Ridge ASCM to choose $
\hat{m}(\bm{X}, \bm{Z})=\hat{\eta}_0 + \bm{X}'\hat{\bm{\eta}}_x + \bm{Z}'\hat{\bm{\eta}}_z$ and fit via ridge regression:
\begin{equation}
    \label{eq:cov_regression}
    \min_{\eta_0, \bm{\eta_x}, \bm{\eta_z}} \;\; \frac{1}{2}\sum_{W_i=0}(Y_i - (\eta_0 + \bm{X}_i'\bm{\eta_x} + \bm{Z}_i'\bm{\eta_z}))^2 + \lambda_x \|\bm{\eta_x}\|_2^2 + \lambda_z\|\bm{\eta_z}\|_2^2.
\end{equation}
Both this SCM criterion and augmentation estimator incorporate user-specified weights that determine
the importance of balancing each set of covariates (Equation \ref{eq:vanillaSCM_withZ}) or 
the amount of regularization for each set of coefficients (Equation \ref{eq:cov_regression}). 
There are many potential choices for these weights. We discuss two, appropriate to different settings depending on the number of auxiliary covariates.

A sensible default when the dimension of the auxiliary covariates is moderate is to incorporate the lagged outcomes $\bm{X}$ and the auxiliary covariates $\bm{Z}$ equally in Equations \eqref{eq:vanillaSCM_withZ} and \eqref{eq:cov_regression}, setting $\theta_x = \theta_z = 1$ and $\lambda_x = \lambda_z = \lambda^\ridge$, after standardizing auxiliary covariates and lagged outcomes to have the same standard deviation. 
With this setup the numerical and algorithmic results in Section \ref{sec:ridge_ascm} apply for the combined vector of lagged outcomes and auxiliary covariates, $(\bm{X}_i, \bm{Z}_i) \in \R^{T_0 + K}$. In particular, Ridge ASCM is again a penalized SCM estimator that adjusts the synthetic control weights that solve optimization problem \eqref{eq:vanillaSCM_withZ} to achieve better balance by extrapolating outside of the convex hull.

An alternative approach when the dimension of the auxiliary covariates is small relative to $N$ (i.e., $K \ll N$) is to fit a regression model 
that regularizes the lagged outcome coefficients $\bm{\eta_x}$ but does \emph{not} regularize the auxiliary covariate coefficients $\bm{\eta_z}$ (i.e., set $\lambda_z = 0$).
Lemma \ref{lem:aug_aux} below writes the resulting augmented estimator as its corresponding penalized SCM optimization problem, with weights that perfectly balance the auxiliary covariates. This has two key implications. First, since the auxiliary covariates $\bm{Z}$ are exactly balanced regardless of the balance that the SCM weights achieve alone, we can exclude them from the optimization problem \eqref{eq:vanillaSCM_withZ}.
Second, as we show below, the pre-treatment fit on the lagged outcomes depends on how well the SCM weights balance the residualized lagged outcomes $\check{\bm{X}}$.
This suggests modifying Equation \eqref{eq:vanillaSCM_withZ} to balance $\check{\bm{X}}$ rather than the lagged outcomes $\bm{X}$, which leads to the two-step procedure:
(1) residualize the pre- and post-treatment outcomes on the auxiliary covariates $\bm{Z}$; and (2) estimate Ridge ASCM on the residualized outcomes.
This two-step procedure follows from a related proposal in \citet{Doudchenko2017}.

\begin{lemma}
  \label{lem:aug_aux}
  Let $\hat{\bm{\eta}}_x$ and $\hat{\bm{\eta}}_z$ be the solutions to \eqref{eq:cov_regression} with $\lambda_x = \lambda^\ridge$ and $\lambda_z = 0$. For any weight vector $\bm{\hat{\gamma}}$ that sums to one, the ASCM estimator from Equation \eqref{eq:ascm_1} with $\hat{m}(\bm{X}_i, \bm{Z}_i) = \bm{X}_i^\prime \bm{\hat{\eta}_x} +\bm{Z}_i^{\prime}\bm{\hat{\eta}_z}$ is
  \begin{equation}
      \label{eq:ols_scm}
   \sum_{W_i=0} \hat{\gamma}_i Y_{iT}\;\; + \left(\bm{X}_{1} -  \sum_{W_i=0} \hat{\gamma}_i \bm{X}_{i} \right)^{\prime}\bm{\hat{\eta}_x}\; + \;  \left(\bm{Z}_{1} -  \sum_{W_i=0} \hat{\gamma}_i \bm{Z}_{i} \right)^{\prime}\bm{\hat{\eta}_z} = \sum_{W_i=0}\hat{\gamma}_i^{\cov}Y_{iT},   
  \end{equation}
  where the weights $\bm{\hat{\gamma}}^\cov$ are 
  \begin{equation}
      \label{eq:ols_scm_weights}
      \hat{\gamma}_i^{\cov} = \hat{\gamma}_i + (\check{\bm{X}}_1 - \check{\bm{X}}_{0\cdot})(\check{\bm{X}}_{0\cdot}'\check{\bm{X}}_{0\cdot} + \lambda^\ridge \bm{I}_{T_0})^{-1} \check{\bm{X}}_i + (\bm{Z}_1 - \bm{Z}_{0\cdot}'\gamma)'(\bm{Z}_{0\cdot}'\bm{Z}_{0\cdot})^{-1} \bm{Z}_i,
      \end{equation}
      and $\check{\bm{X}}_i$ is the residual components of a regression of pre-treatment outcomes on the control auxiliary covariates:
  \begin{equation}
     \check{\bm{X}}_i = \bm{X}_i - \bm{Z}_i'(\bm{Z}_{0\cdot}^{\prime}\bm{Z}_{0\cdot})^{-1}\bm{Z}_{0\cdot}^{\prime}\bm{X}_{0\cdot}.
  \end{equation}
  These weights exactly balance the auxiliary covariates, $\bm{Z}_1 - \bm{Z}_{0\cdot}'\bm{\hat{\gamma}}^{\cov} = 0$; the imbalance in the lagged outcomes is
  \begin{equation}
      \label{eq:aux_imbal}
       \left\|\bm{X}_1 - \bm{X}_{0\cdot}'\bm{\hat{\gamma}}^{\cov}\right\|_2 \leq \left(\frac{\lambda^\ridge}{\lambda^\ridge + N_0\check{d}_r^2}\right)\left\|\check{\bm{X}}_1 - \check{\bm{X}}_{0\cdot}'\hat{\bm{\gamma}}\right\|_2,
  \end{equation}
  where $\check{d}_r$ is the minimal singular value of $\check{\bm{X}}_0$.
  \end{lemma}
Comparing to the numerical results in Section \ref{sec:ridge_ascm}, 
Lemma \ref{lem:aug_aux} shows that the two-step approach penalizes extrapolation from the convex hull \emph{in the residualized space} $\check{\bm{X}}$, rather than in the lagged outcomes themselves. In essence, by residualizing out the auxiliary covariates $\bm{Z}$ the two-step approach allows for a possibly large amount of extrapolation in the auxiliary covariates, while carefully penalizing extrapolation in the part of the lagged outcomes that is orthogonal to the covariates.

In the Appendix, 
we consider the performance of this estimator when the outcomes follow a linear factor model with covariates for the special case where $\lambda^\ridge \to \infty$ and the weights $\hat{\bm{\gamma}}^\cov$ do not extrapolate from the convex hull after residualization. 
We find that, 
if $K$ is small relative to $N_0$,
 exactly balancing a small number of auxiliary covariates and targeting imbalance in the residuals $\check{\bm{X}}$ can lead to decreased error due to pre-treatment fit, with only a small increase in approximation error and error due to post-treatment noise.
However, with larger numbers of auxiliary covariates, the approach that incorporates auxiliary covariates in parallel to lagged outcomes would be more appropriate.

\section{Simulations and empirical illustrations}
\label{sec:empirical}

We now turn to simulations and an empirical illustration. First, we conduct extensive simulation studies to assess the performance of different methods, finding substantial gains from ASCM.
We then use our approach to examine the effect of an aggressive tax cut on economic output in Kansas in 2012.


\subsection{Calibrated simulation studies}
\label{sec:sim_results}

We now present simulation studies calibrated to our empirical illustration in Section \ref{sec:illustration}. 
Specifically, we use the Generalized Synthetic Control Method \citep{Xu2017} to estimate a factor model with three latent factors based on the series of log GSP per capita ($N=50$, $T_0=89$).  
We then simulate outcomes using the distribution of estimated parameters and model selection into treatment as a function of the latent factors; see Appendix \ref{sec:sim_details} for additional details. We also present results from three additional DGPs, each calibrated to estimates from the same data: (1) the factor model with quadruple the standard deviation of the noise term, (2) a unit and time fixed effects model, and (3) an autoregressive model with 3 lags.

We explore the role of augmentation using simple outcome estimators. For each DGP, we consider five estimators: (1) SCM alone, (2) ridge regression alone, (3) Ridge ASCM, (4) fixed effects alone, and (5) SCM augmented with fixed effects (i.e., de-meaned SCM), as shown in Equation \eqref{eq:demeaned_scm_tau}.
Figure \ref{fig:overview_bias_plots} shows the Monte Carlo estimate of the absolute bias as a percentage of the absolute bias for SCM, with one panel for each simulation DGP; Appendix Figure \ref{fig:overview_rmse_plots} shows the corresponding estimator root mean squared error (RMSE).

\begin{figure}[!bt]
{\centering \includegraphics[width=\maxwidth]{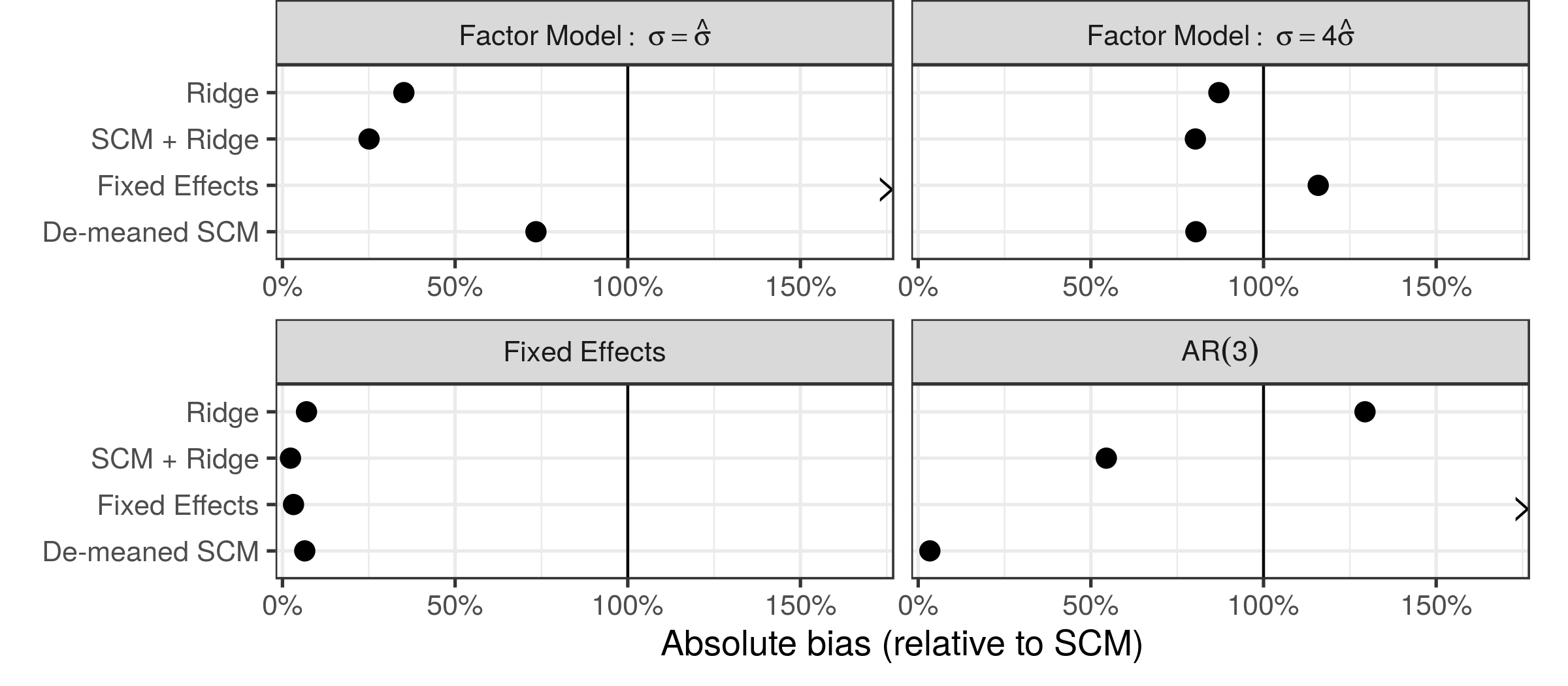} 
}
\caption{\label{fig:overview_bias_plots} Overall absolute bias, normalized to SCM bias for (a) the factor model simulation, (b) the factor model simulation with quadruple the standard deviation, (c) the fixed effects simulation, and (d) the AR simulation. The SCM estimates reported here are \emph{not} restricted to simulation draws with excellent pre-treatment fit; \citet{Abadie2015} advise against using SCM in such settings.}
\end{figure}

There are several takeaways.
First, 
augmenting SCM with a ridge outcome regression reduces bias relative to SCM alone --- \emph{without} conditioning on excellent pre-treatment fit --- in all four simulations.
This underscores the importance of the recommendation in \citet{AbadieAlbertoDiamond2010, Abadie2015} to use SCM only in settings with excellent pre-treatment fit.\footnote{\citet{AbadieAlbertoDiamond2010, Abadie2015} also strongly recommend incorporating auxiliary covariates, weighted by their predictive power, into the procedure, noting that this is important for further reducing bias. For simplicity, the simulations do not include auxiliary covariates.}
Under the baseline factor model and the fixed effect model, the ridge augmentation greatly reduces bias, by more than 75\% in the factor model simulation and over 90\% in the fixed effects simulation. 
In the AR(3) model and in the factor model with greater noise, the gains to augmentation relative to SCM are more limited.
Second, Ridge ASCM has slightly lower bias than ridge regression alone across all of the simulation settings.
Third, when the fixed effects estimator is incorrectly specified, combining SCM with a fixed effects estimator has much lower bias than either method alone. And even when the fixed effects estimator is correctly specified, de-meaned SCM has similar bias to the (correctly specified) fixed effects approach.
Finally, Appendix Figure \ref{fig:overview_rmse_plots} shows that in all simulations ASCM has lower RMSE than SCM, as the large decrease in bias more than makes up for the slight increase in variance.

\begin{figure}[!bt]
{\centering \includegraphics[width=\maxwidth]{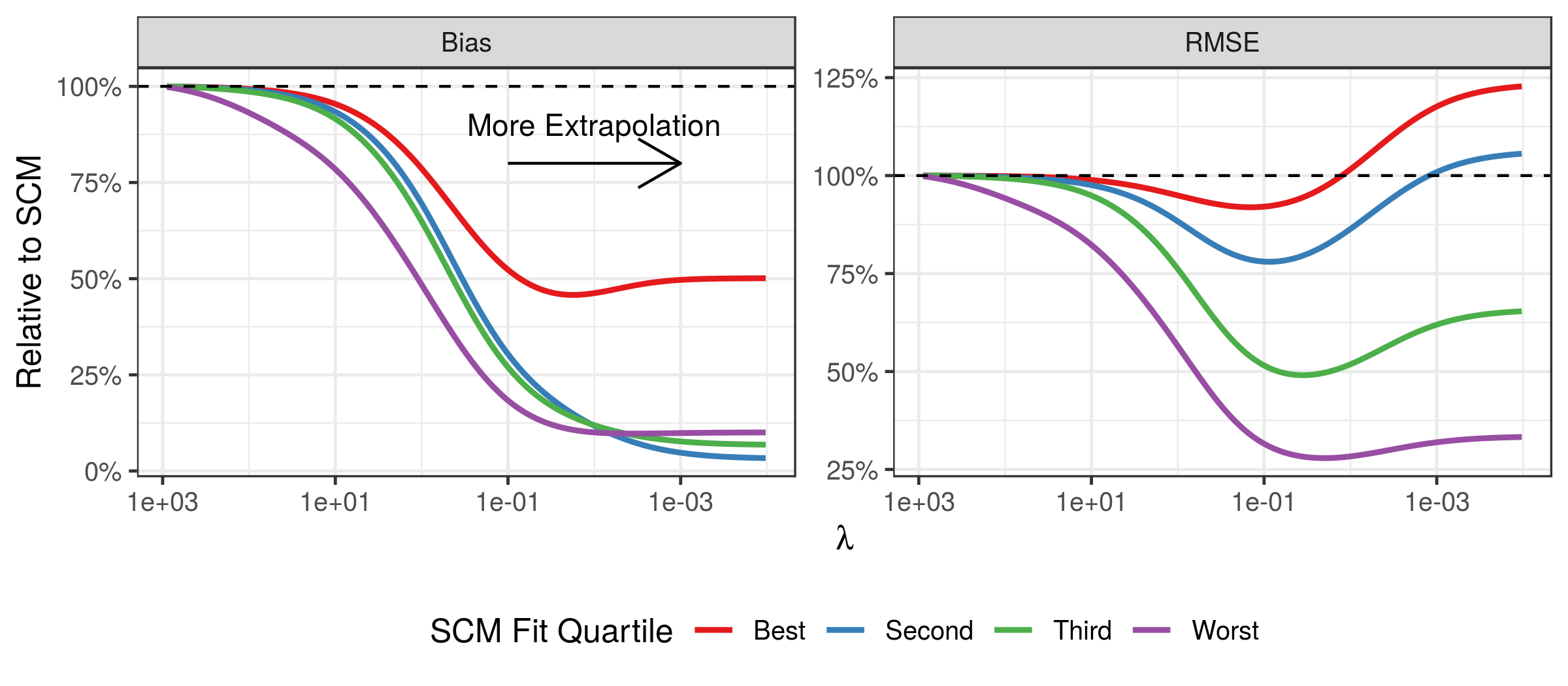} 
}
\caption{\label{fig:bias_vs_lambda} Bias and RMSE of Ridge ASCM, as a percentage of SCM bias and RMSE, versus $\lambda$ under a linear factor model. Results are divided by the quartile of the SCM fit across all simulations.}
\end{figure}

Complementing the worst-case analysis in Section \ref{sec:bias_section}, we now consider how the typical performance of augmentation relates to the amount of extrapolation and the quality of the original SCM fit. Figure \ref{fig:bias_vs_lambda} shows the bias and RMSE as a function of $\lambda$ for the primary factor model simulation, conditional on the quartile of SCM fit. Larger values of $\lambda$ (and hence smaller adjustments) are to the left, with the left-most points in the plots representing SCM. 
First, as expected, Augmented SCM substantially reduces bias regardless of SCM pre-treatment fit. However, the gains are more modest when the SCM fit is in the best quartile: 
in this case the bias is non-monotonic in $\lambda$ and there is some optimal choice of $\lambda$ that minimizes the bias.
Second, it is possible to under-regularize with ASCM, as evident in 
the RMSE achieving a minimum for an intermediate value of $\lambda$. Furthermore, when pre-treatment fit is good, augmentation with too-small $\lambda$ leads to higher RMSE than SCM alone.

Next, we evaluate alternative outcome models for use in ASCM. For each DGP we consider SCM augmented with (1) LASSO, (2) a random forest, (3) \texttt{CausalImpact} \citep{Brodersen2015}, (4) matrix completion using \texttt{MCPanel} \citep{athey2017mcp} and (5) fitting the factor model directly with \texttt{gsynth} \citep{Xu2017}. We compare ASCM to the pure outcome models as well as pure SCM.
Figure \ref{fig:bias_ml} shows the absolute bias for these methods, again as a percentage of the absolute bias for SCM alone; Appendix Figure \ref{fig:overview_rmse_plots} shows the RMSE. We broadly see the same results as with Ridge ASCM. In our simulations, augmenting SCM almost always reduces the bias relative to SCM (unconditional on good pre-treatment fit) with some models improving SCM more than others. Additionally, in nearly every case ASCM also has lower bias than outcome modeling alone. Appendix Figures \ref{fig:bias_good_fit} and \ref{fig:rmse_good_fit} show the bias and RMSE when SCM fit is in the top quintile. As before, conditioned on good SCM pre-treatment fit the gains to augmentation with flexible outcome models are more limited, except with the oracle \texttt{gsynth} estimator.

Overall we find that SCM augmented with a penalized regression model has consistently good performance across data generating processes.
Due to this performance and the method's relative simplicity, 
we therefore recommend augmenting SCM with penalized regression as a reasonable default in settings where SCM alone has poor pre-treatment fit. In particular, we suggest using ridge regression; among the other benefits, Ridge ASCM allows the practitioner to diagnose the level of extrapolation due to the outcome model.

\begin{figure}[!tb]
{\centering \includegraphics[width=\maxwidth]{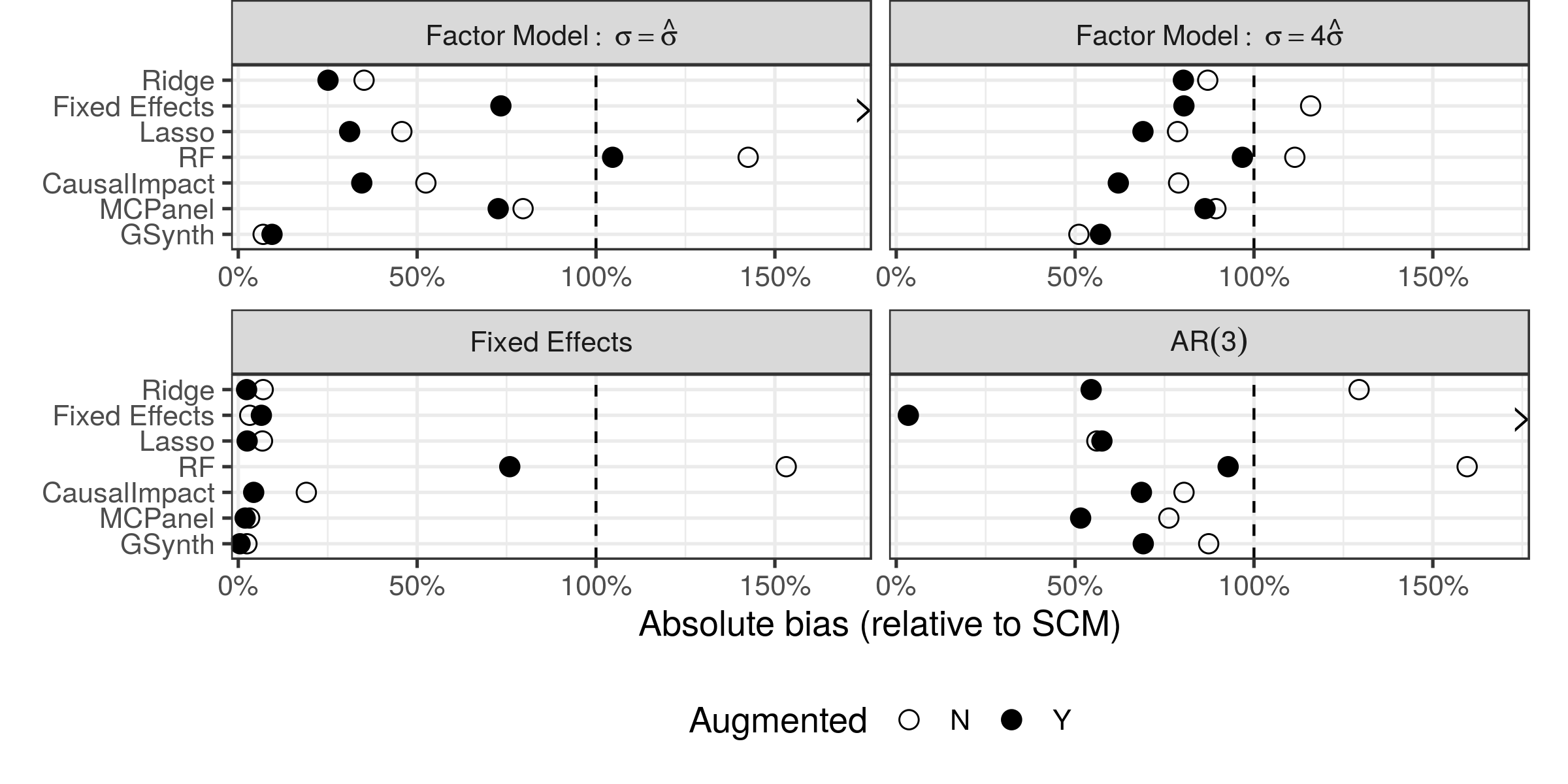} 
}
  \caption{Absolute bias (as a percentage of SCM bias) for ridge, fixed effects, and several machine learning and panel data outcome models, and their augmented versions using the same data generating processes as Figure \ref{fig:overview_bias_plots}.}
  \label{fig:bias_ml}
\end{figure}

\subsection{Illustration: 2012 Kansas tax cuts}
\label{sec:illustration}

In 2010, Sam Brownback was elected governor of Kansas, having run on a platform emphasizing tax cuts and deficit reduction \citep[see][for further discussion and analysis]{rickman2018two}.
Upon taking office, he implemented a substantial personal income tax cut, both lowering rates and reducing credits and deductions. 
This is a valuable test of ``supply side'' models: Brownback argued that the tax cuts would increase business activity in Kansas, generating economic growth and additional tax revenues that would make up for the static revenue losses. Kansas' subsequent economic performance has not been impressive relative to its neighbors; however, potentially confounding factors include a drought and declines in the locally important aerospace industry. Finding a credible control for Kansas is thus challenging, and SCM-type models offer a potential solution.

We estimate the effect of the tax cuts on log gross state product (GSP) per capita using the second quarter of 2012 --- when Brownback signed the tax cut bill into law --- as the intervention time. Results are consistent using outcomes scales other than the standard normalization of log GSP per capita (see Appendix \ref{sec:additional_plots}).
We use four primary estimators: (1) SCM alone fit on the entire vector of lagged outcomes, (2) Ridge ASCM, (3) Ridge ASCM including auxiliary covariates in parallel to lagged outcomes and (4) Ridge ASCM on residualized outcomes, as in Section \ref{sec:extensions}.%
\footnote{The covariates we include are the pre-treatment averages of (1) log state and local revenue per capita, (2) log average weekly wages, (3) number of establishments per capita, (4) the employment level, and (5) log GSP per capita. 
For the augmented estimator on the lagged outcomes we select the hyperparameter $\lambda^\ridge$ as the largest $\lambda$ within one standard error of the $\lambda$ that minimizes the cross-validation placebo fit $CV(\lambda)$; see Section \ref{sec:cv}.
Appendix Figure \ref{fig:lambda_cv} plots $CV(\lambda)$. When including the auxiliary covariates we use the minimal $\lambda$.}
These estimators rely on the ignorability assumption in Equation \eqref{eq:ignore}; substantively, this assumes that post-treatment shocks for Kansas will be the same as for other states in expectation. This also rules out unobserved confounders that affect both post-treatment shocks and the decision to enact the Brownback tax cut bill.

Figure \ref{fig:synth_estimates}, known as a ``gap plot'', shows the difference between Kansas and its synthetic control before and after the passage of the tax cuts along with 95\% point-wise confidence intervals computed via the conformal inference procedure from \citet{chernozhukov2017exact}; see Appendix \ref{sec:inference}. 
Appendix \ref{sec:additional_plots} shows additional results, including estimates plotted on the raw outcome scale and results with alternative estimators. 
First, SCM alone achieves fairly poor pre-treatment fit; the synthetic control exceeds the treatment unit by two to four percent in 2004--2005, on the same scale as the estimated average post-treatment effect of a 3 percent decrease. 
This lack of pre-treatment fit should make us wary of the validity of the SCM effect estimates, and suggests that there may be gains to augmentation. 
Augmenting SCM with ridge regression indeed improves pre-treatment fit, especially in the mid 2000s.
To better understand this, we can inspect the ridge regression coefficients for lagged outcomes (see Appendix Figure \ref{fig:ridge_coefs}), which put the most weight on the two most recent years. 
Adding auxiliary covariates and augmenting further improves both pre-treatment fit and balance on the covariates; see Figure \ref{fig:cov_imbalance}.
Finally, balancing the auxiliary covariates via residualization also improves pre-treatment fit.
Overall, the estimated impact is consistently negative for all four approaches, with weaker evidence that the effect persists to the end of the observation period.

To check against over-fitting, Figure \ref{fig:lngdpcapita_placebo_covascm} shows in-time placebo estimates for the Ridge ASCM estimator with covariates, with placebo treatment times in the second quarter of 2009, 2010, and 2011. 
We estimate placebo effects that are near zero with all three placebo treatment times.
Appendix Figures \ref{fig:lngdpcapita_placebo_scm} and \ref{fig:lngdpcapita_placebo_ascm} show the corresponding placebo estimates for SCM alone and Ridge ASCM without covariates.


\begin{figure}
{\centering \includegraphics[width=\maxwidth]{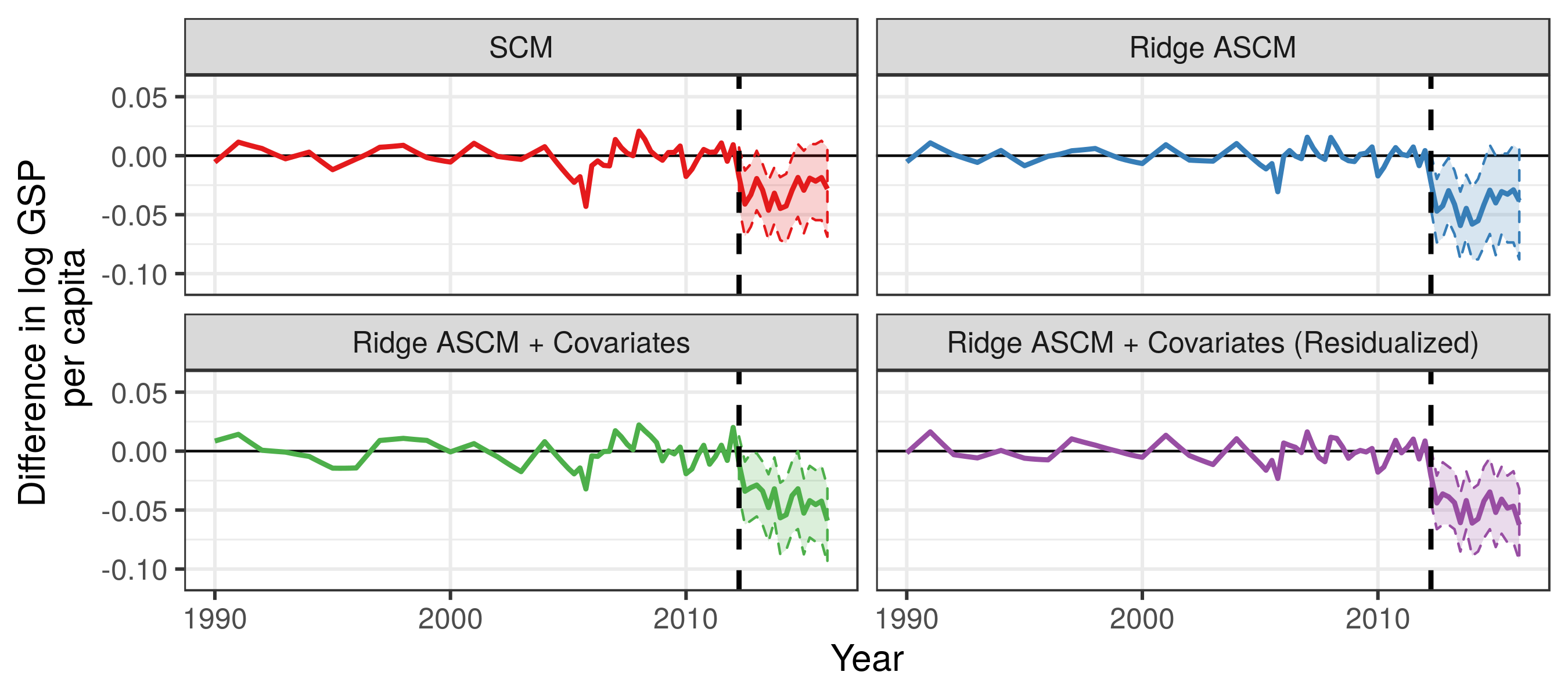}}
  \caption{Point estimates along with point-wise 95\% conformal confidence intervals for the effect of the tax cuts on log GSP per capita using SCM, Ridge ASCM, and Ridge ASCM with covariates.}
    \label{fig:synth_estimates}
\end{figure}

\begin{figure}
{\centering \includegraphics[width=\maxwidth]{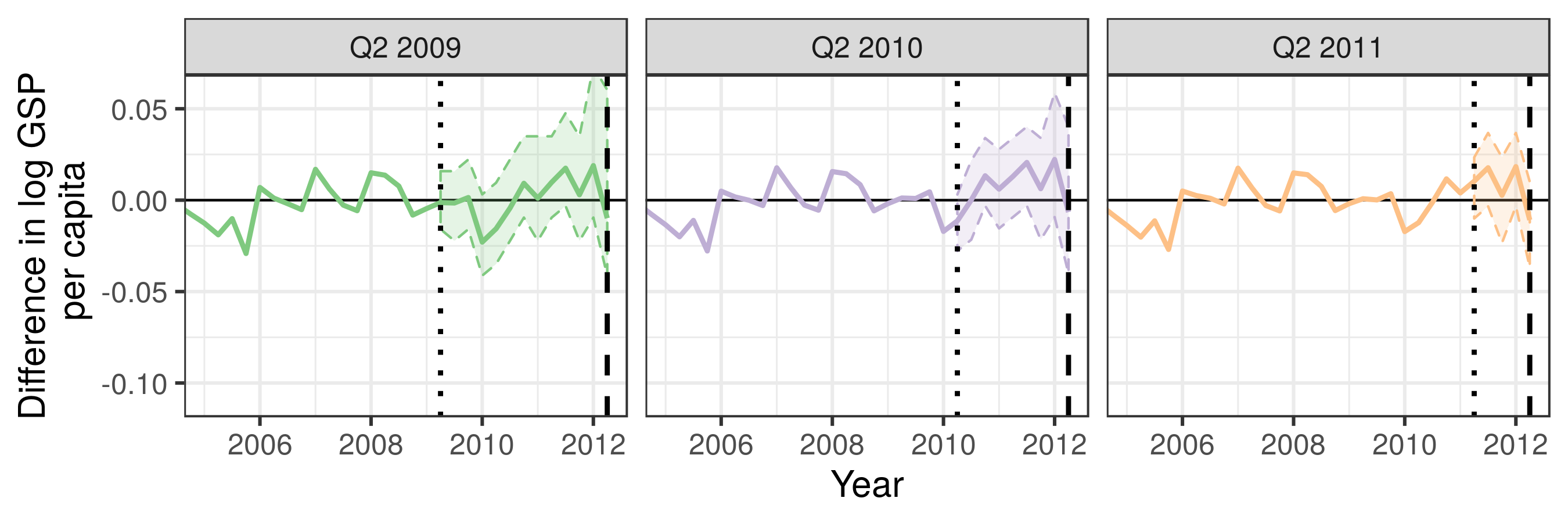}}
  \caption{Placebo point estimates along with 95\% conformal confidence intervals for Ridge ASCM with covariates with placebo treatment times in Q2 2009, 2010, and 2011. The time period begins in 2005 and ends in Q1 2012 to highlight placebo estimates.}
    \label{fig:lngdpcapita_placebo_covascm}
\end{figure}

Figure \ref{fig:cov_imbalance} shows the covariate balance for the four estimators. While SCM and Ridge ASCM achieve excellent fit for the pre-treatment average log GSP per capita, neither estimator achieves good balance on the other covariates, most notably the average employment level across the quarters of the pre-period. 
In contrast, including the auxiliary covariates into both the SCM and ridge optimization problems greatly improves the covariate balance, and --- by design --- 
residualizing on the auxiliary covariates perfectly balances them. Moreover, Ridge ASCM on residualized outcomes 
achieves very good pre-treatment fit on the lagged outcomes as shown in Figure \ref{fig:synth_estimates}.

\begin{figure}
\begin{subfigure}[t]{0.47\textwidth}
{\centering \includegraphics[width=\maxwidth]{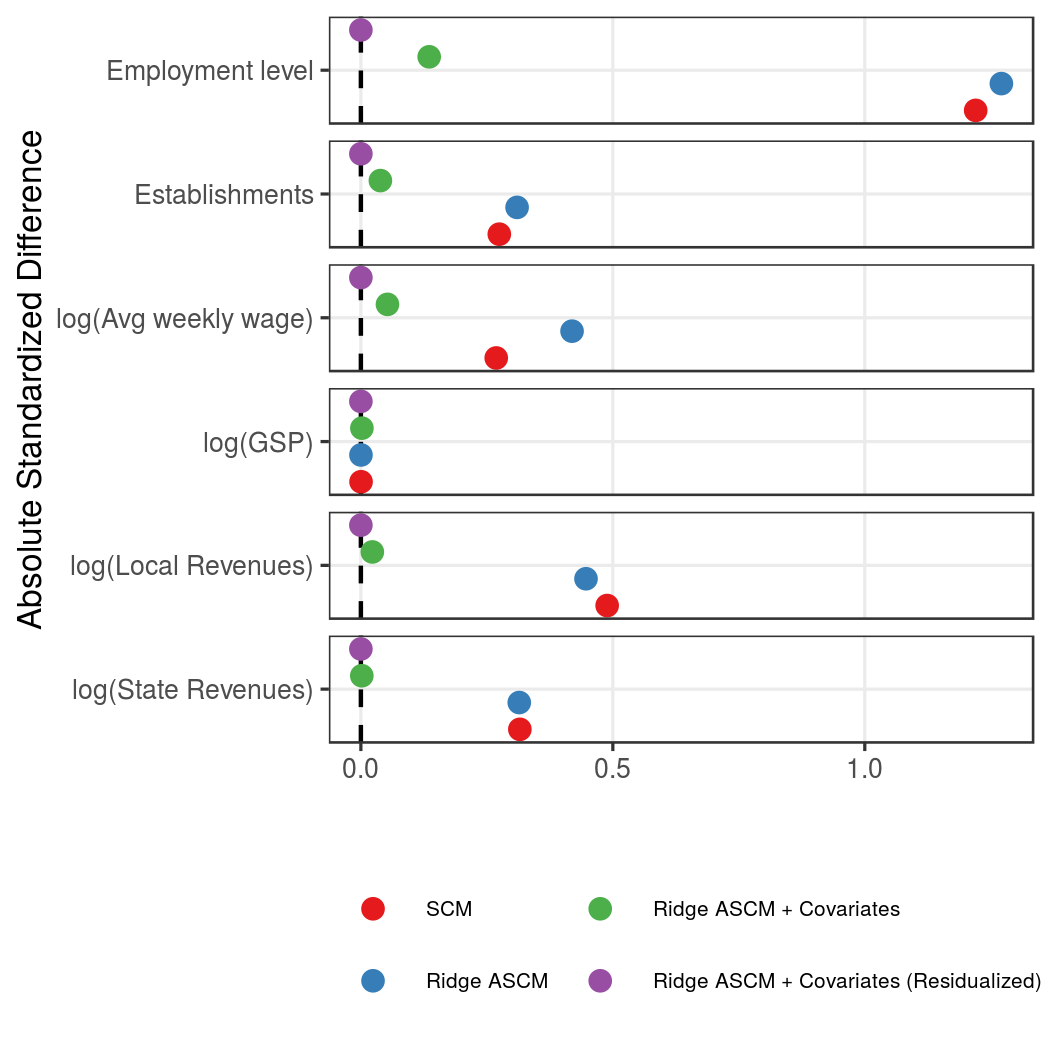} 
\caption{\label{fig:cov_imbalance}}
}
\end{subfigure}%
\qquad \begin{subfigure}[t]{0.47\textwidth}    
{\centering \includegraphics[width=\maxwidth]{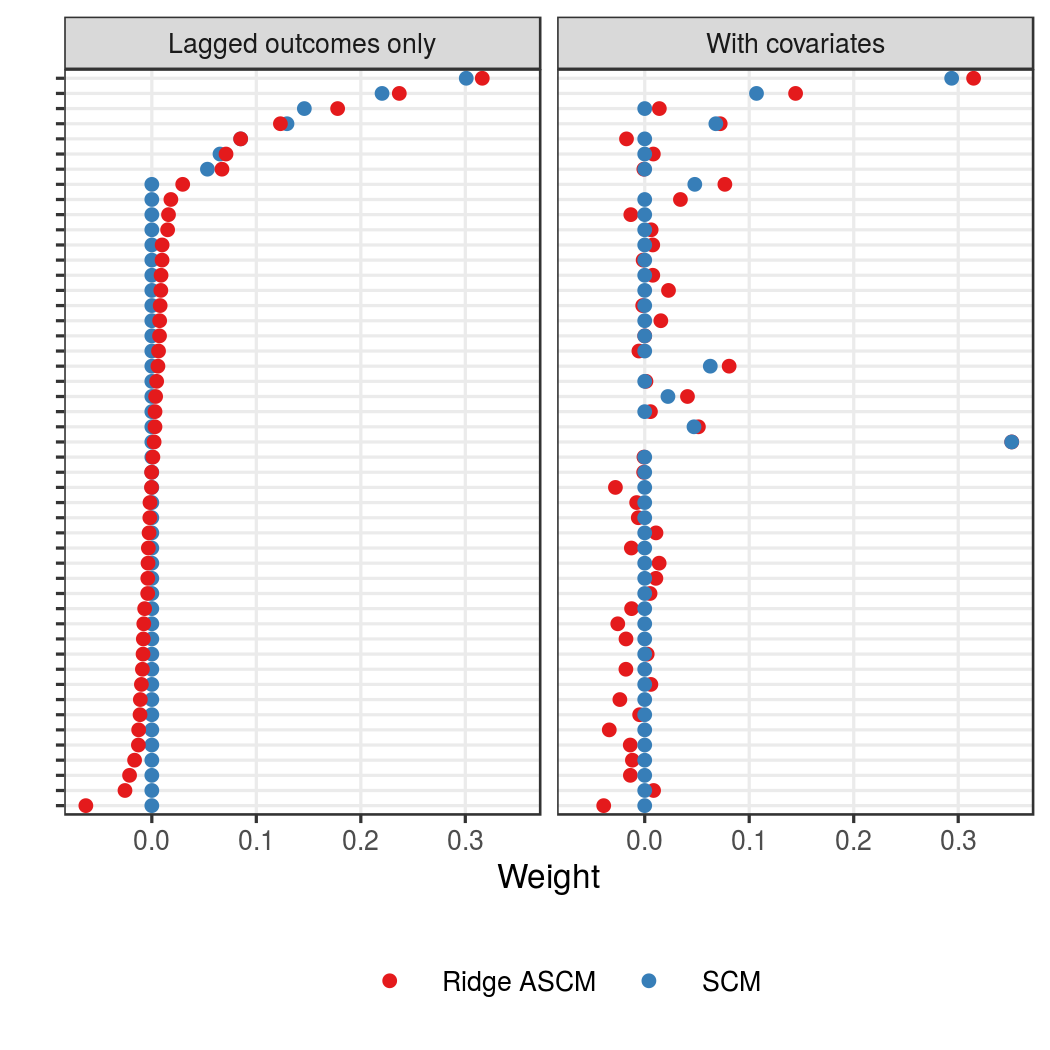} 
\caption{\label{fig:weight_plot}}
}
\end{subfigure}
\caption{(a) Covariate balance for SCM, Ridge ASCM, and ASCM with covariates. Each covariate is standardized to have mean zero and standard deviation one; we plot the absolute difference between the treated unit's covariate and the weighted control units' covariates $\left |Z_{1k} - \sum_{W_i=0} \hat{\gamma}Z_{ik}\right|$. (b) Donor unit weights for SCM, Ridge ASCM, using lagged outcomes only or including auxiliary covariates.}
\label{fig:fit_cov_wts}
\end{figure}

Finally, Figure \ref{fig:weight_plot} shows the weights on donor units for SCM and Ridge ASCM as well as SCM and Ridge ASCM weights when including covariates jointly with the lagged outcomes. Appendix Figure \ref{fig:weight_plot_resid} shows the SCM and Ridge ASCM weights fit on the lagged outcomes after residualizing out the auxiliary covariates.
Here we see the minimal extrapolation property of the ASCM weights. The SCM weights are zero for all but six donor states. The Ridge ASCM weights are similar but deviate slightly from the simplex, with only Louisiana receiving a meaningful negative weight. 
In addition, the Ridge ASCM weights retain some of the interpretability of the SCM weights. For the donor units with positive SCM weight, Ridge ASCM places close to the same weight. For the majority of those with zero SCM weight, Ridge ASCM also places a close to zero weight, with relatively few donor units with non-negligible negative weight. By contrast, Appendix Figure \ref{fig:weight_plot_appendix} shows the weights from ridge regression alone: many of the weights are negative and the weights are far from sparse.
Including auxiliary covariates changes the relative importance of different states by adding new information, but the minimal extrapolation property remains.

\section{Discussion}\label{sec:discussion}

SCM is a popular approach for estimating policy impacts at the jurisdiction level, such as the city or state. By design, however, the method is limited to settings where excellent pre-treatment fit is possible. 
For settings when this is infeasible, we introduce Augmented SCM, which controls pre-treatment fit while minimizing extrapolation. 
We show that this approach controls error under a linear factor model and propose several extensions, including to incorporate auxiliary covariates.

There are several directions for future work. 
The most immediate is to explore robust inferential methods for settings where pre-treatment SCM fit is imperfect. In Appendix \ref{sec:inference}, we outline how to apply the conformal inference approach of \citet{chernozhukov2017exact} to Augmented SCM, as well as a possible modification based on the jackknife+ approach of \citet{Barber2019}. More work is needed, however, to understand the performance of this in more general settings. 
We could also extend this to allow for sensitivity analysis that directly parameterizes departures from, say, the linear factor model, as in recent approaches for sensitivity analysis for balancing weights \citep{soriano2019sensitivity}.

A second area for future inquiry is the application of the ASCM framework to settings with multiple treated units.
For instance, there are different approaches in settings when all treated units are treated at the same time: some papers propose to fit SCM separately for each treated unit \citep[e.g.,][]{Abadie_LHour}, while others simply average the units together \citep[e.g.,][]{kreif2016examination, Robbins2017}. The situation is more complicated with staggered adoption, when units take up the treatment at different times \citep[e.g.,][]{dube2015pooling, donohue2017right}. We explore this extension in \citet{benmichael2019multisynth}.

A third potential extension is to more complex data structures, such as applications with multiple outcomes series for the same units (e.g., measures of both earnings and total employment in minimum wage studies); hierarchical data structures with outcome information at both the individual and aggregate level (e.g., students within schools); or discrete or count outcomes.


\clearpage
\singlespacing
\bibliographystyle{chicago}
\bibliography{syn_ctrls}

\clearpage
\section*{Supplementary Materials for ``The Augmented Synthetic Control Method''}

\maketitle

\singlespacing
\appendix
\renewcommand\thefigure{\thesection.\arabic{figure}}
\renewcommand\thetable{\thesection.\arabic{table}}
\renewcommand\thetheorem{A.\arabic{theorem}}
\renewcommand\thecorollary{A.\arabic{corollary}}
\renewcommand\thelemma{A.\arabic{lemma}}
\renewcommand\theproposition{A.\arabic{proposition}}
\renewcommand\theequation{A.\arabic{equation}}
\setcounter{figure}{0}    
\setcounter{table}{0}
\setcounter{theorem}{0}
\setcounter{corollary}{0}
\setcounter{lemma}{0}
\setcounter{proposition}{0}
\setcounter{equation}{0}

\section{Inference}
\label{sec:inference}
There is a large and growing literature on inference for the synthetic control method and variants, going beyond the original proposal in \citet{Abadie2003} and \citet{AbadieAlbertoDiamond2010}; see, for example, \citet{Li2017}, \citet{toulis2018testing},  \citet{cattaneo2019prediction}, \citet{toulis2018testing}, and \citet{chernozhukov2018inference}. Here, we consider the conformal inference approach of \citet{chernozhukov2017exact}, which is tailored to this setting, as well as an adaptation of the jackknife+ approach of \citet{Barber2019}.

We now briefly describe the conformal inference approach of \citet{chernozhukov2017exact}:
\begin{enumerate}
    \item For a given sharp null hypothesis, $H_0: \tau = \tau_0$:
    \begin{enumerate}
        \item Enforce the null hypothesis by creating an adjusted post-treatment outcome for the treated unit $\tilde{Y}_{1T} = Y_{1T} - \tau_0$.
        \item Augment the original data set to include the post-treatment time period $T$, with the adjusted outcome $\tilde{Y}_{1T}$; use the estimator \eqref{eq:greg} to obtain adjusted weights $\hat{\bm{\gamma}}(\tau_0)$. 
        \item Compute a $p$-value by assessing whether the adjusted residual $Y_{1T} - \tau_0 - \sum_{W_i=0}\hat{\gamma}_i(\tau_0)Y_{iT}$ ``conforms'' with the pre-treatment residuals:\footnote{There are several choices, such as the test statistic and the form of permutation across time periods, that reduce to Equation \eqref{eq:conformal_p} with a single post treatment time period. See \citet{chernozhukov2017exact} for further details.}
        \begin{equation}
          \label{eq:conformal_p}
          \hat{p}(\tau_0) = \frac{1}{T}\sum_{t=1}^{T_0} \bbone\left\{ \left|Y_{1T} - \tau_0 - \sum_{W_i=0}\hat{\gamma}_i(\tau_0)Y_{iT}\right| \leq \left|Y_{1t} - \sum_{W_i=0}\hat{\gamma}_i(\tau_0)Y_{it}\right|\right\} + \frac{1}{T}.
        \end{equation}
    \end{enumerate}

    \item Compute a level $\alpha$ confidence interval for $\tau$ by inverting the hypothesis test and constructing the set $$\widehat{C}_\tau^{\text{conf}}(\alpha) = \{\tau_0 \mid \hat{p}_{\tau_0} \geq \alpha\}.$$ 
    
\end{enumerate}

Since the counterfactual outcome $Y_{1T}(0)$ is random, this is equivalent to constructing a conformal \emph{prediction} set \citep{Vovk2005} for $Y_{1T}(0)$ by using the quantiles of pre-treatment residuals:
\begin{equation}
  \label{eq:conf_prediction_interval}
  \widehat{C}_{Y}^{\text{conf}} = \left\{y \in \R \; \middle \vert \; \left|y - \sum_{W_i = 0}\hat{\gamma}_i(Y_{1T} - y)Y_{iT} \right| \leq q_{T, \alpha}^+\left(\left|Y_{1t} - \sum_{W_i=0}\hat{\gamma}_i(Y_{1T} - y)Y_{it} \right|\right)\right\},
\end{equation}
where $q_{T,\alpha}^+(x_t)$ is the $\lceil(1-\alpha) T \rceil$\super{th} order statistic of $\bm{X}_1,\ldots,x_T$. 
Note that $\tau \in \widehat{C}_\tau^{\text{conf}} \Leftrightarrow Y_{1T}(0) \in \widehat{C}_Y^{\text{conf}}$.
If $\tau \in \widehat{C}_\tau^{\text{conf}}$, then the adjusted residual is less than or equal to the $\lceil(1-\alpha) T \rceil$\super{th} smallest pre-treatment residual and so $Y_{1T}(0) \in \widehat{C}_Y^{\text{conf}}$. Conversely if $\tau \not \in \widehat{C}_\tau^\text{conf}$, then the adjusted residual must be larger than the $\lceil(1-\alpha) T \rceil$\super{th} smallest pre-treatment residual and so $Y_{1T}(0) \not \in \widehat{C}_Y^{\text{conf}}$.

\citet{chernozhukov2017exact} provide several conditions for which approximate or exact finite-sample validity of the $p$-values (and hence coverage of the prediction interval $\widehat{C}_Y^{\text{conf}})$ can be achieved; our setup in Section \ref{sec:setup} follows theirs. First, they show that approximate validity under an additive noise model with either i.i.d. or stationary noise depends on the estimation error. Applying Proposition \ref{cor:ascm_error_ar} and Theorem \ref{thm:ascm_error}, we can characterize the finite-sample coverage probability of $\widehat{C}_{Y}^{\text{conf}}$ and see that the true coverage will be close to the nominal coverage level $\alpha$ if the pre-treatment fit is good and the approximation error is small. 

Second, they show exact validity if the residuals are exchangeable. Intuitively, pre-treatment residuals must be a good proxy for post-treatment residuals; under the linear factor model this can hold when the factor values do not differ much across time periods (i.e., for a two way fixed effects model), but can be violated if the factor values differ widely in the pre- and post-intervention periods. Additionally, if the ignorability assumption does not hold and treatment adoption is correlated with the shocks $\varepsilon_{it}$, then pre-treatment residuals will be poor proxies for post-treatment residuals, leading to undercoverage.

One drawback of the ``full'' conformal approach is that it is computationally intensive: to construct the prediction interval $\widehat{C}_Y^{\text{conf}}$, we must refit the weights $\hat{\gamma}_i^{\text{aug}}(Y_{1T} - y)$ for a grid of possible values of the true counterfactual outcome $y$. 
A recent alternative conformal approach is the jackknife+ \citep{Barber2019}. This procedure uses the leave one out residuals $Y_{1t} - \hat{Y}_{1t}^{(-t)}$ and the estimate of the post-treatment period $\hat{Y}_{1T}^{(-t)}$ after dropping period $t$. The prediction interval for $Y_{1T}(0)$ is
\begin{equation}
  \label{eq:jackknife+} 
  \widehat{C}_Y^{\text{jackknife+}} = \left[q_{T, \alpha / 2}^-\left(\hat{Y}_{1T}^{(-t)} - \left|Y_{1t} - \hat{Y}_{1t}^{(-t)}\right|\right), q_{T, \alpha / 2}^+\left(\hat{Y}_{1T}^{(-t)} + \left|Y_{1t} - \hat{Y}_{1t}^{(-t)}\right|\right) \right],
\end{equation}
where $q_{T,\alpha}^-(x_t)$ is the $\lfloor \alpha T \rfloor$\super{th} order statistic of $\bm{X}_1,\ldots,x_T$. As with the full conformal method, $\widehat{C}_Y^{\text{jackknife+}}$ will have exact finite sample coverage when the time periods or residuals are exhangeable. However, the jackknife+ procedure only requires that we re-fit the estimator for each of the $T_0$ pre-treatment time periods. While the full conformal method enforces a sharp null hypothesis when estimating the weights, the jackknife+ uses the leave-one-out residuals as a proxy for the distribution of of $Y_{1T}(0)$, incorporating variability in the estimator by including the leave-one-out estimate of the post-treatment outcome $\hat{Y}_{1T}^{(-t)}$. We anticipate that it is possible to extend the results of \citet{chernozhukov2017exact} to show approximate validity when residuals are not exchangeable; we leave this to future work.

\paragraph{Simulation study.} We assess the finite sample coverage of the conformal prediction intervals for $Y_{1T}(0)$ using both the full conformal method \eqref{eq:conformal_p} and the jackknife+ procedure \eqref{eq:jackknife+}. For the four simulation settings we compute 95\% prediction intervals for the first post-treatment counterfactual outcome $Y_{1T_0+1}(0)$  using both conformal methods and the both the SCM and ridge ASCM estimators. Table \ref{tab:inf_coverage} shows the results. We see that the intervals for SCM sometimes undercover, due to finite sample bias from poor treatment fit. In contrast, the intervals for ridge ASCM have greater than or close to nominal coverage for $Y_{1T_0+1}$.

\begin{table}
    \centering
  
\begin{tabular}{lllr}
\toprule
Model & Estimation method & Inference method & Coverage\\
\midrule
 &  & Full conformal & 0.934\\
\cmidrule{3-4}
 & \multirow{-2}{*}{\raggedright\arraybackslash SCM} & Jackknife+ & 0.950\\
\cmidrule{2-4}
 &  & Full conformal & 0.932\\
\cmidrule{3-4}
\multirow{-4}{*}{\raggedright\arraybackslash AR(3)} & \multirow{-2}{*}{\raggedright\arraybackslash SCM + Ridge} & Jackknife+ & 0.947\\
\cmidrule{1-4}
 &  & Full conformal & 0.926\\
\cmidrule{3-4}
 & \multirow{-2}{*}{\raggedright\arraybackslash SCM} & Jackknife+ & 0.954\\
\cmidrule{2-4}
 &  & Full conformal & 0.950\\
\cmidrule{3-4}
\multirow{-4}{*}{\raggedright\arraybackslash Factor Model} & \multirow{-2}{*}{\raggedright\arraybackslash SCM + Ridge} & Jackknife+ & 0.966\\
\cmidrule{1-4}
 &  & Full conformal & 0.930\\
\cmidrule{3-4}
 & \multirow{-2}{*}{\raggedright\arraybackslash SCM} & Jackknife+ & 0.956\\
\cmidrule{2-4}
 &  & Full conformal & 0.936\\
\cmidrule{3-4}
\multirow{-4}{*}{\raggedright\arraybackslash Factor Model (More Noise)} & \multirow{-2}{*}{\raggedright\arraybackslash SCM + Ridge} & Jackknife+ & 0.957\\
\cmidrule{1-4}
 &  & Full conformal & 0.889\\
\cmidrule{3-4}
 & \multirow{-2}{*}{\raggedright\arraybackslash SCM} & Jackknife+ & 0.957\\
\cmidrule{2-4}
 &  & Full conformal & 0.939\\
\cmidrule{3-4}
\multirow{-4}{*}{\raggedright\arraybackslash Fixed Effects} & \multirow{-2}{*}{\raggedright\arraybackslash SCM + Ridge} & Jackknife+ & 0.956\\
\bottomrule
\end{tabular}

  \caption{Coverage for the full conformal \eqref{eq:conformal_p} and jackknife+ \eqref{eq:jackknife+} prediction intervals.}
    \label{tab:inf_coverage}
\end{table}

\clearpage
\section{Additional results}

\subsection{Specialization of Ridge ASCM results to SCM}
\label{sec:additional_scm}

This appendix section specializes select results from the main text for Ridge ASCM for the special case of SCM, with $\lambda \to \infty$.

First we specialize Proposition \ref{cor:ascm_error_ar} to SCM weights by taking $\lambda \to \infty$.
\begin{corollary}
  \label{cor:synth_error_ar}
  Under the linear model \eqref{eq:ark} with independent sub-Gaussian noise with scale parameter $\sigma$, for any $\delta > 0$, for weights $\bm{\gamma} \in \Delta^{N_0}$ independent of the post-treatment outcomes $(Y_{1T},\ldots,Y_{NT})$ and for any $\delta > 0$, 
  \begin{equation}
    \label{eq:synth_err_cor_ar}
  Y_{1T}(0) - \sum_{W_i = 0}\hat{\gamma}_i Y_{iT} \leq \underbrace{\|\bm{\beta}\|_2\left\|\bm{X}_1 - \sum_{W_i=0}\hat{\gamma}_i\bm{X}_i\right\|_2}_{\text{imbalance in} \bm{X}} + \underbrace{\delta\sigma \left(1 + \|\hat{\bm{\gamma}}\|_2\right)}_{\text{post-treatment noise}},
  \end{equation}
  with probability at least $1 - 2e^{-\frac{\delta^2}{2}}$.
\end{corollary}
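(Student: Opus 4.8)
The plan is to obtain this as the $\lambda \to \infty$ specialization of Proposition~\ref{cor:ascm_error_ar}: in that limit the Ridge ASCM augmentation term vanishes and the estimated weights $\hat{\bm{\gamma}}$ collapse onto the simplex $\Delta^{N_0}$, recovering the pure SCM solution. Since for SCM there is no augmentation bias to track, I would give the self-contained argument that reproduces the two surviving terms directly for simplex weights, which is cleaner than carrying the general bound through the limit.

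First I would decompose the prediction error using the structure of the AR model \eqref{eq:ark}. Writing each control outcome as $Y_{iT}(0) = \bm{\beta}^\top \bm{X}_i + \varepsilon_{iT}$ and using linearity of the model (together with the fact that simplex weights satisfy $\sum_{W_i=0}\hat{\gamma}_i = 1$, which eliminates any common level term), the error splits into a deterministic bias piece and a noise piece:
\begin{equation*}
Y_{1T}(0) - \sum_{W_i=0}\hat{\gamma}_i Y_{iT} = \bm{\beta}^\top\left(\bm{X}_1 - \sum_{W_i=0}\hat{\gamma}_i \bm{X}_i\right) + \left(\varepsilon_{1T} - \sum_{W_i=0}\hat{\gamma}_i \varepsilon_{iT}\right).
\end{equation*}
Applying Cauchy--Schwarz to the first term yields exactly the imbalance contribution $\|\bm{\beta}\|_2\|\bm{X}_1 - \sum_{W_i=0}\hat{\gamma}_i\bm{X}_i\|_2$.

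For the noise piece, I would condition on the pre-treatment data, which renders $\hat{\bm{\gamma}}$ deterministic and, by the stated independence, leaves the post-treatment shocks $\{\varepsilon_{iT}\}$ independent sub-Gaussian with scale $\sigma$. The weighted combination $\varepsilon_{1T} - \sum_{W_i=0}\hat{\gamma}_i\varepsilon_{iT}$ is then sub-Gaussian with scale $\sigma\sqrt{1 + \|\hat{\bm{\gamma}}\|_2^2}$, so a two-sided tail bound gives deviation at most $\delta\sigma\sqrt{1 + \|\hat{\bm{\gamma}}\|_2^2}$ with probability at least $1 - 2e^{-\delta^2/2}$; because this bound is uniform over realizations of $\hat{\bm{\gamma}}$, it survives integrating out the conditioning. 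I would then replace $\sqrt{1 + \|\hat{\bm{\gamma}}\|_2^2}$ by the looser but cleaner $1 + \|\hat{\bm{\gamma}}\|_2$ to match the stated form, and combine with the deterministic bias bound.

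The only delicate point is the conditioning step: one must genuinely invoke the independence of $\hat{\bm{\gamma}}$ from the post-treatment noise so that the conditional sub-Gaussian rate is valid, and, because $\|\hat{\bm{\gamma}}\|_2$ is itself random, keep it inside the high-probability event on the right-hand side rather than bounding it a priori. Everything else is routine, and the factor of two in the probability is inherited from the two-sided tail used for the noise term, consistent with the parent proposition.
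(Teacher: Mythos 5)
Your proof is correct and takes essentially the same route as the paper: the paper deduces this corollary from its general bound (Proposition \ref{prop:general_bound}) by noting $\xi_i = 0$ under the linear model, and that proposition's proof is exactly your argument---the bias/noise decomposition, Cauchy--Schwarz on the imbalance term, and the conditional sub-Gaussian tail bound with the relaxation $\sigma\sqrt{1+\|\hat{\bm{\gamma}}\|_2^2} \leq \sigma(1+\|\hat{\bm{\gamma}}\|_2)$. The only difference is presentational: you inline the proposition's proof (with a slightly more careful statement of the conditioning step) rather than citing it.
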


\noindent We can similarly specialize Theorem \ref{thm:ascm_error}.
\begin{corollary}
  \label{cor:synth_error}
  Under the linear factor model \eqref{eq:scm_factor_model} with independent sub-Gaussian noise with scale parameter $\sigma$, for weights $\bm{\gamma} \in \Delta^{N_0}$ independent of the post-treatment outcomes $(Y_{1T},\ldots,Y_{NT})$ and for any $\delta > 0$, 
  \begin{equation}
    \label{eq:synth_err_cor}
  Y_{1T}(0) - \sum_{W_i = 0}\hat{\gamma}_i Y_{iT} \leq \underbrace{\frac{JM^2}{\sqrt{T_0}}\left\|\bm{X}_1 - \sum_{W_i=0}\hat{\gamma}_i\bm{X}_i\right\|_2}_{\text{imbalance in} \bm{X}} + \underbrace{ \frac{2JM^2 \sigma}{\sqrt{T_0}}\left(\sqrt{\log 2 N_0} + \delta \right)}_{\text{approximation error}} + \underbrace{\delta\sigma \left(1 + \|\hat{\bm{\gamma}}\|_2\right)}_{\text{post-treatment noise}},
  \end{equation}
  with probability at least $1 - 6e^{-\frac{\delta^2}{2}}$.
\end{corollary}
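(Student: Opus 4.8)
The plan is to derive Corollary~\ref{cor:synth_error} as the $\lambda \to \infty$ specialization of Theorem~\ref{thm:ascm_error}, exactly as Corollary~\ref{cor:synth_error_ar} specializes Proposition~\ref{cor:ascm_error_ar}. The first step is to record that the Ridge ASCM weights reduce to SCM weights in this limit: the ridge augmentation term is proportional to the fitted ridge coefficient, which vanishes as $\lambda \to \infty$, so the augmented weights $\hat{\bm{\gamma}}^{\text{aug}}$ converge to weights $\hat{\bm{\gamma}} \in \Delta^{N_0}$. Since SCM fits only the pre-treatment outcomes, these weights are independent of the post-treatment outcomes $(Y_{1T},\ldots,Y_{NT})$, so the hypotheses of Theorem~\ref{thm:ascm_error} hold along the entire path and in the limit.

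Second, I would apply the bound of Theorem~\ref{thm:ascm_error} and pass to the limit. Each of the three constituent terms --- imbalance in $\bm{X}$, approximation error, and post-treatment noise --- is a continuous function of the weights and of the pre-treatment imbalance $\|\bm{X}_1 - \sum_{W_i=0}\hat{\gamma}_i \bm{X}_i\|_2$, so it converges to the corresponding quantity evaluated at the limiting simplex weights. The governing high-probability event (of probability at least $1 - 6e^{-\delta^2/2}$) is a statement about the pre- and post-treatment noise vectors alone; the weights enter only as data-independent multipliers. Hence the same event yields the stated inequality in the limit, and the probability guarantee is preserved without any additional union bound.

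Third, I would use the simplex structure to put the bound in the displayed form. For $\hat{\bm{\gamma}} \in \Delta^{N_0}$ we have $\hat{\gamma}_i \geq 0$ and $\|\hat{\bm{\gamma}}\|_1 = 1$; wherever the general ASCM bound carries a factor depending on $\|\hat{\bm{\gamma}}^{\text{aug}}\|_1$ multiplying the factor-model quantities, this collapses to a constant, leaving the coefficient $JM^2/\sqrt{T_0}$ on the imbalance term and $2JM^2\sigma/\sqrt{T_0}$ on the approximation-error term exactly as in the theorem. The post-treatment noise term $\delta\sigma(1 + \|\hat{\bm{\gamma}}\|_2)$ carries over verbatim, since it arises from sub-Gaussian concentration of the weighted residual $\varepsilon_{1T} - \sum_{W_i=0}\hat{\gamma}_i \varepsilon_{iT}$ and depends on the weights only through $\|\hat{\bm{\gamma}}\|_2$.

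The step I expect to be most delicate is not conceptual but a matter of bookkeeping: verifying that the passage $\lambda \to \infty$ commutes with the high-probability inequality, i.e., that a single event of probability $1 - 6e^{-\delta^2/2}$ controls the bound simultaneously for all $\lambda$ along the path, so that the limit inherits the guarantee with no degradation of constants. This is immediate once one observes that the controlling event concerns only the noise and that the weights enter continuously, but it is the one place where care is required.
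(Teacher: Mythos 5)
Your proposal takes a different route from the paper, and it contains a gap that cannot be repaired within that route. The paper's actual proof is direct: it applies the general-weights bound of Theorem \ref{thm:random_approx_error} (itself Proposition \ref{prop:general_bound} plus Lemma \ref{lem:random_approx}) together with the factor-model reduction of Lemma \ref{lem:factor_model_approx}, which writes $Y_{iT}(0) = \bm{\beta}\cdot\bm{X}_i + \xi_i + \varepsilon_{iT}$ with $\bm{\beta} = \frac{1}{T_0}\bm{\mu}_T'\bm{\mu}'$ and $\xi_i = \frac{1}{T_0}\bm{\mu}_T'\bm{\mu}'\bm{\varepsilon}_{i(1:T_0)}$, so that $\|\bm{\beta}\|_2\le JM^2/\sqrt{T_0}$ and $\xi_i$ is sub-Gaussian with scale $\varpi = JM^2\sigma/\sqrt{T_0}$; the simplex constraint enters only through $\|\hat{\bm{\gamma}}\|_1=1$. (The sentence in Section \ref{sec:additional_scm} saying one can "similarly specialize" Theorem \ref{thm:ascm_error} describes the relationship between the statements, not the proof.) The gap in your approach concerns the probability guarantee: Theorem \ref{thm:ascm_error} holds only with probability at least $1-6e^{-\delta^2/2}-e^{-2(\log 2+N_0\log 5)\delta^2}$, the extra failure probability coming from Lemma \ref{lem:random_approx_ridge_ascm}'s control of $\bigl(\sum_{W_i=0}\xi_i^2\bigr)^{1/2}$, which is needed only for ridge ASCM's excess approximation error term. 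Passing to the limit $\lambda\to\infty$ in a high-probability statement can never improve its probability, so your argument yields the displayed inequality only on an event of this strictly smaller probability, not the $1-6e^{-\delta^2/2}$ claimed by the corollary. The extra event is indeed superfluous in the SCM limit, but the only way to discard it is to rerun the argument without it --- that is, to invoke Proposition \ref{prop:general_bound} and Lemma \ref{lem:random_approx} directly, which is exactly what the paper does.

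Two further points. First, the corollary is stated for \emph{any} weights $\bm{\gamma}\in\Delta^{N_0}$ independent of the post-treatment outcomes (uniform weights, matching weights, and so on), whereas the $\lambda\to\infty$ limit of the ridge ASCM weights recovers only the SCM weights; your proof therefore covers a strictly smaller class of weights than the statement. Second, your claim that the controlling event "concerns only the noise" with the weights entering as "data-independent multipliers" is not accurate: the concentration events in these proofs are defined in terms of the weights themselves (e.g.\ the post-treatment noise event compares $|\varepsilon_{1T}-\sum_{W_i=0}\hat{\gamma}_i\varepsilon_{iT}|$ to $\delta\sigma(1+\|\hat{\bm{\gamma}}\|_2)$), and the weights vary with $\lambda$, so there is no single fixed event valid along the whole path. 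One can still make the limiting step rigorous --- apply the theorem along a sequence $\lambda_n\to\infty$ and use $P(\limsup_n E_n)\ge \limsup_n P(E_n)$ together with continuity, given the pre-treatment data, of both sides of the bound in $\lambda$ --- but this repairs only the limit, not the probability deficit above.
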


\subsection{Error under a partially linear model with Lipshitz deviations from linearity}
\label{sec:lipshitz}

We now bound the estimation error for SCM and Ridge ASCM when the outcome is only partially linear, with Lipshitz deviations from linearity. Specifically, assume that the control potential outcome in time period $T$ satisfies
\begin{equation}
  \label{eq:approx_linear_lip}
  Y_{iT}(0) = \bm{\beta} \cdot \bm{X}_i + f(\bm{X}_i) + \varepsilon_{iT},
\end{equation}
where $f:\R^{T_0} \to \R$ is $L$-Lipshitz and the noise terms $\varepsilon_{it}$ are independent sub-Gaussian random variables with scale parameter $\sigma$, and are ignorable $\E_{\bm{\varepsilon}_T}[W_i \varepsilon_{iT}] = \E_{\bm{\varepsilon}_T}[(1-W_i) \varepsilon_{iT}] = \E_{\bm{\varepsilon}_T}[\varepsilon_{iT}] = 0$, as above.

Under this model, the $L$-Lipshitz function $f(\cdot)$ will induce an approximation error from deviating away from the nearest neighbor match.
\begin{theorem}
  \label{thm:lipshitz_approx_error_ascm}
  Let $C = \max_{W_i = 0} \|\bm{X}_i\|_2$. For any $\delta > 0$, the estimation error for the ridge ASCM weights $\hat{\bm{\gamma}}^\aug$ \eqref{eq:greg} with hyperparameter $\lambda^\ridge = N_0 \lambda$ is 

\begin{equation}
  \label{eq:lipshitz_approx_error_ascm}
  \begin{aligned}
    \left|Y_{1T}(0) - \sum_{W_i=0}\hat{\bm{\gamma}}^\aug_iY_{1T}\right| & 
    \leq \|\bm{\beta}\|_2
    \underbrace{\left\|\text{diag}\left(\frac{\lambda}{d_j^2 + \lambda}\right)(\widetilde{\bm{X}}_1 - \widetilde{\bm{X}}_{0\cdot}'\hat{\bm{\gamma}}^\scm)\right\|_2}_{\text{imbalance in $X$}} + \\[1em] 
    & \qquad\underbrace{CL\left\|\text{diag}\left(\frac{d_j}{d_j^2 + \lambda}\right)(\widetilde{\bm{X}}_1 - \widetilde{\bm{X}}_{0\cdot}'\hat{\bm{\gamma}}^\scm)\right\|_2}_{\text{excess approximation error}} + \\[1em]
    & \qquad \underbrace{L\sum_{W_i = 0}\hat{\gamma}_i^\scm \|\bm{X}_1 - \bm{X}_i\|_2}_{\text{SCM approximation error}}  \qquad +  \qquad \underbrace{\vphantom{\frac{JM^2}{\sqrt{T_0}}}\delta\sigma \left(1 + \|\hat{\bm{\gamma}}^\aug\|_2\right)}_{\text{post-treatment noise}}
  \end{aligned}
\end{equation}
with probability at least $1 - 2e^{-\frac{\delta^2}{2}}$.
\end{theorem}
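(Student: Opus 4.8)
The plan is to substitute the partially linear model \eqref{eq:approx_linear_lip} into the estimator and split the error into a linear imbalance term, a nonlinear (Lipshitz) approximation term, and a post-treatment noise term, controlling the first two deterministically through the singular value decomposition of the centered control design and the last by sub-Gaussian concentration. Writing $\bm{\Delta} = \widetilde{\bm{X}}_1 - \widetilde{\bm{X}}_{0\cdot}'\hat{\bm{\gamma}}^\scm$ for the SCM imbalance and $\bm{f} = (f(\bm{X}_i))_{W_i=0}$ for the vector of nonlinear deviations at the controls, the first step gives
\[
Y_{1T}(0) - \sum_{W_i=0}\hat{\gamma}_i^\aug Y_{iT} = \bm{\beta}\cdot\Big(\bm{X}_1 - \sum_{W_i=0}\hat{\gamma}_i^\aug\bm{X}_i\Big) + \Big(f(\bm{X}_1) - \sum_{W_i=0}\hat{\gamma}_i^\aug f(\bm{X}_i)\Big) + \Big(\varepsilon_{1T} - \sum_{W_i=0}\hat{\gamma}_i^\aug\varepsilon_{iT}\Big),
\]
and I would bound the absolute value of the left-hand side by the triangle inequality. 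A key preliminary observation is that the ridge augmentation $\hat{\bm{\gamma}}^\aug - \hat{\bm{\gamma}}^\scm = \widetilde{\bm{X}}_{0\cdot}(\widetilde{\bm{X}}_{0\cdot}'\widetilde{\bm{X}}_{0\cdot} + \lambda^\ridge I)^{-1}\bm{\Delta}$ depends only on $\bm{\Delta}$ and the covariates; hence $\hat{\bm{\gamma}}^\aug$ is independent of the post-treatment shocks, and since column-centering makes $\bm{1}'\widetilde{\bm{X}}_{0\cdot}=0$ the augmentation sums to zero, so $\hat{\bm{\gamma}}^\aug$ still sums to one.

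For the imbalance term I would feed this identity through the (full, null-space-padded) singular value decomposition $\widetilde{\bm{X}}_{0\cdot}/\sqrt{N_0} = UDV'$, whose singular values $d_j$ are exactly those in the statement. A short computation gives $\widetilde{\bm{X}}_1 - \widetilde{\bm{X}}_{0\cdot}'\hat{\bm{\gamma}}^\aug = V\,\text{diag}\!\big(\lambda^\ridge/(d_j^2 N_0+\lambda^\ridge)\big)V'\bm{\Delta}$, and substituting $\lambda^\ridge = N_0\lambda$ collapses the shrinkage factor to $\lambda/(d_j^2+\lambda)$; because the augmented weights sum to one the centered and uncentered imbalances coincide, so Cauchy--Schwarz against $\bm{\beta}$ produces the first (imbalance) term. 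For the approximation term I would split it as $\big(f(\bm{X}_1) - \sum\hat{\gamma}_i^\scm f(\bm{X}_i)\big) - \sum(\hat{\gamma}_i^\aug - \hat{\gamma}_i^\scm)f(\bm{X}_i)$. Since $\hat{\bm{\gamma}}^\scm$ lies on the simplex, the first piece equals $\sum\hat{\gamma}_i^\scm\big(f(\bm{X}_1) - f(\bm{X}_i)\big)$, and $L$-Lipshitzness bounds it by $L\sum\hat{\gamma}_i^\scm\|\bm{X}_1 - \bm{X}_i\|_2$, the SCM approximation error.

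The augmentation piece is where $C$ and $L$ enter together. Using the same decomposition, $\sum(\hat{\gamma}_i^\aug - \hat{\gamma}_i^\scm)f(\bm{X}_i) = \tfrac{1}{\sqrt{N_0}}(U'\bm{f})'\,\text{diag}\!\big(d_j/(d_j^2+\lambda)\big)V'\bm{\Delta}$. Because column-centering gives $U'\bm{1}=0$, I can replace $\bm{f}$ by $\bm{f}-f(0)\bm{1}$ without changing the value, and Lipshitzness from the origin yields $\|\bm{f}-f(0)\bm{1}\|_2 \le \sqrt{N_0}\,LC$ with $C=\max_{W_i=0}\|\bm{X}_i\|_2$. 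Cauchy--Schwarz then gives $CL\,\|\text{diag}(d_j/(d_j^2+\lambda))(\widetilde{\bm{X}}_1-\widetilde{\bm{X}}_{0\cdot}'\hat{\bm{\gamma}}^\scm)\|_2$, the excess approximation error, with the $\sqrt{N_0}$ from the Lipshitz bound exactly cancelling the $1/\sqrt{N_0}$ from the rescaled ridge factor. Finally, conditional on the weights the noise term $\varepsilon_{1T} - \sum\hat{\gamma}_i^\aug\varepsilon_{iT}$ is mean-zero sub-Gaussian with scale at most $\sigma\sqrt{1+\|\hat{\bm{\gamma}}^\aug\|_2^2}\le\sigma(1+\|\hat{\bm{\gamma}}^\aug\|_2)$, so a two-sided tail bound controls it by $\delta\sigma(1+\|\hat{\bm{\gamma}}^\aug\|_2)$ with probability at least $1-2e^{-\delta^2/2}$; since this is the only randomness, the full bound holds at that probability.

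I expect the main obstacle to be the ridge/SVD bookkeeping rather than the probability: correctly tracking the rescaling $\lambda^\ridge = N_0\lambda$ together with the padded-SVD convention so that the two shrinkage operators $\text{diag}(\lambda/(d_j^2+\lambda))$ and $\text{diag}(d_j/(d_j^2+\lambda))$ emerge in precisely the stated form (in particular so the unshrunk null-space directions carry factor one in the first term and factor zero in the second), and pairing the $\sqrt{N_0}$ growth of $\|\bm{f}-f(0)\bm{1}\|_2$ against the $1/\sqrt{N_0}$ in the rescaled ridge factor. The centering trick $U'\bm{1}=0$, which both removes $f(0)$ and lets the Lipshitz bound be taken from the origin to produce the constant $C$, is the step most easily overlooked.
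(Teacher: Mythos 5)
Your proposal is correct and follows essentially the same route as the paper's proof: the same three-term decomposition (the paper's Proposition \ref{prop:general_bound}), the same ridge-shrinkage identity for the imbalance term (Lemma \ref{lem:aug_imbal}), the same split of the Lipshitz approximation error into an SCM piece and an augmentation piece bounded by Cauchy--Schwarz (Lemmas \ref{lem:lipshitz_approx} and \ref{lem:lipshitz_approx_ridge_ascm}), and the same sub-Gaussian tail bound for the post-treatment noise. Your explicit centering step, replacing $f$ by $f - f(0)\bm{1}$ via $\bm{1}'\bm{X}_{0\cdot}=0$ before invoking Lipshitzness, is a slightly more careful treatment than the paper's direct use of $|f(\bm{X}_i)| \leq CL$, which tacitly assumes $f(0)=0$ (harmless here since the weights sum to one).
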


We can again specialize this to the SCM weights alone by taking $\lambda \to \infty$. 

\begin{corollary}
  \label{thm:lipshitz_approx_error}
  For any $\delta > 0$, the estimation error for weights on the simplex $\hat{\bm{\gamma}} \in \Delta^{N_0}$ independent of the post-treatment outcomes $(Y_{1T},\ldots, Y_{NT})$ is 

\begin{equation}
  \label{eq:lipshitz_approx_error}
  Y_{1T}(0) - \sum_{W_i = 0}\hat{\bm{\gamma}}_iY_i \leq \|\bm{\beta}\|_2 \underbrace{\left\|\bm{X}_1 - \sum_{W_i = 0}\hat{\gamma}_i \bm{X}_i\right\|_2}_{\text{imbalance in } X} + \underbrace{L\sum_{W_i = 0}\hat{\gamma}_i \|\bm{X}_1 - \bm{X}_i\|_2}_{\text{approximation error}} + \underbrace{\delta \sigma (1 + \|\hat{\bm{\gamma}}\|_2)}_{\text{post-treatment noise}}
\end{equation}
with probability at least $1 - 2e^{-\frac{\delta^2}{2}}$.
\end{corollary}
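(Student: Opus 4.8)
The plan is to prove the corollary directly through a three-way decomposition of the prediction error, rather than passing to the $\lambda \to \infty$ limit of Theorem \ref{thm:lipshitz_approx_error_ascm} (although that route is also available, since $\frac{\lambda}{d_j^2+\lambda}\to 1$ and $\frac{d_j}{d_j^2+\lambda}\to 0$ collapse the first two terms of \eqref{eq:lipshitz_approx_error_ascm} into the imbalance term and kill the excess approximation error). First I would substitute the partially linear model \eqref{eq:approx_linear_lip} for both the treated and control post-treatment outcomes, writing $Y_{1T}(0) = \bm{\beta}\cdot\bm{X}_1 + f(\bm{X}_1) + \varepsilon_{1T}$ and $Y_{iT} = \bm{\beta}\cdot\bm{X}_i + f(\bm{X}_i) + \varepsilon_{iT}$ for each $W_i = 0$. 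Because the weights satisfy $\sum_{W_i=0}\hat{\gamma}_i = 1$, the error splits cleanly as
\begin{equation*}
Y_{1T}(0) - \sum_{W_i=0}\hat{\gamma}_i Y_{iT} = \bm{\beta}\cdot\Bigl(\bm{X}_1 - \sum_{W_i=0}\hat{\gamma}_i\bm{X}_i\Bigr) + \sum_{W_i=0}\hat{\gamma}_i\bigl(f(\bm{X}_1) - f(\bm{X}_i)\bigr) + \Bigl(\varepsilon_{1T} - \sum_{W_i=0}\hat{\gamma}_i\varepsilon_{iT}\Bigr),
\end{equation*}
which lines up exactly with the three displayed terms of \eqref{eq:lipshitz_approx_error}.

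The first two terms are deterministic. The linear term is bounded by Cauchy--Schwarz, giving $\|\bm{\beta}\|_2\|\bm{X}_1 - \sum_{W_i=0}\hat{\gamma}_i\bm{X}_i\|_2$. For the deviation-from-linearity term I would invoke the simplex structure twice: nonnegativity of the weights gives $\bigl|\sum_{W_i=0}\hat{\gamma}_i(f(\bm{X}_1)-f(\bm{X}_i))\bigr| \le \sum_{W_i=0}\hat{\gamma}_i\,|f(\bm{X}_1)-f(\bm{X}_i)|$, and the $L$-Lipschitz property of $f$ then yields $|f(\bm{X}_1)-f(\bm{X}_i)| \le L\|\bm{X}_1 - \bm{X}_i\|_2$, producing the approximation error $L\sum_{W_i=0}\hat{\gamma}_i\|\bm{X}_1-\bm{X}_i\|_2$.

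The only probabilistic step is the post-treatment noise term. Since $\hat{\bm{\gamma}}$ is assumed independent of the post-treatment outcomes $(Y_{1T},\ldots,Y_{NT})$, I can condition on the weights and treat $\varepsilon_{1T} - \sum_{W_i=0}\hat{\gamma}_i\varepsilon_{iT}$ as a fixed linear combination of independent mean-zero sub-Gaussian variables with scale $\sigma$ and coefficient vector $(1, -\hat{\bm{\gamma}})$; its sub-Gaussian variance proxy is therefore $\sigma^2(1 + \|\hat{\bm{\gamma}}\|_2^2)$. The two-sided sub-Gaussian tail bound then gives $\bigl|\varepsilon_{1T} - \sum_{W_i=0}\hat{\gamma}_i\varepsilon_{iT}\bigr| \le \delta\sigma\sqrt{1+\|\hat{\bm{\gamma}}\|_2^2}$ with probability at least $1 - 2e^{-\delta^2/2}$, and the elementary inequality $\sqrt{1+\|\hat{\bm{\gamma}}\|_2^2} \le 1 + \|\hat{\bm{\gamma}}\|_2$ delivers the stated form $\delta\sigma(1+\|\hat{\bm{\gamma}}\|_2)$. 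Summing the three bounds completes the proof on the event of probability at least $1 - 2e^{-\delta^2/2}$.

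The main obstacle here is conceptual rather than computational: the whole argument hinges on $\hat{\bm{\gamma}} \in \Delta^{N_0}$, since $\sum_{W_i=0}\hat{\gamma}_i = 1$ is what lets the $\bm{\beta}\cdot\bm{X}_1$ and $f(\bm{X}_1)$ pieces fold into differences, and $\hat{\gamma}_i \ge 0$ is what lets the Lipschitz bound pass through the weighted sum to give a nearest-neighbor-style penalty; it also relies on the independence of the weights from the post-treatment shocks, which is exactly what makes the conditioning legitimate. Once those two structural facts are invoked, each of the three pieces is a one-line bound.
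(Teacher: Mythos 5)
Your proof is correct and follows essentially the same route as the paper: the paper's proof just modularizes your argument, with your three-way decomposition, Cauchy--Schwarz step, and sub-Gaussian tail bound (including $\sqrt{1+\|\hat{\bm{\gamma}}\|_2^2}\leq 1+\|\hat{\bm{\gamma}}\|_2$) packaged as Proposition \ref{prop:general_bound}, and your nonnegativity-plus-Lipschitz bound on the weighted deviations packaged as Lemma \ref{lem:lipshitz_approx}. There are no gaps.
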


\noindent Inspecting Corollary \ref{thm:lipshitz_approx_error}, we see that in order to control the estimation error, the weights must ensure good pre-treatment fit, while only weighting control units that are near to the treated unit, with the ratio $L / \|\bm{\beta}\|_2$ controlling the relative importance of both terms. 
\citet{Abadie_LHour} propose finding weights by solving the penalized SCM problem,
\begin{equation}
  \label{eq:abadie_lhour}
  \min_{\gamma \in \Delta^{N_0}} \left\|\bm{X}_1 - \sum_{W_i = 0}\hat{\gamma}_i \bm{X}_i\right\|_2^2 + \lambda \sum_{W_i = 0}\hat{\gamma}_i \|\bm{X}_1 - \bm{X}_i\|_2^2.
\end{equation}
Comparing this to Corollary \ref{thm:lipshitz_approx_error}, we see that under the partially linear model \eqref{eq:approx_linear_lip} where $f(\cdot)$ is $L$-Lipshitz, finding weights that limit interpolation error by controling both the overall imbalance in the lagged outcomes as well as the weighted sum of the distances is sufficient to control the error. In the above optimization problem, the hyperparameter $\lambda$ takes the role of $L/\|\bm{\beta}\|_2$.

\subsection{Error under a linear factor model with covariates}
\label{sec:cov_theory}

We can quantify the behavior of the two-step procedure from Lemma \ref{lem:aug_aux} in controlling the error under a more general form of the linear factor model \eqref{eq:scm_factor_model} with covariates \citep[see][for additional discussion]{AbadieAlbertoDiamond2010, botosaru2019role}. We can also consider the error under a linear model with auxiliary covariates, as a direct consequence of Lemma \ref{lem:aug_aux}.

For each time period $t$, the covariates in the linear factor model enter through a time-varying function $f_t:\R^K \to \R$ in the model for the outcomes at $Y_{it}$: 
 \begin{equation}
  \label{eq:scm_factor_model_cov_general}
  Y_{it}(0) = \sum_{j=1}^J \phi_{ij} \mu_{jt} + f_t(\bm{Z}_i) + \varepsilon_{it}.
  \end{equation} 
For this model we again assume that $\varepsilon_{iT}$ are independent, mean-zero sub-Gaussian random variables with scale parameter $\sigma$, and are uncorrelated with treatment assignment.

To characterize how well the covariates approximate the true function $f(\bm{Z}_i)$, we will consider the best linear approximation in our data, and define the residual for unit $i$ and time $t$ as $e_{it} = f_t(\bm{Z}_i) - \bm{Z}_i'(\bm{Z}'\bm{Z})^{-1}\bm{Z}'f_t(Z)$, where $\bm{Z} \in \R^{N \times K}$ is the matrix of all auxiliary covariates for all units. For each time period we will characterize the additional approximation error incurred by only balancing the covariates linearly with the \emph{residual sum of squares} $RSS_t = \sum_{i=1}^n e_{it}^2$.
For ease of exposition,  we assume that the control covariates are standardized and rotated, which can always be true after pre-processing, and present results for the simpler case in which we fit SCM on the residualized pre-treatment outcomes rather than ridge ASCM (i.e., we let $\lambda^\ridge \to \infty$); the more general version follows immediately by applying Theorem \ref{thm:ascm_error}.

\begin{theorem}
  \label{thm:ascm_err_covs_general}
  Under the linear factor model with covariates \eqref{eq:scm_factor_model_cov} with $\frac{1}{N_0}\bm{Z}_{0\cdot}' \bm{Z}_{0\cdot} = \bm{I}_K$ and sub-Gaussian noise, for any $\delta>0$, $\hat{\bm{\gamma}}^\cov$ in Equation \eqref{eq:ols_scm_weights} with $\lambda^\ridge \to \infty$ satisfies the bound
  \begin{equation}
  \label{eq:ascm_err_covs}
  \begin{aligned}
    \left|Y_{1T}(0) - \sum_{W_i=0}\hat{\bm{\gamma}}^\cov Y_{iT} \right| &
    \leq \frac{JM^2}{\sqrt{T_0}}\left(\vphantom{\sqrt{\frac{K}{T_0}}}\right.
    \underbrace{\left\|\check{\bm{X}}_1 - \check{\bm{X}}_{0\cdot}'\hat{\bm{\gamma}}\right\|_2}_{\text{imbalance in $\check{\bm{X}}$}} 
      + \underbrace{4\sigma\sqrt{\frac{K}{N_0}}\|\bm{Z}_1 - \bm{Z}_{0\cdot}'\hat{\bm{\gamma}}\|_2}_{\text{excess approximation error}}\left.\vphantom{\sqrt{\frac{K}{T_0}}}\right) + \\[1em]
    & \qquad \underbrace{\frac{2 J M^2 \sigma}{\sqrt{T_0}} \left(\sqrt{\log N_0} + \frac{\delta}{2}\right)}_{\text{SCM approximation error}}  + \underbrace{(JM^2 + 1)e_{1\text{max}} + (JM^2 + 1)\sqrt{RSS_{\text{max}}}\|\hat{\bm{\gamma}}^\cov\|_2}_{\text{covariate approximation error}} \\
    & \qquad + \underbrace{\delta \sigma (1 + \|\hat{\bm{\gamma}}^\cov\|_2)}_{\text{post-treatment noise}}
  \end{aligned}
  \end{equation}
  with probability at least $1-6e^{-\frac{\delta^2}{2}} - 2e^{-\frac{KN_0(2-\sqrt{\log 5})^2}{2}}$, where $e_{1\text{max}} = \max_{t}|e_{1t}|$ is the maximal residual for the treated unit and $RSS_\text{max} = \max_t RSS_t$ is the maximal residual sum of squares
\end{theorem}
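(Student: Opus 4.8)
The plan is to reduce the covariate model to the pure linear factor model of Theorem~\ref{thm:ascm_error} by \emph{residualizing} the pre-treatment outcomes against the covariates -- this is the two-step (residualize-then-SCM) procedure of Lemma~\ref{lem:aug_aux} -- and then to separately pay for the two ways in which this residualization is imperfect: the in-sample best linear approximation of $f_t$ leaves a nonlinear remainder $e_{it}$, and the linear coefficient must itself be estimated from noisy data. First I would write $f_t(\bm{Z}_i) = \bm{Z}_i'\bm{\theta}_t + e_{it}$ for the in-sample projection $\bm{\theta}_t = (\bm{Z}'\bm{Z})^{-1}\bm{Z}'f_t(\bm{Z})$, so that the normal equations make $e_{\cdot t}$ orthogonal to $\bm{Z}$ in sample. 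With $\check{X}_{it} = X_{it} - \bm{Z}_i'\hat{\bm{\theta}}_t$ the residualized pre-treatment outcomes, the model becomes $\check{X}_{it} = \sum_j \phi_{ij}\mu_{jt} + e_{it} + \varepsilon_{it} + \bm{Z}_i'(\bm{\theta}_t - \hat{\bm{\theta}}_t)$, i.e. a factor model in $(\phi,\mu)$ perturbed by the remainder $e_{it}$ and the estimation error $\bm{Z}_i'(\bm{\theta}_t - \hat{\bm{\theta}}_t)$. Using the covariate-adjusted prediction, the target error at period $T$ collapses to the residualized discrepancy $\check{Y}_{1T} - \sum_{W_i=0}\hat{\gamma}^\cov_i \check{Y}_{iT}$, which splits into a factor imbalance, a remainder term $e_{1T} - \sum_i \hat{\gamma}^\cov_i e_{iT}$, the post-treatment noise $\varepsilon_{1T} - \sum_i \hat{\gamma}^\cov_i \varepsilon_{iT}$, and an estimation-error term $(\bm{Z}_1 - \sum_i\hat{\gamma}^\cov_i\bm{Z}_i)'(\bm{\theta}_T - \hat{\bm{\theta}}_T)$.

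Second I would apply Theorem~\ref{thm:ascm_error} to $\check{\bm{X}}$ viewed as a factor model with sub-Gaussian noise $\varepsilon$. This delivers the ``imbalance in $\check{\bm{X}}$'' term $\frac{JM^2}{\sqrt{T_0}}\|\check{\bm{X}}_1 - \check{\bm{X}}_{0\cdot}'\hat{\bm{\gamma}}\|_2$, the ``SCM approximation error'' $\frac{2JM^2\sigma}{\sqrt{T_0}}(\sqrt{\log N_0}+\frac{\delta}{2})$, and the ``post-treatment noise'' $\delta\sigma(1+\|\hat{\bm{\gamma}}^\cov\|_2)$ essentially verbatim, since these are exactly the outputs of that theorem. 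The genuine work is to track how the two non-factor pieces -- the remainder $e$ and the estimation error -- propagate, because each enters \emph{directly} in the period-$T$ discrepancy (coefficient $1$) and \emph{indirectly} by corrupting the pre-treatment balance that Theorem~\ref{thm:ascm_error} amplifies by $\frac{JM^2}{\sqrt{T_0}}$. For the remainder this bookkeeping is clean: the per-period contribution $|e_{1t} - \sum_i\hat{\gamma}^\cov_i e_{it}|$ is bounded by $e_{1\text{max}} + \|\hat{\bm{\gamma}}^\cov\|_2\sqrt{RSS_{\text{max}}}$, so summing the squares over the $T_0$ pre-treatment periods and amplifying by $\frac{JM^2}{\sqrt{T_0}}$ cancels the $\sqrt{T_0}$ and leaves coefficient $JM^2$, which combined with the direct period-$T$ contribution gives the $(JM^2+1)$ prefactors of the ``covariate approximation error'' term.

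Third I would bound the two correction pieces quantitatively. The remainder is handled elementwise and by Cauchy--Schwarz: $|e_{1t}|\le e_{1\text{max}}$ and $|\sum_{W_i=0}\hat{\gamma}^\cov_i e_{it}| \le \|\hat{\bm{\gamma}}^\cov\|_2\,\|e_{\cdot t}\|_2 = \|\hat{\bm{\gamma}}^\cov\|_2\sqrt{RSS_t}\le \|\hat{\bm{\gamma}}^\cov\|_2\sqrt{RSS_{\text{max}}}$, which is what feeds the mechanism above. For the estimation error I would invoke a concentration bound for the OLS coefficient: under $\frac{1}{N_0}\bm{Z}_{0\cdot}'\bm{Z}_{0\cdot} = \bm{I}_K$ the error $\hat{\bm{\theta}}_t - \bm{\theta}_t$ is an average of the form $\frac1{N_0}\bm{Z}_{0\cdot}'(\cdot)$ whose stochastic part is the sub-Gaussian cross term $\frac1{N_0}\bm{Z}_{0\cdot}'\bm{\varepsilon}_{\cdot t}$ (the remainder $e_{\cdot t}$ drops out by the normal equations), and I would bound its Euclidean norm \emph{uniformly over the $K$-dimensional covariate directions} using a $\tfrac12$-net of the unit sphere (covering number $5^K$). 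The sub-Gaussian tail together with the union bound over the net yields the rate $4\sigma\sqrt{K/N_0}$ and the extra failure probability $2e^{-KN_0(2-\sqrt{\log 5})^2/2}$; weighting by the covariate imbalance $\|\bm{Z}_1 - \bm{Z}_{0\cdot}'\hat{\bm{\gamma}}\|_2$ and passing through the factor-imbalance channel produces the ``excess approximation error'' term. A union bound over the $6e^{-\delta^2/2}$ event inherited from Theorem~\ref{thm:ascm_error} (cf. Corollary~\ref{cor:synth_error}) and the net event then gives the stated $1-6e^{-\delta^2/2} - 2e^{-KN_0(2-\sqrt{\log 5})^2/2}$.

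The hard part will be the uniform concentration of the OLS coefficient $\hat{\bm{\theta}}_t$ over covariate directions -- establishing the $\sqrt{K/N_0}$ bound with precisely the stated net-based failure probability -- and, in tandem, cleanly separating the genuine factor structure of $\check{\bm{X}}$ from the nonlinear remainder $e$ so that the $\frac{JM^2}{\sqrt{T_0}}$ amplification is attributed to the correct pieces (and the constant in front of $\sqrt{K/N_0}$ is tracked honestly) when Theorem~\ref{thm:ascm_error} is invoked on residualized rather than raw outcomes. By contrast, once those two controls are in hand the remaining steps are Cauchy--Schwarz estimates and bookkeeping layered on top of that theorem.
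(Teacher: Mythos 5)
Your proposal matches the paper's proof in all essentials: the same projection/residual decomposition of $f_t$, the same use of exact balance in $\bm{Z}$ to eliminate the linear covariate term, the same H\"older/Cauchy--Schwarz bookkeeping that produces the $(JM^2+1)$ coefficients on the covariate approximation error, the same $\tfrac{1}{2}$-net (covering number $5^K$) concentration for the noise--covariate cross term, and the same final union bound. The one detail to implement carefully: the net argument must be applied, as the paper does, to the single time-aggregated, factor-weighted $K$-vector $\frac{1}{T_0}\bm{\mu}_T'\bm{\mu}'\bm{\varepsilon}_{0(1:T_0)}'\bm{Z}_{0\cdot}$ rather than per-period to each $\hat{\bm{\theta}}_t - \bm{\theta}_t$, since a per-period bound would require a further union bound over the $T_0$ pre-treatment periods and degrade the stated failure probability $2e^{-\frac{KN_0(2-\sqrt{\log 5})^2}{2}}$ (also note that the period-$T$ estimation-error term you carry, $(\bm{Z}_1 - \bm{Z}_{0\cdot}'\hat{\bm{\gamma}}^\cov)'(\bm{\theta}_T - \hat{\bm{\theta}}_T)$, is identically zero by exact balance).
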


We can also consider the special case of Theorem \ref{thm:ascm_err_covs_general} when $f_t(\bm{Z}_i) = \sum_{k=1}^KB_{tk} Z_{ik}$ is a linear function of the covariates, and so
\begin{equation}\label{eq:scm_factor_model_cov}
  Y_{it}(0) = \sum_{j=1}^J \phi_{ij} \mu_{jt} + \sum_{k=1}^KB_{tk}Z_{ik} + \varepsilon_{it} = \bm{\phi}_i'\bm{\mu}_T + \bm{B}_t'\bm{Z}_i + \varepsilon_{it}.
\end{equation}
In this case the residuals $e_{it} = 0 \;\;\; \forall i,t$.

\begin{corollary}
  \label{thm:ascm_err_covs}
  Under the linear factor model with covariates \eqref{eq:scm_factor_model_cov} with $\frac{1}{N_0}\bm{Z}_{0\cdot}' \bm{Z}_{0\cdot} = \bm{I}_K$ and sub-Gaussian noise, for any $\delta>0$, $\hat{\bm{\gamma}}^\cov$ in Equation \eqref{eq:ols_scm_weights} with $\lambda^\ridge \to \infty$ satisfies the bound
  \begin{equation}
  \label{eq:ascm_err_covs}
  \begin{aligned}
    \left|Y_{1T}(0) - \sum_{W_i=0}\hat{\bm{\gamma}}^\cov Y_{iT} \right| &
    \leq \frac{JM^2}{\sqrt{T_0}}\left(\vphantom{\sqrt{\frac{K}{T_0}}}\right.
    \underbrace{\left\|\check{\bm{X}}_1 - \check{\bm{X}}_{0\cdot}'\hat{\bm{\gamma}}\right\|_2}_{\text{imbalance in $\check{\bm{X}}$}} 
      + \underbrace{4\sigma\sqrt{\frac{K}{N_0}}\|\bm{Z}_1 - \bm{Z}_{0\cdot}'\hat{\bm{\gamma}}\|_2}_{\text{excess approximation error}}\left.\vphantom{\sqrt{\frac{K}{T_0}}}\right) + \\[1em]
    & \qquad \underbrace{\frac{2 J M^2 \sigma}{\sqrt{T_0}} \left(\sqrt{\log N_0} + \frac{\delta}{2}\right)}_{\text{SCM approximation error}} \qquad + \underbrace{\delta \sigma (1 + \|\hat{\bm{\gamma}}^\cov\|_2)}_{\text{post-treatment noise}}
  \end{aligned}
  \end{equation}
  with probability at least $1-6e^{-\frac{\delta^2}{2}} - 2e^{-\frac{KN_0(2-\sqrt{\log 5})^2}{2}}$.
\end{corollary}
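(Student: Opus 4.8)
The plan is to obtain Corollary \ref{thm:ascm_err_covs} as an immediate specialization of Theorem \ref{thm:ascm_err_covs_general}, using the observation already flagged in the text that the residuals $e_{it}$ vanish identically under the linear model \eqref{eq:scm_factor_model_cov}. First I would record that under this model $f_t(\bm{Z}_i) = \bm{B}_t'\bm{Z}_i$, so that the stacked $N$-vector of function values $f_t(Z) = \bm{Z}\bm{B}_t$ lies exactly in the column space of $\bm{Z}$. This is the only structural fact we need; the rest is a projection computation.

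Second, I would substitute this into the definition of the best-linear-approximation residual stated just before Theorem \ref{thm:ascm_err_covs_general}, namely $e_{it} = f_t(\bm{Z}_i) - \bm{Z}_i'(\bm{Z}'\bm{Z})^{-1}\bm{Z}'f_t(Z)$. Since $\bm{Z}_i'(\bm{Z}'\bm{Z})^{-1}\bm{Z}'f_t(Z) = \bm{Z}_i'(\bm{Z}'\bm{Z})^{-1}\bm{Z}'\bm{Z}\bm{B}_t = \bm{Z}_i'\bm{B}_t = f_t(\bm{Z}_i)$, the fitted value recovers $f_t(\bm{Z}_i)$ exactly and hence $e_{it} = 0$ for every unit $i$ and time $t$. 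In words, regressing an exactly linear function of the covariates back onto the covariates leaves no residual, so there is no excess error from balancing the covariates only linearly.

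Third, with all residuals equal to zero we have $e_{1\text{max}} = \max_t|e_{1t}| = 0$ and $RSS_\text{max} = \max_t\sum_i e_{it}^2 = 0$, so the ``covariate approximation error'' term $(JM^2+1)e_{1\text{max}} + (JM^2+1)\sqrt{RSS_\text{max}}\,\|\hat{\bm{\gamma}}^\cov\|_2$ appearing in the bound \eqref{eq:ascm_err_covs} of Theorem \ref{thm:ascm_err_covs_general} equals zero. Applying Theorem \ref{thm:ascm_err_covs_general} to $\hat{\bm{\gamma}}^\cov$ with $\lambda^\ridge \to \infty$ and deleting this vanishing term yields precisely the remaining three terms (imbalance in $\check{\bm{X}}$, excess approximation error, SCM approximation error, and post-treatment noise) of the claimed bound, and the probability $1-6e^{-\delta^2/2} - 2e^{-KN_0(2-\sqrt{\log 5})^2/2}$ is inherited unchanged.

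I expect no genuine obstacle here, since the statement is a direct corollary: the entire content is the projection identity showing $e_{it}\equiv 0$, after which the inequality is inherited verbatim from the parent theorem. The only point meriting care is bookkeeping—confirming that $f_t(Z)$ denotes the full $N$-vector of values so that the projection onto $\operatorname{col}(\bm{Z})$ is exact, and that the assumption $\tfrac{1}{N_0}\bm{Z}_{0\cdot}'\bm{Z}_{0\cdot} = \bm{I}_K$ carries over identically, which it does since it is imposed in both statements.
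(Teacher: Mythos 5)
Your proposal is correct and takes essentially the same route as the paper: the paper likewise treats this as an immediate specialization of Theorem \ref{thm:ascm_err_covs_general}, noting that under the linear model \eqref{eq:scm_factor_model_cov} the best-linear-approximation residuals satisfy $e_{it}=0$ for all $i,t$, so that $e_{1\text{max}}$ and $RSS_\text{max}$ vanish, the covariate approximation error term drops out, and the remaining terms and probability bound are inherited verbatim. Your explicit projection computation $\bm{Z}_i'(\bm{Z}'\bm{Z})^{-1}\bm{Z}'\bm{Z}\bm{B}_t = \bm{Z}_i'\bm{B}_t$ is exactly the (unstated) justification behind the paper's one-line observation.
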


Building on Lemma \ref{lem:aug_aux}, Theorem \ref{thm:ascm_err_covs_general} and Corollary \ref{thm:ascm_err_covs} show that due to the additive, separable structure of the auxiliary covariates in Equation \eqref{eq:scm_factor_model_cov},  controlling the pre-treatment fit in the \emph{residualized} lagged outcomes $\check{\bm{X}}$ partially controls the error.
This justifies directly targeting fit in the residualzed lagged outcomes $\check{\bm{X}}$ rather than targeting raw lagged outcomes $\bm{X}$.
Moreover, the excess approximation error will be small since
since the number of covariates $K$ is small relative to $N_0$ and the auxiliary covariates are measured without noise. 
As in Section \ref{sec:better_fit}, we can achieve better balance when we apply ridge ASCM to $\check{\bm{X}}$ than when we apply SCM alone. 
Because $\check{\bm{X}}$ are orthogonal to $\bm{Z}$ by construction, this comes at no cost in terms of imbalance in $\bm{Z}$.
However, the fundamental challenge of over-fitting to noise still remains, and, as in the case without auxiliary covariates, selecting the tuning parameter remains important.
We again propose to follow the cross validation approach in Section \ref{sec:cv}, here using the residualized lagged outcomes $\check{\bm{X}}$ rather than the raw lagged outcomes $\bm{X}$.

\clearpage
\section{Simulation data generating process}
\label{sec:sim_details}
We now describe the simulations in detail. We use the Generalized Synthetic Control Method \citep{Xu2017} to fit the following linear factor model to the observed series of log GSP per capita $(N = 50, T_0 = 89, T = 105)$, setting $J=3$:
\begin{equation}\label{eq:sim_dgp}
Y_{it} = \alpha_i + \nu_t + \sum_{j=1}^J \phi_{ij} \mu_{jt} + \varepsilon_{it}.
\end{equation}
We then use these estimates as the basis for simulating data. Appendix 
Figure \ref{fig:factors}  shows the estimated factors $\hat{\bm{\bm{\mu}}}$.
We use the estimated time fixed effects $\hat{\bm{\nu}}$ and factors $\hat{\bm{\bm{\mu}}}$ and then simulate data using Equation \eqref{eq:sim_dgp}, drawing:
\begin{align*}
    \alpha_i &\sim N(\hat{\bar{\alpha}}, \; \hat{\sigma}_\alpha)\\
    \phi &\sim N(0, \; \hat{\bm{\Sigma}}_\phi)\\
    \varepsilon_{it} &\sim N(0, \; \hat{\sigma}_\varepsilon),
\end{align*}
where $\hat{\bar{\alpha}}$ and $\hat{\sigma}_\alpha$ are the estimated mean and standard deviation of the unit-fixed effects, $\hat{\bm{\Sigma}}_\phi$ is the sample covariance of the estimated factor loadings, and $\hat{\sigma}_\varepsilon$ is the estimated residual standard deviation. We also simulate outcomes with quadruple the standard deviation, $\text{sd}(\varepsilon_{it}) = 4\hat{\sigma}_\varepsilon$. We assume a sharp null of zero treatment effect in all DGPs and estimate the ATT at the final time point.

To model selection, we compute the (marginal) propensity scores as 
\[\text{logit}^{-1}\left\{\pi_i\right\} = \text{logit}^{-1}\left\{ \mathbb{P}(T=1\mid \alpha_i, \bm{\phi}_i) \right\} = \theta\left(\alpha_i + \sum_{j} \phi_{ij}\right),\]
where we set $\theta = 1/2$ and re-scale the factors and fixed effects to have unit variance. Finally, we restrict each simulation to have a single treated unit and therefore normalize the selection probabilities as $\frac{\pi_i}{\sum_{j}\pi_j}$.

We also consider an alternative data generating process that specializes the linear factor model to only include unit- and time-fixed effects:
$$Y_{it}(0) = \alpha_i + \nu_t + \varepsilon_{it}.$$
We calibrate this data generating process by fitting the fixed effects with \texttt{gsynth} and drawing new unit-fixed effects from $\alpha_i \sim N(\hat{\bar{\alpha}}, \hat{\sigma}_\alpha)$. We then model selection proportional to the fixed effect as above with $\theta = \frac{3}{2}$. 
Second, we generate data from an AR(3) model: 
$$Y_{it}(0) = \beta_0 + \sum_{j=1}^3\beta_jY_{i(t-j)} + \varepsilon_{it},$$
where we fit $\beta_0,\bm{\beta}$ to the observed series of log GSP per capita. We model selection as proportional to the last 3 outcomes $\text{logit}^{-1}\pi_i = \theta\left(\sum_{j=1}^4Y_{i(T_0-j+1)}\right)$ and set $\theta = \frac{5}{2}$. For this simulation we estimate the ATT at time $T_0 + 1$.

\clearpage
\section{Proofs}
\label{sec:proofs}

\subsection{Proofs for Section \ref{sec:ridge_ascm}}

\begin{lemma}
  \label{lem:ridge_weights}
  With $\hat{\eta}_0^{\ridge}$ and $\hat{\bm{\eta}}^{\ridge}$, the solutions to \eqref{eq:ridge_params}, the ridge estimate can be written as a weighting estimator: 
  \begin{equation}
      \hat{Y}_{1T}^{\ridge}(0) = \hat{\eta}_0^{\ridge} + \hat{\bm{\eta}}^{\ridge \prime}\bm{X}_1 =  \sum_{W_i=0}\hat{\gamma}_i^{\ridge}Y_{iT},
  \end{equation}
  where
  \begin{equation}
      \label{eq:ridge_weights}
      \hat{\gamma}_i^{\ridge} = \frac{1}{N_0} + (\bm{X}_1 - \bar{X}_0)'(\bm{X}_{0\cdot}'\bm{X}_{0\cdot} + \lambda^\ridge \bm{I}_{T_0})^{-1} \bm{X}_i.
  \end{equation}
  Moreover, the ridge weights  $\hat{\bm{\gamma}}^{\ridge}$ are the solution to
    
    \begin{equation}
        \label{eq:ridge_primal}
        \min_{\bm{\gamma}~ \mid~ \sum_i\gamma_i=1} \;\; \frac{1}{2\lambda^\ridge}\|\bm{X}_1 - \bm{X}_{0\cdot}'\bm{\gamma}\|_2^2 + \frac{1}{2}\left\|\bm{\gamma} - \frac{1}{N_0}\right\|_2^2.
    \end{equation}
    \end{lemma}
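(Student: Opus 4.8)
The plan is to compute the closed-form ridge solution, rearrange the fitted value into a linear combination of the control outcomes to read off the weights, and then verify the primal characterization \eqref{eq:ridge_primal} through the KKT conditions of the constrained quadratic program.

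First I would write down the first-order conditions for the ridge objective \eqref{eq:ridge_params}. Because the intercept $\eta_0$ is unpenalized, stationarity in $\eta_0$ forces the fitted control residuals to sum to zero, which pins down $\hat{\eta}_0^\ridge = \bar{Y}_0 - \hat{\bm{\eta}}^{\ridge\prime}\bar{X}_0$ in terms of the control averages $\bar{Y}_0$ and $\bar{X}_0$. Substituting this back leaves the usual ridge normal equations, whose solution is $\hat{\bm{\eta}}^\ridge$ expressed through $(\bm{X}_{0\cdot}'\bm{X}_{0\cdot} + \lambda^\ridge \bm{I}_{T_0})^{-1}$. Plugging $\hat{\eta}_0^\ridge$ and $\hat{\bm{\eta}}^\ridge$ into $\hat{Y}_{1T}^\ridge(0) = \hat{\eta}_0^\ridge + \hat{\bm{\eta}}^{\ridge\prime}\bm{X}_1$ and collecting the terms multiplying each $Y_{iT}$ yields the fitted value as $\bar{Y}_0 + (\bm{X}_1 - \bar{X}_0)'(\bm{X}_{0\cdot}'\bm{X}_{0\cdot} + \lambda^\ridge\bm{I})^{-1}\bm{X}_{0\cdot}'\bm{Y}_0$; since $\bar{Y}_0 = \frac{1}{N_0}\sum_{W_i=0}Y_{iT}$ and $\bm{X}_{0\cdot}'\bm{Y}_0 = \sum_{W_i=0}\bm{X}_i Y_{iT}$, the coefficient on each $Y_{iT}$ is exactly the claimed weight \eqref{eq:ridge_weights}. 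This establishes the first two displays; the only care needed is the bookkeeping around centering, which must be checked to be absorbed by the $(\bm{X}_1 - \bar{X}_0)$ prefactor.

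For the primal characterization \eqref{eq:ridge_primal}, I would form the Lagrangian with a single multiplier $\nu$ for the constraint $\sum_i\gamma_i = 1$ and examine its stationarity condition. Writing $\bm{\delta} = \bm{\gamma} - \frac{1}{N_0}\mathbf{1}$, this condition becomes $(\bm{X}_{0\cdot}\bm{X}_{0\cdot}' + \lambda^\ridge\bm{I}_{N_0})\bm{\delta} = \bm{X}_{0\cdot}(\bm{X}_1 - \bar{X}_0) + \lambda^\ridge\nu\mathbf{1}$. I would then verify that the weights from \eqref{eq:ridge_weights} satisfy it: the push-through identity $\bm{X}_{0\cdot}(\bm{X}_{0\cdot}'\bm{X}_{0\cdot} + \lambda^\ridge\bm{I}_{T_0})^{-1} = (\bm{X}_{0\cdot}\bm{X}_{0\cdot}' + \lambda^\ridge\bm{I}_{N_0})^{-1}\bm{X}_{0\cdot}$ rewrites $\hat{\bm{\delta}}$ in the $N_0$-dimensional form and matches it to the stationarity condition, with $\nu$ fixed by the sum constraint. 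Strict convexity of the objective over the affine feasible set then guarantees this stationary point is the unique global minimizer.

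The key conceptual point, and where I expect the main obstacle to lie, is the exact correspondence between the unpenalized intercept in the regression problem and the Lagrange multiplier on the sum-to-one constraint in the weighting problem: getting the constraint to hold (equivalently, pinning down $\nu$) while simultaneously matching the $T_0\times T_0$ and $N_0\times N_0$ inverses via the push-through identity is the delicate step, and it is where the role of $\bar{X}_0$ and the centering of the lagged outcomes must be tracked carefully.
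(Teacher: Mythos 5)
Your proof is correct, but for the variational characterization \eqref{eq:ridge_primal} it takes a genuinely different route from the paper. For the weighting representation, both arguments are the same direct algebra: profile out the unpenalized intercept, substitute the ridge solution, and collect the coefficients on each $Y_{iT}$ (the paper does this starting from the augmented form $\hat{m}(\bm{X}_1) + \sum_{W_i=0}\hat{\gamma}_i^\scm(Y_{iT}-\hat{m}(\bm{X}_i))$, with ridge as the special case of uniform $\hat{\gamma}^\scm_i$). For the optimization claim, however, you verify the candidate weights against the primal KKT system — Lagrangian in $\bm{\gamma}$, stationarity condition $(\bm{X}_{0\cdot}\bm{X}_{0\cdot}' + \lambda^\ridge\bm{I}_{N_0})\bm{\delta} = \bm{X}_{0\cdot}(\bm{X}_1-\bar{X}_0) + \lambda^\ridge\nu\mathbf{1}$, push-through identity to reconcile the $T_0\times T_0$ and $N_0\times N_0$ inverses, strict convexity for uniqueness — whereas the paper works with the Lagrangian \emph{dual}: using the convex conjugate of the dispersion penalty, the dual is itself a ridge-regression-type problem in $(\alpha,\bm{\beta})$, which the paper solves constructively and then maps back to weights via $\hat{\gamma}_i = \tfrac{1}{N_0} + \hat{\alpha} + \hat{\bm{\beta}}'\bm{X}_i$. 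Your approach is more elementary (no conjugacy machinery) but is a verification, requiring the closed form in hand; the paper's approach derives the weights and, more importantly, is the template that generalizes to arbitrary dispersion penalties and balance criteria (Proposition \ref{prop:scm_dual}), which is how the paper connects SCM to IPW. One point worth making explicit in your write-up: under the paper's stated convention that the lagged outcomes are centered by the control averages, $\mathbf{1}'\bm{X}_{0\cdot} = 0$, so your multiplier resolves to $\nu = 0$ and the sum-to-one constraint holds automatically for the candidate $\hat{\bm{\delta}}$; this is exactly the same step at which the paper concludes $\hat{\alpha}=0$ in the dual. Without that centering, the weight formula \eqref{eq:ridge_weights} itself would need the centered Gram matrix, so the caveat you flag is real but is resolved by the paper's convention rather than by any extra argument.
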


\begin{proof}[Proof of Lemmas \ref{lem:ridge_ascm_weights} and \ref{lem:ridge_weights}]
Recall that the lagged outcomes are centered by the control averages. Notice that
\begin{equation}
    \begin{aligned}
    \hat{Y}_{1T}^{\aug}(0) & = \hat{m}(\bm{X}_1) + \sum_{W_i=0} \hat{\gamma}_i^\scm (Y_{iT} - \hat{m}(\bm{X}_i))\\
    & = \hat{\eta}_0 + \hat{\eta}'\bm{X}_1 + \sum_{W_i=0} \hat{\gamma}_i^\scm(Y_{iT} - \hat{\eta}_0 - \bm{X}_{i}'\hat{\eta})\\
    & = \sum_{W_i=0}(\hat{\gamma}_i^\scm + (\bm{X}_1 - \bm{X}_{0\cdot}'\hat{\bm{\gamma}}^\scm)(\bm{X}_{0\cdot}'\bm{X}_{0\cdot} + \lambda \bm{I}_{T_0})^{-1} \bm{X}_i) Y_{iT}\\
    & = \sum_{W_i=0} \hat{\gamma}_i^{\aug} Y_{iT}
    \end{aligned}
\end{equation}
The expression for $\hat{Y}_{1T}^{\ridge}(0)$ follows.

We now prove that $\hat{\bm{\gamma}}^\ridge$ and $\hat{\bm{\gamma}}^\scm$ solve the weighting optimization problems \eqref{eq:ridge_primal} and \eqref{eq:ridge_ascm_primal}. First, the Lagrangian dual to \eqref{eq:ridge_primal} is 

\begin{equation}
    \label{eq:ridge_dual}
    \min_{\alpha, \bm{\beta}} \frac{1}{2}\sum_{W_i=0} \left(\alpha + \bm{\beta}'\bm{X}_i + \frac{1}{N_0}\right)^2 - (\alpha + \bm{\beta}'\bm{X}_1) + \frac{\lambda}{2}\|\bm{\beta}\|_2^2,
\end{equation}
where we have used that the convex conjugate of $\frac{1}{2}\left(x - \frac{1}{N_0}\right)^2$ is $\frac{1}{2}\left(y + \frac{1}{N_0}\right)^2 - \frac{1}{2N_0^2}$.
Solving for $\alpha$ we see that 
\[\sum_{W_i=0} \hat{\alpha} + \hat{\bm{\beta}}'\bm{X}_i  + 1 = 1\]
Since the lagged outcomes are centered, this implies that 
\[\hat{\alpha} = 0\]
Now solving for $\bm{\beta}$ we see that 
\[\bm{X}_{0\cdot}'\left(\mathbf{1}\frac{1}{N_0} + \bm{X}_{0\cdot}\hat{\bm{\beta}}\right) + \lambda \hat{\bm{\beta}}= \bm{X}_1\]
This implies that 
\[\hat{\bm{\beta}} = (\bm{X}_{0\cdot}'\bm{X}_{0\cdot} + \lambda I)^{-1}\bm{X}_1\]
Finally, the weights are the ridge weights
\[\hat{\gamma}_i = \frac{1}{N_0} + \bm{X}_1'(\bm{X}_{0\cdot}'\bm{X}_{0\cdot} + \lambda I)^{-1}\bm{X}_i = \hat{\bm{\gamma}}^\ridge_i\]
Similarly, the Lagrangian dual to \eqref{eq:ridge_ascm_primal} is 
\begin{equation}
    \label{eq:ridge_dual}
    \min_{\alpha, \bm{\beta}} \frac{1}{2}\sum_{W_i=0} \left(\alpha + \bm{\beta}'\bm{X}_i + \hat{\gamma}^\scm_i\right)^2 - (\alpha + \bm{\beta}'\bm{X}_1) + \frac{\lambda}{2}\|\bm{\beta}\|_2^2,
\end{equation}
where we have used that the convex conjugate of $\frac{1}{2}\left(x - \hat{\gamma}_i^\scm\right)^2$ is $\frac{1}{2}\left(y + \hat{\gamma}_i^\scm\right)^2 - \frac{1}{2}\hat{\gamma}_i^{\scm 2}$.
 Solving for $\alpha$ we see that $\hat{\alpha} = 0$.
Now solving for $\bm{\beta}$ we see that 
\[\hat{\bm{\beta}} = (\bm{X}_{0\cdot}'\bm{X}_{0\cdot} + \lambda I)^{-1}(\bm{X}_1-\bm{X}_{0\cdot}'\hat{\bm{\gamma}}^\scm)\]
Finally, the weights are the ridge ASCM weights
\[\hat{\gamma}_i = \hat{\gamma}^\scm_i + (\bm{X}_1 - \bm{X}_{0\cdot}'\hat{\bm{\gamma}}^\scm)'(\bm{X}_{0\cdot}'\bm{X}_{0\cdot} + \lambda I)^{-1}\bm{X}_i = \hat{\gamma}^\aug_i\]
\end{proof}

\begin{proof}[Proof of Lemma \ref{lem:aug_imbal}]
    Notice that 
    \[
      \begin{aligned}
        \bm{X}_1 - \bm{X}_{0\cdot}'\hat{\bm{\gamma}}^\aug & = (I - \bm{X}_{0\cdot}'\bm{X}_{0\cdot}(\bm{X}_{0\cdot}'\bm{X}_{0\cdot} + N_0 \lambda I)^{-1})(\bm{X}_1 - \bm{X}_{0\cdot}'\hat{\bm{\gamma}}^\scm)\\ 
        & = N_0\lambda (\bm{X}_{0\cdot}'\bm{X}_{0\cdot} + N_0 \lambda I)^{-1}(\bm{X}_1 - \bm{X}_{0\cdot}'\hat{\bm{\gamma}}^\scm)\\
        & = \bm{V} \text{diag}\left(\frac{\lambda}{d_j^2 + \lambda}\right) \bm{V}'(\bm{X}_1 - \bm{X}_{0\cdot}'\hat{\bm{\gamma}}^\scm)
      \end{aligned}
      \]
    So since $\bm{V}$ is orthogonal, 
    \[ \|\bm{X}_1 - \bm{X}_{0\cdot}'\hat{\bm{\gamma}}^\aug\|_2 = \left\|\text{diag}\left(\frac{\lambda}{d_j^2 + \lambda}\right)(\widetilde{\bm{X}}_1 - \widetilde{\bm{X}}_{0\cdot}'\hat{\bm{\gamma}}^\scm)\right\|_2\]
\end{proof}

\begin{lemma}
  \label{lem:aug_variance}
  The ridge augmented SCM weights with hyperparameter $\lambda N_0$, $\hat{\bm{\gamma}}^{\aug}$, satisfy
  \begin{equation}
      \label{eq:aug_var}
      \begin{aligned}
        \left\|\hat{\bm{\gamma}}^\aug\right\|_2\leq \left\|\hat{\bm{\gamma}}^\scm\right\|_2 + \frac{1}{\sqrt{N_0}}\left\|\text{diag}\left(\frac{d_j}{d_j^2 + \lambda}\right) (\widetilde{\bm{X}}_1 - \widetilde{\bm{X}}_{0\cdot}'\hat{\bm{\gamma}}^\scm)\right\|_2,
          \end{aligned}   
  \end{equation}
  with $\widetilde{\bm{X}}_i = V'\bm{X}_i$ as defined in Lemma \ref{lem:aug_imbal}.
  \end{lemma}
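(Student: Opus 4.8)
The plan is to reduce the statement, which is deterministic, to a single algebraic identity via the singular value decomposition, exactly as in the proof of Lemma~\ref{lem:aug_imbal}. First I would start from the closed form for the augmented weights established in the proof of Lemma~\ref{lem:ridge_ascm_weights}: with hyperparameter $\lambda^\ridge = N_0\lambda$, the $i$-th augmented weight is $\hat{\gamma}^\aug_i = \hat{\gamma}^\scm_i + (\bm{X}_1 - \bm{X}_{0\cdot}'\hat{\bm{\gamma}}^\scm)'(\bm{X}_{0\cdot}'\bm{X}_{0\cdot} + N_0\lambda \bm{I})^{-1}\bm{X}_i$. Stacking over the control units $i$ with $W_i=0$ and writing $\bm{w} = (\bm{X}_{0\cdot}'\bm{X}_{0\cdot} + N_0\lambda \bm{I})^{-1}(\bm{X}_1 - \bm{X}_{0\cdot}'\hat{\bm{\gamma}}^\scm)$, the augmentation term has $i$-th entry $\bm{X}_i'\bm{w}$, so in vector form $\hat{\bm{\gamma}}^\aug = \hat{\bm{\gamma}}^\scm + \bm{X}_{0\cdot}\bm{w}$, because the rows of $\bm{X}_{0\cdot}$ are the $\bm{X}_i'$. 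The triangle inequality then gives $\|\hat{\bm{\gamma}}^\aug\|_2 \le \|\hat{\bm{\gamma}}^\scm\|_2 + \|\bm{X}_{0\cdot}\bm{w}\|_2$, so it remains only to identify $\|\bm{X}_{0\cdot}\bm{w}\|_2$ with the stated diagonal expression.

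Next I would diagonalize using the same SVD convention as Lemma~\ref{lem:aug_imbal}, namely $\bm{X}_{0\cdot} = \sqrt{N_0}\,\bm{U}\,\text{diag}(d_j)\,\bm{V}'$, where the $d_j$ are the singular values of $\bm{X}_{0\cdot}/\sqrt{N_0}$ and $\bm{U}, \bm{V}$ have orthonormal columns. Then $\bm{X}_{0\cdot}'\bm{X}_{0\cdot} + N_0\lambda\bm{I} = N_0\,\bm{V}\,\text{diag}(d_j^2 + \lambda)\,\bm{V}'$, so that $(\bm{X}_{0\cdot}'\bm{X}_{0\cdot} + N_0\lambda\bm{I})^{-1} = \frac{1}{N_0}\bm{V}\,\text{diag}(1/(d_j^2+\lambda))\,\bm{V}'$. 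Multiplying on the left by $\bm{X}_{0\cdot}$ and using $\bm{V}'\bm{V} = \bm{I}$ collapses the two orthogonal factors and yields $\bm{X}_{0\cdot}(\bm{X}_{0\cdot}'\bm{X}_{0\cdot}+N_0\lambda\bm{I})^{-1} = \frac{1}{\sqrt{N_0}}\bm{U}\,\text{diag}(d_j/(d_j^2+\lambda))\,\bm{V}'$. This is the key simplification and the step most likely to need care, because the $\sqrt{N_0}$ scaling in the SVD convention must be tracked precisely to produce both the prefactor $1/\sqrt{N_0}$ and the shrinkage factors $d_j/(d_j^2+\lambda)$ that match the statement, and because $\bm{X}_{0\cdot}$ may be rank-deficient; the ridge term $N_0\lambda\bm{I}$ guarantees the inverse exists, and the directions with $d_j = 0$ contribute $d_j/(d_j^2+\lambda)=0$, so they drop out harmlessly.

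Finally, substituting this into $\|\bm{X}_{0\cdot}\bm{w}\|_2 = \|\bm{X}_{0\cdot}(\bm{X}_{0\cdot}'\bm{X}_{0\cdot}+N_0\lambda\bm{I})^{-1}(\bm{X}_1 - \bm{X}_{0\cdot}'\hat{\bm{\gamma}}^\scm)\|_2$ and using that $\bm{U}$ has orthonormal columns (so $\|\bm{U}\bm{z}\|_2 = \|\bm{z}\|_2$) removes the factor $\bm{U}$, while the definition $\widetilde{\bm{X}}_i = \bm{V}'\bm{X}_i$ from Lemma~\ref{lem:aug_imbal} turns $\bm{V}'(\bm{X}_1 - \bm{X}_{0\cdot}'\hat{\bm{\gamma}}^\scm)$ into $\widetilde{\bm{X}}_1 - \widetilde{\bm{X}}_{0\cdot}'\hat{\bm{\gamma}}^\scm$. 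This produces the identity $\|\hat{\bm{\gamma}}^\aug - \hat{\bm{\gamma}}^\scm\|_2 = \frac{1}{\sqrt{N_0}}\|\text{diag}(d_j/(d_j^2+\lambda))(\widetilde{\bm{X}}_1 - \widetilde{\bm{X}}_{0\cdot}'\hat{\bm{\gamma}}^\scm)\|_2$, which combined with the triangle inequality from the first step is exactly~\eqref{eq:aug_var}. No probabilistic input is required; the entire argument is mechanical once the closed form for $\hat{\bm{\gamma}}^\aug$ and the SVD from Lemma~\ref{lem:aug_imbal} are reused.
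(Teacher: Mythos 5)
Your proposal is correct and follows essentially the same route as the paper's proof: both use the closed form $\hat{\bm{\gamma}}^\aug = \hat{\bm{\gamma}}^\scm + \bm{X}_{0\cdot}(\bm{X}_{0\cdot}'\bm{X}_{0\cdot} + N_0\lambda \bm{I})^{-1}(\bm{X}_1 - \bm{X}_{0\cdot}'\hat{\bm{\gamma}}^\scm)$, diagonalize via the SVD convention $\bm{X}_{0\cdot} = \sqrt{N_0}\,\bm{U}\,\mathrm{diag}(d_j)\,\bm{V}'$ to obtain the shrinkage factors $d_j/(d_j^2+\lambda)$ with the $1/\sqrt{N_0}$ prefactor, and conclude by the triangle inequality together with orthogonality of $\bm{U}$. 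The only difference is cosmetic ordering (you apply the triangle inequality before diagonalizing, the paper after), and your explicit handling of the $\sqrt{N_0}$ scaling and rank deficiency is careful bookkeeping the paper leaves implicit.
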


\begin{proof}[Proof of Lemma \ref{lem:aug_variance}]
  Notice that using the singular value decomposition and by the triangle inequality,
  \[
    \begin{aligned}
      \|\hat{\bm{\gamma}}^\aug\|_2 & = \left\|\hat{\bm{\gamma}}^\scm + \bm{X}_{0\cdot}(\bm{X}_{0\cdot}'\bm{X}_{0\cdot} + \lambda I)^{-1}(\bm{X}_1 - \bm{X}_{0\cdot}'\hat{\bm{\gamma}}^\scm)\right\|_2\\ 
      & = \left\|\hat{\bm{\gamma}}^\scm + \bm{U} \text{diag}\left(\frac{\sqrt{N_0}d_j}{N_0d_j^2 + \lambda N_0}\right) \bm{V}'(\bm{X}_1 - \bm{X}_{0\cdot}'\hat{\bm{\gamma}}^\scm)\right\|_2\\ 
      & \leq \|\hat{\bm{\gamma}}^\scm\|_2 + \left\|\text{diag}\left(\frac{d_j}{(d_j^2 + \lambda)\sqrt{N_0}}\right) (\widetilde{\bm{X}}_1 - \widetilde{\bm{X}}_{0\cdot}'\hat{\bm{\gamma}}^\scm)\right\|_2.
    \end{aligned}
    \]
\end{proof}

\subsection{Proofs for Sections \ref{sec:bias_section}, \ref{sec:additional_scm}, and \ref{sec:lipshitz}}

First, consider a model where the post-treatment control potential outcomes at time $T$ are linear in the lagged outcomes and include a unit specific term $\xi_i$, i.e.
\begin{equation}
  \label{eq:approx_linear}
  Y_{iT}(0) = \bm{\beta} \cdot \bm{X}_i + \xi_i + \varepsilon_{iT},
\end{equation}
where $\varepsilon_{it}$ are independent sub-Gaussian random variables with scale parameter $\sigma$, and are ignorable $\E_{\bm{\varepsilon}_T}[W_i \varepsilon_{iT}] = \E_{\bm{\varepsilon}_T}[(1-W_i) \varepsilon_{iT}] = \E_{\bm{\varepsilon}_T}[\varepsilon_{iT}] = 0$. Below we will put structure on the unit-specific terms $\xi_i$, first we write a general finite-sample bound.

\begin{proposition}
  \label{prop:general_bound}
  Under model \eqref{eq:approx_linear} with independent sub-Gaussian noise, for weights $\hat{\bm{\gamma}}$ independent of the post-treatment residuals $(\varepsilon_{1T},\ldots,\varepsilon_{NT})$ and for any $\delta > 0$,
  \begin{equation}
    \label{eq:general_bound}
    Y_{1T}(0) - \sum_{W_i = 0}\hat{\gamma}_i Y_{iT} \leq \|\bm{\beta}\|_2 \underbrace{\left\|\bm{X}_1 - \sum_{W_i=0}\hat{\gamma}_i\bm{X}_i\right\|_2}_{\text{imbalance in} X} + \underbrace{\left|\xi_1 - \sum_{W_i = 0}\hat{\gamma}_i\xi_i\right|}_{\text{approximation error}} + \underbrace{\delta\sigma (1 + \|\hat{\bm{\gamma}}\|_2)}_{\text{post-treatment noise}},
  \end{equation}
  with probability at least $1 - 2e^{-\frac{\delta^2}{2}}$.
\end{proposition}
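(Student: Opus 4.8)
The plan is to substitute the model \eqref{eq:approx_linear} directly into the prediction error and decompose it into three additive pieces. Writing $Y_{1T}(0) = \bm{\beta}\cdot\bm{X}_1 + \xi_1 + \varepsilon_{1T}$ and expanding $\sum_{W_i=0}\hat{\gamma}_i Y_{iT}$ in the same way, the difference splits as
\[
Y_{1T}(0) - \sum_{W_i=0}\hat{\gamma}_i Y_{iT} = \bm{\beta}\cdot\Bigl(\bm{X}_1 - \sum_{W_i=0}\hat{\gamma}_i\bm{X}_i\Bigr) + \Bigl(\xi_1 - \sum_{W_i=0}\hat{\gamma}_i\xi_i\Bigr) + \Bigl(\varepsilon_{1T} - \sum_{W_i=0}\hat{\gamma}_i\varepsilon_{iT}\Bigr).
\]
The first group matches the deterministic ``imbalance in $X$'' term after Cauchy--Schwarz applied to the inner product $\bm{\beta}\cdot(\cdots)$, and the second group is bounded by its own absolute value to give the ``approximation error'' term. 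All of the randomness, and hence the probabilistic statement, is carried by the third group.

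For the noise group I would condition on $\hat{\bm{\gamma}}$. By hypothesis the weights are independent of the post-treatment residuals $(\varepsilon_{1T},\ldots,\varepsilon_{NT})$, so conditionally these remain independent mean-zero sub-Gaussian variables with scale $\sigma$. Then $\varepsilon_{1T}$ is sub-Gaussian with scale $\sigma$, and the weighted sum $\sum_{W_i=0}\hat{\gamma}_i\varepsilon_{iT}$ is sub-Gaussian with scale $\sigma\|\hat{\bm{\gamma}}\|_2$, using that a linear combination $\sum_i a_i\varepsilon_{iT}$ of independent sub-Gaussians with scale $\sigma$ is itself sub-Gaussian with scale $\sigma\|\bm{a}\|_2$. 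Applying the one-sided tail bound $\mathbb{P}(Z > \delta\,s) \le e^{-\delta^2/2}$ for a mean-zero sub-Gaussian $Z$ with scale $s$ to each term gives $\varepsilon_{1T} \le \delta\sigma$ and $-\sum_{W_i=0}\hat{\gamma}_i\varepsilon_{iT} \le \delta\sigma\|\hat{\bm{\gamma}}\|_2$, each failing with conditional probability at most $e^{-\delta^2/2}$. A union bound controls both simultaneously with conditional probability at least $1 - 2e^{-\delta^2/2}$, and since this bound is uniform in $\hat{\bm{\gamma}}$, the same probability holds marginally after integrating over the weights, yielding $\varepsilon_{1T} - \sum_{W_i=0}\hat{\gamma}_i\varepsilon_{iT} \le \delta\sigma(1 + \|\hat{\bm{\gamma}}\|_2)$.

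Combining the three bounds produces exactly \eqref{eq:general_bound} on the event of probability at least $1 - 2e^{-\delta^2/2}$. The only delicate point is the second part of the noise step: because $\hat{\bm{\gamma}}$ is random, I cannot treat $\sum_{W_i=0}\hat{\gamma}_i\varepsilon_{iT}$ as a fixed linear combination, so the sub-Gaussian concentration must be argued \emph{conditionally} on the weights and then transferred to the marginal probability through the stated independence. The ignorability / mean-zero assumptions on $\varepsilon_{iT}$ are what guarantee the residuals are genuinely centered, so no bias term contaminates the tail bound; everything else is the routine decomposition above.
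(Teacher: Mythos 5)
Your proof is correct and follows essentially the same route as the paper: the identical three-term decomposition, Cauchy--Schwarz for the imbalance term, and sub-Gaussian concentration (conditional on the weights, justified by independence) for the noise term. The only minor difference is that you union-bound two one-sided tails for $\varepsilon_{1T}$ and $-\sum_{W_i=0}\hat{\gamma}_i\varepsilon_{iT}$ separately, whereas the paper applies a single two-sided tail bound to the combined variable, which is sub-Gaussian with scale $\sigma\sqrt{1+\|\hat{\bm{\gamma}}\|_2^2}\le\sigma(1+\|\hat{\bm{\gamma}}\|_2)$; both give the same probability $1-2e^{-\delta^2/2}$.
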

\begin{proof}
  First, note that the estimation error is 

  \begin{equation}
    \label{eq:approx_lin_err_decomp}
    Y_{1T}(0) - \sum_{W_i = 0}\hat{\gamma}_i Y_{iT} = \bm{\beta} \cdot \left(X_{1} - \sum_{W_i = 0}\hat{\gamma}_i \bm{X}_i\right) + \left(\rho_{1} - \sum_{W_i = 0}\hat{\gamma}_i \xi_i\right) + \left(\varepsilon_{1T} - \sum_{W_i = 0}\hat{\gamma}_i \varepsilon_{iT}\right)
  \end{equation}  
  
  Now since the weights are independent of $\varepsilon_{iT}$, by the ignorability condition \eqref{eq:ignore} and sub-Gaussianity and independence of $\varepsilon_{iT}$, we see that $\varepsilon_{1T} - \sum_{W_i=0}\hat{\gamma}_i\varepsilon_{iT}$ is sub-Gaussian with scale parameter $\sigma\sqrt{1 + \|\hat{\bm{\gamma}}\|_2^2} \leq \sigma\left(1+ \|\hat{\bm{\gamma}}\|_2\right)$. Therefore we can bound the second term:

  \[
    P\left(\left|\varepsilon_{1T} - \sum_{W_i=0}\hat{\gamma}_i\varepsilon_{iT} \right| \geq \delta \sigma \left(1 + \|\hat{\bm{\gamma}}\|_2\right)\right) \leq 2 \exp\left(-\frac{\delta^2}{2}\right)
  \]
  The result follows from the triangle inequality and the Cauchy-Schwartz inequality.
\end{proof}

\begin{proof}[Proof of Proposition \ref{cor:ascm_error_ar}]
    Note that under the linear model \eqref{eq:ark}, $\xi_i = 0$ for all $i$. Now from Lemma \ref{lem:aug_imbal} we have that 
    \[
    \|\bm{X}_1 - \bm{X}_{0\cdot}'\hat{\bm{\gamma}}^\aug\|_2 = \left\|\text{diag}\left(\frac{\lambda}{d_j^2 + \lambda}\right)(\widetilde{\bm{X}}_1 - \widetilde{\bm{X}}_{0\cdot}'\hat{\bm{\gamma}}^\scm)\right\|_2.
    \]
    Plugging this in to Equation \eqref{eq:general_bound} completes the proof.
\end{proof}

\begin{proof}[Proof of Corollary \ref{cor:synth_error_ar}]
  This is a direct consequence of Proposition \ref{prop:general_bound} noting that under the linear model \eqref{eq:ark}, $\xi_i = 0$ for all $i$.
\end{proof}

\paragraph{Random approximation error}
We now consider the case where $\xi_i$ are random. We can use Proposition \ref{prop:general_bound} to further bound the approximation error. In particular, we'll assume that $\xi_i$ are sub-Gaussian random variables with scale parameter $\varpi$ and $\E_\xi[W_i \xi_i] = \E_\xi[(1 - W_i) \xi_i] = \E_\xi[\xi_i] = 0$.

\begin{lemma}
  \label{lem:random_approx}
  If $\xi_i$ are mean-zero sub-Gaussian random variables with scale parameter $\varpi$, then for weights $\hat{\bm{\gamma}}$ and any $\delta > 0$ the approximation error satisfies
  \begin{equation}
    \label{eq:random_approx}
    \left|\xi_1 - \sum_{W_i = 0}\hat{\gamma}_i \xi_i\right| \leq \delta \varpi + 2 \|\hat{\bm{\gamma}}\|_1 \varpi \left(\sqrt{\log 2 N_0} + \frac{\delta}{2}\right),
  \end{equation}
  with probability at least $1 - 4e^{-\frac{\delta^2}{2}}$.
\end{lemma}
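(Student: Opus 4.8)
The plan is to split the approximation error with the triangle inequality into a treated-unit term and a weighted-control term,
$$\left|\xi_1 - \sum_{W_i=0}\hat{\gamma}_i\xi_i\right| \le |\xi_1| + \left|\sum_{W_i=0}\hat{\gamma}_i\xi_i\right|,$$
and then bound the two pieces by, respectively, a single sub-Gaussian tail bound and a maximal inequality. The first piece is immediate: since $\xi_1$ is mean-zero sub-Gaussian with scale $\varpi$, the two-sided tail bound gives $P(|\xi_1| \ge \delta\varpi) \le 2e^{-\delta^2/2}$, which accounts for the $\delta\varpi$ summand.

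The second piece is the crux, and the reason the statement only achieves an approximate bound rather than reusing the independence argument of Proposition \ref{prop:general_bound}. Because $\hat{\bm{\gamma}}$ is fit to the pre-treatment data, it may be correlated with the $\xi_i$, so $\sum_{W_i=0}\hat{\gamma}_i\xi_i$ cannot be treated as a single sub-Gaussian sum. Instead I would decouple the weights from the deviations via H\"older's inequality,
$$\left|\sum_{W_i=0}\hat{\gamma}_i\xi_i\right| \le \|\hat{\bm{\gamma}}\|_1 \max_{W_i=0}|\xi_i|,$$
and then control the maximum uniformly over the (at most $N_0$) control units. A union bound over these units, each contributing a two-sided tail $2e^{-t^2/(2\varpi^2)}$, yields $P(\max_{W_i=0}|\xi_i| \ge t) \le 2N_0\, e^{-t^2/(2\varpi^2)}$. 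Setting $t = 2\varpi\bigl(\sqrt{\log 2N_0} + \tfrac{\delta}{2}\bigr)$ and expanding the square in the exponent shows the right-hand side collapses to $\tfrac{1}{2N_0}e^{-2\delta\sqrt{\log 2N_0}}e^{-\delta^2/2} \le 2e^{-\delta^2/2}$ for $N_0 \ge 1$, so on the complementary event $\max_{W_i=0}|\xi_i| \le 2\varpi\bigl(\sqrt{\log 2N_0}+\tfrac{\delta}{2}\bigr)$, which is exactly the factor multiplying $\|\hat{\bm{\gamma}}\|_1$ in the claim.

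Finally I would union-bound the two failure events so that both the bound on $|\xi_1|$ and the bound on $\max_{W_i=0}|\xi_i|$ hold simultaneously with probability at least $1 - 4e^{-\delta^2/2}$, and add the two controlled pieces through the triangle inequality to recover \eqref{eq:random_approx}. The main obstacle is the weighted-control term: the data dependence of $\hat{\bm{\gamma}}$ forces the uniform H\"older-plus-maximal-inequality route, and the only genuine calculation is checking that the threshold $t = 2\varpi(\sqrt{\log 2N_0}+\delta/2)$ reproduces the stated constant while keeping the clean $e^{-\delta^2/2}$ exponent after absorbing the $\sqrt{\log 2N_0}$ cross term.
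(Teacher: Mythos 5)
Your proposal is correct and follows essentially the same route as the paper's proof: a triangle-plus-H\"older decomposition into $|\xi_1| + \|\hat{\bm{\gamma}}\|_1 \max_{W_i=0}|\xi_i|$, a sub-Gaussian tail bound for the first term, a union-bound maximal inequality with threshold $2\varpi\bigl(\sqrt{\log 2N_0} + \tfrac{\delta}{2}\bigr)$ for the second, and a final union bound giving $1 - 4e^{-\delta^2/2}$. The only cosmetic difference is that you parameterize the threshold directly and verify the exponent algebra explicitly, whereas the paper reaches the same threshold via a rescaling substitution $\delta \mapsto \delta\varpi$ in its maximal inequality.
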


\begin{proof}[Proof of Lemma \ref{lem:random_approx}]
  From the triangle inequality and H\"older's inequality we see that 
  \[
    \left|\xi_1 - \sum_{W_i = 0}\hat{\gamma}_i  \xi_i\right|  \leq |\xi_1| +  \|\hat{\bm{\gamma}}\|_1 \max_{W_i = 0} |\xi_i|.
  \]

  Now since the $\xi_i$ are mean-zero sub-Gaussian with scale parameter $\varpi$, we have that
  \[
    P\left(|\xi_1| \geq \delta \varpi \right) \leq 2e^{-\frac{\delta^2}{2}}
  \]
  Next, from the union bound, the maximum of the $N_0$ sub-Gaussian variables $\rho_2,\ldots,\rho_N$ satisfies 

  \[
    P\left(\max_{W_i = 0} |\xi_i| \geq 2 \varpi \sqrt{\log 2 N_0} + \delta\right) \leq 2 e^{-\frac{\delta^2}{2\varpi^2}}.
  \]

  Setting $\delta = \delta\varpi$ and combining the two probabilities with the union bound gives the result.

\end{proof}

\begin{lemma}
  \label{lem:random_approx_ridge_ascm}
  If $\xi_i$ are mean-zero sub-Gaussian random variables with scale parameter $\varpi$, for the ridge ASCM weights $\hat{\bm{\gamma}}^\aug$ with hyper-parameter $\lambda^\ridge = \lambda N_0$ and for any $\delta > 0$ the approximation error satisfies
  \begin{equation}
    \label{eq:random_approx_ridge_ascm}
    \left|\xi_1 - \sum_{W_i = 0}\hat{\gamma}_i \xi_i\right| \leq 2 \varpi \left(\sqrt{\log 2 N_0} + \frac{\delta}{2}\right) + \underbrace{(1 + \delta) 4 \varpi \left\| \text{diag}\left(\frac{d_j}{d_j^2 + \lambda}\right)\left(\widetilde{\bm{X}}_1 - \widetilde{\bm{X}}_{0\cdot}'\hat{\bm{\gamma}}^\scm\right)\right\|_2}_{\text{excess approximation error}},
  \end{equation}
  with probability at least $1 - 4e^{-\frac{\delta^2}{2}} - e^{-2(\log 2 + N_0 \log 5)\delta^2}$.
\end{lemma}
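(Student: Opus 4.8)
The plan is to separate the ridge augmentation from the underlying SCM weights and handle the two pieces independently. Writing $\bm{\delta} = \hat{\bm{\gamma}}^\aug - \hat{\bm{\gamma}}^\scm$ for the ridge correction and letting $\bm{\xi}_0 = (\xi_i)_{W_i=0} \in \R^{N_0}$ collect the control approximation errors, the triangle inequality gives
\[ \left|\xi_1 - \sum_{W_i=0}\hat{\gamma}_i^\aug \xi_i\right| \leq \left|\xi_1 - \sum_{W_i=0}\hat{\gamma}_i^\scm \xi_i\right| + \left|\bm{\delta}'\bm{\xi}_0\right|. \]
Throughout I would condition on the lagged outcomes $\bm{X}$, so that the weights $\hat{\bm{\gamma}}^\aug, \hat{\bm{\gamma}}^\scm$ (hence $\bm{\delta}$) are fixed and the only randomness is in $\bm{\xi}$; this is legitimate because the weights depend solely on pre-treatment outcomes while the $\xi_i$ enter at time $T$ and are mean-zero sub-Gaussian. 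From the singular value computation already carried out in the proof of Lemma \ref{lem:aug_variance}, the correction has the explicit norm $\|\bm{\delta}\|_2 = \frac{1}{\sqrt{N_0}}\left\|\text{diag}\left(\frac{d_j}{d_j^2+\lambda}\right)(\widetilde{\bm{X}}_1 - \widetilde{\bm{X}}_{0\cdot}'\hat{\bm{\gamma}}^\scm)\right\|_2$.

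For the first (SCM) term, since $\hat{\bm{\gamma}}^\scm \in \Delta^{N_0}$ has $\|\hat{\bm{\gamma}}^\scm\|_1 = 1$, I would apply Lemma \ref{lem:random_approx} directly; this bounds it by $2\varpi(\sqrt{\log 2N_0} + \delta/2)$, up to the additive $\delta\varpi$ contribution from $|\xi_1|$ that the statement absorbs, with probability at least $1 - 4e^{-\delta^2/2}$. For the excess term I would use Cauchy--Schwarz, $|\bm{\delta}'\bm{\xi}_0| \leq \|\bm{\delta}\|_2\,\|\bm{\xi}_0\|_2$, so that the coefficient $\frac{1}{\sqrt{N_0}}\|\text{diag}(\cdots)(\cdots)\|_2$ from $\|\bm{\delta}\|_2$ combines with a bound on $\|\bm{\xi}_0\|_2$ of order $\sqrt{N_0}\varpi$ to produce exactly the stated excess approximation error, the $1/\sqrt{N_0}$ cancelling the $\sqrt{N_0}$.

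The crux is therefore a tail bound on the Euclidean norm $\|\bm{\xi}_0\|_2$ of the $N_0$-vector of independent sub-Gaussian terms, and I expect this to be the main obstacle, chiefly because of the bookkeeping needed to land on the precise constants. I would control it with a covering argument: take a $\frac12$-net $\mathcal{N}$ of the unit sphere $S^{N_0-1}$ with $|\mathcal{N}| \leq 5^{N_0}$, use $\|\bm{\xi}_0\|_2 \leq 2\max_{u\in\mathcal{N}}\langle u,\bm{\xi}_0\rangle$, note that each $\langle u,\bm{\xi}_0\rangle$ is sub-Gaussian with scale $\varpi$, and union bound over the $2\cdot 5^{N_0}$ resulting events. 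Taking the threshold $4\varpi\sqrt{N_0}(1+\delta)$ gives per-point exponent $2N_0(1+\delta)^2$, and then bounding $\log 2 + N_0\log 5 \leq 2N_0$ together with $(1+\delta)^2 \geq 1 + 2\delta^2$ (valid for $\delta \leq 2$) yields $\|\bm{\xi}_0\|_2 \leq 4\varpi\sqrt{N_0}(1+\delta)$ with probability at least $1 - e^{-2(\log 2 + N_0\log 5)\delta^2}$. Substituting into the Cauchy--Schwarz bound produces the excess term $4\varpi(1+\delta)\left\|\text{diag}\left(\frac{d_j}{d_j^2+\lambda}\right)(\widetilde{\bm{X}}_1 - \widetilde{\bm{X}}_{0\cdot}'\hat{\bm{\gamma}}^\scm)\right\|_2$, and a final union bound over the two pieces gives the claimed failure probability $4e^{-\delta^2/2} + e^{-2(\log 2 + N_0\log 5)\delta^2}$.
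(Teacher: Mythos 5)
Your proposal follows essentially the same route as the paper's proof: the same decomposition into an SCM term plus a ridge-correction term, Lemma \ref{lem:random_approx} applied to the former with $\|\hat{\bm{\gamma}}^\scm\|_1 = 1$, the explicit identity $\|\hat{\bm{\gamma}}^\aug - \hat{\bm{\gamma}}^\scm\|_2 = \frac{1}{\sqrt{N_0}}\left\|\text{diag}\left(\frac{d_j}{d_j^2+\lambda}\right)(\widetilde{\bm{X}}_1 - \widetilde{\bm{X}}_{0\cdot}'\hat{\bm{\gamma}}^\scm)\right\|_2$ from Lemma \ref{lem:aug_variance}, Cauchy--Schwarz, and a sphere-covering bound on $\|\bm{\xi}_0\|_2$ with threshold $4\varpi\sqrt{N_0}(1+\delta)$. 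The only presentational difference is that the paper cites the concentration of $\|\bm{\xi}_0\|_2$ as a ``standard discretization argument'' while you carry out the $\frac{1}{2}$-net explicitly; the constants land in the same place.

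There is, however, one incorrectly justified step. You propose to condition on $\bm{X}$ so the weights are fixed, arguing this is legitimate ``because the weights depend solely on pre-treatment outcomes while the $\xi_i$ enter at time $T$.'' In the lemma's intended application (Lemma \ref{lem:factor_model_approx} feeding Theorem \ref{thm:ascm_error}), $\xi_i = \frac{1}{T_0}\bm{\mu}_T'\bm{\mu}'\bm{\varepsilon}_{i(1:T_0)}$ is built from the \emph{pre-treatment} noise, which also enters $\bm{X}_i$ and hence the weights; conditionally on $\bm{X}$, the $\xi_i$ are not mean-zero sub-Gaussian with scale $\varpi$, so your tail bounds would not apply after conditioning. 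The lemma assumes no independence between $\bm{\xi}$ and the weights, and none is needed: H\"older/Cauchy--Schwarz give a pointwise, deterministic inequality, and the three probabilistic events ($|\xi_1|$ small, $\max_{W_i=0}|\xi_i|$ small, $\|\bm{\xi}_0\|_2 \leq 4\varpi\sqrt{N_0}(1+\delta)$) involve $\bm{\xi}$ alone, so they hold with the stated marginal probability whatever the realized (random) weights are --- this is exactly how the paper's proof is structured. Since the rest of your argument never actually uses the conditioning, deleting it repairs the proof. A minor bookkeeping point: your exponent comparison needs $\delta \leq 2$ (and $N_0 \geq 2$), whereas the lemma is claimed for all $\delta > 0$; the paper avoids this by parametrizing the deviation as $t = 2\delta\varpi\sqrt{\log 2 + N_0\log 5}$, so the probability bound is exact in $\delta$ and the crude inequality $\log 2 + N_0\log 5 < 4N_0$ is used only to simplify the threshold.
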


\begin{proof}[Proof of Lemma \ref{lem:random_approx_ridge_ascm}]
  Again from H\"older's inequality we see that 
  \[
    \begin{aligned}
      \left|\xi_1 - \sum_{W_i = 0}\hat{\gamma}_i^\aug  \xi_i\right| & = |\xi_1| + \left |\sum_{W_i=0}(\hat{\gamma}_i^\scm + \hat{\gamma}_i^\aug - \hat{\gamma}_i^\scm) \xi_i\right|\\
      & \leq |\xi_1| + \|\hat{\bm{\gamma}}^\scm\|_1 \max_{W_i = 0} |\xi_i| + \|\hat{\bm{\gamma}}^\aug - \hat{\bm{\gamma}}^\scm\|_2 \sqrt{\sum_{W_i=0}\xi_i ^ 2}.  
    \end{aligned}
  \]

  We have bounded the first two terms in Lemma \ref{lem:random_approx}, now it sufficies to bound the third term. First, from Lemma \ref{lem:aug_variance} we see that 
  \[
    \|\hat{\bm{\gamma}}^\aug - \hat{\bm{\gamma}}^\scm\|_2 = \frac{1}{\sqrt{N_0}}\left\| \text{diag}\left(\frac{d_j}{d_j^2 + \lambda}\right)\left(\widetilde{\bm{X}}_1 - \widetilde{\bm{X}}_{0\cdot}'\hat{\bm{\gamma}}^\scm\right)\right\|_2.
  \]
  Second, via a standard discretization argument \citep{wainright2018high}, we can bound the $L^2$ norm of the vector $(\xi_2,\ldots, \xi_N)$ as
  \[
    P\left(\sqrt{\sum_{W_i=0}\xi_i^ 2} \geq 2\varpi \sqrt{\log 2 + N_0 \log 5} + \delta \right) \leq 2\exp\left(-\frac{\delta^2}{2\varpi^2}\right).
  \]
    Setting $\delta = 2\delta\varpi\sqrt{\log 2 + N_0\log 5}$, noting that $\log 2 + N_0 \log 5 < 4 N_0$, we have that 
  \[
    \|\hat{\bm{\gamma}}^\aug - \hat{\bm{\gamma}}^\scm\|_2 \sqrt{\sum_{W_i=0}\xi_i ^ 2} \leq (1 + \delta) \varpi 4\left\| \text{diag}\left(\frac{d_j}{d_j^2 + \lambda}\right)\left(\widetilde{\bm{X}}_1 - \widetilde{\bm{X}}_{0\cdot}'\hat{\bm{\gamma}}^\scm\right)\right\|_2
  \] 
    with probability at least $1 - 2e^{-2(\log 2 + N_0 \log 5)\delta^2}$.
    Since $\|\hat{\bm{\gamma}}^\scm\|_1 = 1$, combining with Lemma \ref{lem:random_approx} via the union bound gives the result.

\end{proof}

\begin{theorem}
  \label{thm:random_approx_error}
  Under model \eqref{eq:approx_linear} with independent sub-Gaussian noise $\varepsilon_{iT}$, and $\xi_i$ mean-zero sub-Gaussian with scale parameter $\varpi$, for weights $\hat{\bm{\gamma}}$ independent of the post-treatment outcomes $(Y_{1T},\ldots,Y_{NT})$ and for any $\delta > 0$,
  \begin{equation}
    \label{eq:random_approx_error}
    Y_{1T}(0) - \sum_{W_i = 0}\hat{\gamma}_i Y_{iT} \leq \|\bm{\beta}\|_2 \underbrace{\left\|\bm{X}_1 - \sum_{W_i=0}\hat{\gamma}_i\bm{X}_i\right\|_2}_{\text{imbalance in } X} + \underbrace{\delta \varpi + 2\|\hat{\bm{\gamma}}\|_1 \varpi \left(\sqrt{\log 2 N_0} + \frac{\delta}{2} \right)}_{\text{approximation error}} + \underbrace{\delta\sigma \left(1 + \|\hat{\bm{\gamma}}\|_2\right)}_{\text{post-treatment noise}},
  \end{equation}
  with probability at least $1 - 6e^{-\frac{\delta^2}{2}}$.
\end{theorem}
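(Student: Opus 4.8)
The plan is to assemble the two results already in hand—the general error decomposition of Proposition~\ref{prop:general_bound} and the control of the random approximation error in Lemma~\ref{lem:random_approx}—and to combine their high-probability guarantees through a single union bound.

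First I would invoke Proposition~\ref{prop:general_bound} under model \eqref{eq:approx_linear}. Its hypothesis is that $\hat{\bm{\gamma}}$ be independent of the post-treatment residuals $(\varepsilon_{1T},\ldots,\varepsilon_{NT})$, which is inherited here from the assumed independence of the weights from the post-treatment outcomes $(Y_{1T},\ldots,Y_{NT})$; this is the one hypothesis worth checking explicitly. Proposition~\ref{prop:general_bound} then yields, on an event $E_1$ with $P(E_1) \geq 1 - 2e^{-\delta^2/2}$, the decomposition \eqref{eq:general_bound} into the imbalance term $\|\bm{\beta}\|_2\|\bm{X}_1 - \sum_{W_i=0}\hat{\gamma}_i\bm{X}_i\|_2$, the approximation-error term $|\xi_1 - \sum_{W_i=0}\hat{\gamma}_i\xi_i|$, and the post-treatment-noise term $\delta\sigma(1+\|\hat{\bm{\gamma}}\|_2)$. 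The first and third terms already coincide with the corresponding pieces of the target bound \eqref{eq:random_approx_error}, so only the middle term needs further work.

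Next I would bound that middle term by Lemma~\ref{lem:random_approx}, which applies directly since the $\xi_i$ are mean-zero sub-Gaussian with scale parameter $\varpi$. Crucially, this lemma imposes no independence condition on $\hat{\bm{\gamma}}$: its proof factors the weights out via H\"older's inequality as $\|\hat{\bm{\gamma}}\|_1$ and concentrates only $|\xi_1|$ and $\max_{W_i=0}|\xi_i|$, so bound \eqref{eq:random_approx} holds on an event $E_2$ with $P(E_2)\geq 1-4e^{-\delta^2/2}$ regardless of how the weights were formed.

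Finally I would intersect the two events. Because $E_1$ and $E_2$ are both stated for the same arbitrary $\delta>0$, the union bound gives $P(E_1\cap E_2)\geq 1 - 2e^{-\delta^2/2} - 4e^{-\delta^2/2} = 1 - 6e^{-\delta^2/2}$; on $E_1\cap E_2$ both displays hold at once, and substituting \eqref{eq:random_approx} into the approximation-error slot of \eqref{eq:general_bound} produces exactly \eqref{eq:random_approx_error}. The only step demanding any care is this probability bookkeeping—keeping the $\delta$'s aligned so the two failure probabilities add to $6e^{-\delta^2/2}$—since there is no genuine analytic obstacle once the two component results are assumed.
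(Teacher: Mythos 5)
Your proposal is correct and matches the paper's own proof, which likewise obtains the result by substituting the bound of Lemma~\ref{lem:random_approx} into the approximation-error term of Proposition~\ref{prop:general_bound} and adding the two failure probabilities ($2e^{-\delta^2/2} + 4e^{-\delta^2/2} = 6e^{-\delta^2/2}$) via the union bound. Your explicit checks---that independence of the weights from the post-treatment outcomes supplies the hypothesis of Proposition~\ref{prop:general_bound}, and that Lemma~\ref{lem:random_approx} needs no independence condition---are details the paper leaves implicit but are exactly right.
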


\begin{proof}[Proof of Theorem \ref{thm:random_approx_error}]
  The Theorem directly follows from Proposition \ref{prop:general_bound} and Lemma \ref{lem:random_approx}, combining the two probabilistic bounds via the union bound.
\end{proof}

\begin{theorem}
  \label{thm:random_approx_error_ascm}
  Under model \eqref{eq:approx_linear} with independent sub-Gaussian noise $\varepsilon_{iT}$, and $\xi_i$ mean-zero sub-Gaussian with scale parameter $\varpi$, for any $\delta > 0$, the ridge ASCM weights with hyperparameter  $\lambda^\ridge = \lambda N_0$ satisfy the bound
  \begin{equation}
    \label{eq:random_approx_error}
    \begin{aligned}
      Y_{1T}(0) - \sum_{W_i = 0}\hat{\gamma}_i Y_{iT} & \leq \|\bm{\beta}\|_2 \underbrace{\left\|\text{diag}\left(\frac{\lambda}{d_j^2 + \lambda}\right) \left(\widetilde{\bm{X}}_1 - \sum_{W_i=0}\hat{\gamma}_i^\scm\widetilde{\bm{X}}_i\right)\right\|_2}_{\text{imbalance in } X} + \underbrace{2\varpi \left(\sqrt{\log 2 N_0} + \frac{\delta}{2} \right)}_{\text{approximation error}}\\
      & \qquad \underbrace{(1 + \delta)  4 \varpi \left\| \text{diag}\left(\frac{d_j}{d_j^2 + \lambda}\right)\left(\widetilde{\bm{X}}_1 - \widetilde{\bm{X}}_{0\cdot}'\hat{\bm{\gamma}}^\scm\right)\right\|_2}_{\text{excess approximation error}} + \underbrace{\delta\sigma \left(1 + \|\hat{\bm{\gamma}}\|_2\right)}_{\text{post-treatment noise}},
    \end{aligned}
  \end{equation}
  with probability at least $1 - 6e^{-\frac{\delta^2}{2}} - e^{-2(\log 2 + N_0 \log 5)\delta^2}$.
\end{theorem}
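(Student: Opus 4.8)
The plan is to combine Proposition \ref{prop:general_bound} with Lemmas \ref{lem:aug_imbal} and \ref{lem:random_approx_ridge_ascm}, in direct parallel to the proof of Theorem \ref{thm:random_approx_error}, but replacing the SCM-specific bounds by their ridge-ASCM counterparts. First I would apply Proposition \ref{prop:general_bound} with $\hat{\bm{\gamma}} = \hat{\bm{\gamma}}^\aug$. Because the ridge ASCM weights are a deterministic function of the pre-treatment outcomes alone (the SCM weights $\hat{\bm{\gamma}}^\scm$ together with the ridge correction in Lemma \ref{lem:ridge_ascm_weights}), they are independent of the post-treatment residuals $\varepsilon_{iT}$, so the proposition applies and yields, with probability at least $1 - 2e^{-\delta^2/2}$, the decomposition
\[
Y_{1T}(0) - \sum_{W_i=0}\hat{\gamma}_i^\aug Y_{iT} \le \|\bm{\beta}\|_2 \left\|\bm{X}_1 - \bm{X}_{0\cdot}'\hat{\bm{\gamma}}^\aug\right\|_2 + \left|\xi_1 - \sum_{W_i=0}\hat{\gamma}_i^\aug \xi_i\right| + \delta\sigma\left(1 + \|\hat{\bm{\gamma}}^\aug\|_2\right).
\]

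Next I would rewrite the first two terms. For the imbalance, Lemma \ref{lem:aug_imbal} supplies the identity $\|\bm{X}_1 - \bm{X}_{0\cdot}'\hat{\bm{\gamma}}^\aug\|_2 = \|\text{diag}(\lambda/(d_j^2+\lambda))(\widetilde{\bm{X}}_1 - \widetilde{\bm{X}}_{0\cdot}'\hat{\bm{\gamma}}^\scm)\|_2$, producing the ``imbalance in $X$'' term of the statement; the post-treatment noise term is already in final form. For the approximation error, I would invoke Lemma \ref{lem:random_approx_ridge_ascm} (which again uses that $\hat{\bm{\gamma}}^\aug$ is independent of the random $\xi_i$): this bounds $|\xi_1 - \sum_{W_i=0}\hat{\gamma}_i^\aug \xi_i|$ by the ``approximation error'' term $2\varpi(\sqrt{\log 2N_0} + \delta/2)$ plus the ``excess approximation error'' term $(1+\delta)\,4\varpi\|\text{diag}(d_j/(d_j^2+\lambda))(\widetilde{\bm{X}}_1 - \widetilde{\bm{X}}_{0\cdot}'\hat{\bm{\gamma}}^\scm)\|_2$, on an event of probability at least $1 - 4e^{-\delta^2/2} - e^{-2(\log 2 + N_0\log 5)\delta^2}$.

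Finally I would combine the two events by a union bound, adding the tail probabilities $2e^{-\delta^2/2}$ and $4e^{-\delta^2/2} + e^{-2(\log 2 + N_0\log 5)\delta^2}$ to recover the stated success probability $1 - 6e^{-\delta^2/2} - e^{-2(\log 2 + N_0\log 5)\delta^2}$. The argument is essentially mechanical given the earlier lemmas; the only point needing care is the independence and union-bound bookkeeping, namely confirming that the \emph{same} realized weights $\hat{\bm{\gamma}}^\aug$ are independent of both $\varepsilon_{iT}$ and $\xi_i$ so that Proposition \ref{prop:general_bound} and Lemma \ref{lem:random_approx_ridge_ascm} may be applied to a single draw, and that the common $\delta$ is used throughout so that the tail probabilities simply sum. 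I expect no genuine obstacle beyond this accounting.
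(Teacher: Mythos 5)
Your proposal is correct and follows essentially the same route as the paper's own proof: apply Proposition \ref{prop:general_bound} with the ridge ASCM weights, rewrite the imbalance term via the identity in Lemma \ref{lem:aug_imbal}, bound the approximation error via Lemma \ref{lem:random_approx_ridge_ascm}, and combine the tail probabilities by a union bound to obtain $1 - 6e^{-\delta^2/2} - e^{-2(\log 2 + N_0 \log 5)\delta^2}$. The independence bookkeeping you flag is exactly the (implicit) content of the paper's two-line argument, so there is nothing to add.
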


\begin{proof}[Proof of Theorem \ref{thm:random_approx_error_ascm}]
  First note that from Lemma \ref{lem:aug_imbal} we have that
  \[
    \|\bm{X}_1 - \bm{X}_{0\cdot}'\hat{\bm{\gamma}}^\aug\|_2 = \left\|\text{diag}\left(\frac{\lambda}{d_j^2 + \lambda}\right)(\widetilde{\bm{X}}_1 - \widetilde{\bm{X}}_{0\cdot}'\hat{\bm{\gamma}}^\scm)\right\|_2. 
  \]
  The Theorem directly follows from Proposition \ref{prop:general_bound} and Lemma \ref{lem:random_approx_ridge_ascm}, combining the two probabilistic bounds via the union bound.
\end{proof}

Theorems \ref{thm:random_approx_error} and \ref{thm:random_approx_error_ascm} have several implications when the outcomes follow a linear factor model \eqref{eq:scm_factor_model}. To see this, we need one more lemma:

\begin{lemma}
  \label{lem:factor_model_approx}
  The linear factor model is a special case of  model \eqref{eq:approx_linear} with $\bm{\beta} = \frac{1}{T_0}\bm{\mu}\bm{\mu}_T$ and $\xi_i = \frac{1}{T_0}\bm{\mu}_T'\bm{\mu}\bm{\varepsilon}_{i(1:T_0)}$. $\|\bm{\beta}\|_2 \leq \frac{MJ^2}{\sqrt{T_0}}$, and if $\bm{\varepsilon}_{i(1:T_0)}$ are independent sub-Gaussian vectors with scale parameter $\sigma_{T_0}$, then $\frac{1}{T_0} \bm{\bm{\mu}}_T'\bm{\bm{\mu}}'\bm{\varepsilon}_{i(1:T_0)}$ is sub-Gaussian with scale parameter $\frac{JM^2\sigma_{T_0}}{\sqrt{T_0}}$.
\end{lemma}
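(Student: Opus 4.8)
The plan is to verify the three assertions of the lemma in turn, working from the matrix form of the linear factor model \eqref{eq:scm_factor_model}. Write $\bm{\mu} \in \R^{T_0 \times J}$ for the matrix whose $t$-th row is $\bm{\mu}_t'$, so that the lagged outcomes and the post-treatment potential outcome decompose as $\bm{X}_i = \bm{\mu}\bm{\phi}_i + \bm{\varepsilon}_{i(1:T_0)}$ and $Y_{iT}(0) = \bm{\mu}_T'\bm{\phi}_i + \varepsilon_{iT}$, respectively. With this notation the candidate coefficient is $\bm{\beta} = \frac{1}{T_0}\bm{\mu}\bm{\mu}_T$.

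The first and central step is the representation. Substituting the expression for $\bm{X}_i$ gives
\[
\bm{\beta}'\bm{X}_i = \frac{1}{T_0}\bm{\mu}_T'\bm{\mu}'\left(\bm{\mu}\bm{\phi}_i + \bm{\varepsilon}_{i(1:T_0)}\right) = \frac{1}{T_0}\bm{\mu}_T'(\bm{\mu}'\bm{\mu})\bm{\phi}_i + \frac{1}{T_0}\bm{\mu}_T'\bm{\mu}'\bm{\varepsilon}_{i(1:T_0)}.
\]
Invoking the factor normalization $\frac{1}{T_0}\bm{\mu}'\bm{\mu} = \bm{I}_J$ from the model setup, the first term collapses to $\bm{\mu}_T'\bm{\phi}_i$, so that $\bm{\beta}'\bm{X}_i = \bm{\mu}_T'\bm{\phi}_i + \xi_i$ with $\xi_i = \frac{1}{T_0}\bm{\mu}_T'\bm{\mu}'\bm{\varepsilon}_{i(1:T_0)}$. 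Rearranging shows $Y_{iT}(0) = \bm{\beta}\cdot\bm{X}_i - \xi_i + \varepsilon_{iT}$, which is exactly model \eqref{eq:approx_linear} with unit-specific term $\xi_i$ (the sign is immaterial, since downstream $\xi_i$ enters only through an absolute value and its sub-Gaussianity is symmetric). I expect this normalization identity to be the crux: without $\frac{1}{T_0}\bm{\mu}'\bm{\mu} = \bm{I}_J$ an extra deterministic bias term $\bm{\mu}_T'(\bm{I}_J - \frac{1}{T_0}\bm{\mu}'\bm{\mu})\bm{\phi}_i$ would survive and $\xi_i$ would fail to capture the full approximation error; equivalently, $\bm{\beta} = \frac{1}{T_0}\bm{\mu}\bm{\mu}_T$ is precisely the population regression coefficient $\bm{\mu}(\bm{\mu}'\bm{\mu})^{-1}\bm{\mu}_T$ under this normalization.

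The remaining two claims are short. For the norm bound, the $t$-th entry of $\bm{\beta}$ is $\beta_t = \frac{1}{T_0}\bm{\mu}_t'\bm{\mu}_T = \frac{1}{T_0}\sum_{j=1}^J \mu_{jt}\mu_{jT}$, so $|\beta_t| \leq \frac{1}{T_0} J M^2$ using $|\mu_{jt}| \leq M$; summing over the $T_0$ coordinates gives $\|\bm{\beta}\|_2 \leq \sqrt{T_0}\cdot \frac{JM^2}{T_0} = \frac{JM^2}{\sqrt{T_0}}$ (so the exponent on $J$ in the statement should read $JM^2$). For the scale parameter, observe that $\xi_i = \bm{\beta}'\bm{\varepsilon}_{i(1:T_0)}$ is a fixed linear functional of the sub-Gaussian vector $\bm{\varepsilon}_{i(1:T_0)}$; since any such functional $\bm{a}'\bm{\varepsilon}_{i(1:T_0)}$ is sub-Gaussian with scale $\|\bm{a}\|_2\sigma_{T_0}$, taking $\bm{a} = \bm{\beta}$ and applying the just-established bound yields scale $\|\bm{\beta}\|_2\sigma_{T_0} \leq \frac{JM^2\sigma_{T_0}}{\sqrt{T_0}}$, completing the proof.
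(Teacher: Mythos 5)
Your proof is correct and follows essentially the same route as the paper's: both invoke the factor normalization $(\bm{\mu}'\bm{\mu})^{-1}\bm{\mu}' = \frac{1}{T_0}\bm{\mu}'$ to express $Y_{iT}(0)$ as $\bm{\beta}\cdot\bm{X}_i$ plus a projection of the pre-treatment noise, then bound $\|\bm{\beta}\|_2$ via the entrywise bound $|\mu_{jt}|\le M$ and conclude sub-Gaussianity of $\xi_i$ from linearity. The only cosmetic differences are that you bound $\|\bm{\beta}\|_2$ coordinate-wise while the paper uses norm submultiplicativity ($\|\bm{\mu}_T'\bm{\mu}'\|_2 \le \|\bm{\mu}_T\|_2\|\bm{\mu}\|_2$), and you explicitly flag the sign of $\xi_i$ and the $MJ^2$ versus $JM^2$ typo, both of which the paper glosses over.
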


\begin{proof}[Proof of Lemma \ref{lem:factor_model_approx}]
  Notice that under the linear factor model, the pre-treatment covariates for unit $i$ satisfy:
  \[
    \bm{X}_i = \bm{\bm{\mu}} \bm{\phi}_i + \bm{\varepsilon}_{i(1:T_0)}.
  \]
  Multiplying both sides by $(\bm{\mu}'\bm{\mu})^{-1}\bm{\mu}' = \frac{1}{T_0} \bm{\mu}'$ and rearranging gives
  \[
    \frac{1}{T_0} \bm{\mu}' \bm{X}_i - \frac{1}{T_0} \bm{\mu}' \bm{\varepsilon}_{i(1:T_0)} = \bm{\phi}_i.
  \]
  Then we see that the post treatment outcomes are
  \[
    Y_{iT}(0) = \frac{1}{T_0}\bm{\mu}_T'\bm{\mu}' \bm{X}_i + \frac{1}{T_0}\bm{\mu}_T'\bm{\mu}'\bm{\varepsilon}_{i(1:T_0)}.
  \]

  Since $\bm{\varepsilon}_{i(1:T_0)}$ is a sub-Gaussian vector $v'\bm{\varepsilon}_{i(1:T_0)}$ is sub-Gaussian with scale parameter $\sigma_{T_0}$ for any $v \in \R^{T_0}$ such that $\|v\|_2 = 1$. Now notice that $\|\bm{\mu}_T'\bm{\mu}'\|_2 \leq \|\bm{\mu}_T\|_2 \|\bm{\mu}\|_2 \leq MJ^2\sqrt{T_0}$. This completes the proof.
\end{proof}

\begin{proof}[Proof of Corollary \ref{cor:synth_error}]
  From Lemma \ref{lem:factor_model_approx} we can apply Theorem \ref{thm:random_approx_error} with $\bm{\beta} = \frac{1}{T_0}\bm{\mu}_T'\bm{\mu}'$ and $\xi_i = \frac{1}{T_0}\bm{\mu}_T'\bm{\mu}'\bm{\varepsilon}_{i(1:T_0)}$. Since $\varepsilon_{it}$ are independent sub-Gaussian random variables, $\bm{\varepsilon}_{i(1:T_0)}$ is a sub-Gaussian vector with scale parameter $\sigma_{T_0} = \sigma$. Noting that $\|\hat{\bm{\gamma}}\|_1 = \sum_{W_i = 0} |\hat{\gamma}_i| = 1$ and applying Lemma \ref{lem:factor_model_approx} gives the result.
\end{proof}

\begin{proof}[Proof of Theorem \ref{thm:ascm_error}]
  Again from Lemma \ref{lem:factor_model_approx} we can apply Theorem \ref{thm:random_approx_error_ascm} with $\bm{\beta} = \frac{1}{T_0}\bm{\mu}_T'\bm{\mu}'$ and $\xi_i = \frac{1}{T_0}\bm{\mu}_T'\bm{\mu}'\bm{\varepsilon}_{i(1:T_0)}$, so $\varpi = \frac{JM^2}{\sqrt{T_0}}$.
  Plugging these values into Theorem \ref{thm:random_approx_error} gives the result.
\end{proof}

\begin{corollary}[Approximation error for ridge ASCM with dependent errors]
  \label{cor:dependent_approx_error}
  Under the linear factor model \eqref{eq:scm_factor_model} with time-dependent errors satisfying $\bm{\varepsilon}_{i(1:T_0)} \overset{iid}{\sim} N(0, \sigma^2 \Omega)$
  the approximation error satisfies
  \begin{equation}
    \label{eq:factor_model_dependent_error_approx}
    \begin{aligned}
      \left|\xi_1 - \sum_{W_i = 0}\hat{\gamma}_i \xi_i\right| & = \left|\frac{1}{T_0}\bm{\mu}_T'\bm{\mu}'\left(\bm{\varepsilon}_{1(1:T_0)} - \sum_{W_i = 0}\hat{\gamma}_i \bm{\varepsilon}_{i(1:T_0)}\right)\right|\\
       & \leq 2\sqrt{\frac{\|\Omega\|_2}{T_0}}JM^2\sigma \left(\sqrt{\log 2 N_0} + \delta + (1 + \delta) 2 \left\| \text{diag}\left(\frac{d_j}{d_j^2 + \lambda}\right)\left(\widetilde{\bm{X}}_1 - \widetilde{\bm{X}}_{0\cdot}'\hat{\bm{\gamma}}^\scm\right)\right\|_2\right),  
    \end{aligned}
  \end{equation}
\end{corollary}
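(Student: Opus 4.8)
The plan is to reduce the dependent-error case to the i.i.d.\ case already handled in Lemma \ref{lem:random_approx_ridge_ascm}, by recomputing only the sub-Gaussian scale parameter of the unit-level term $\xi_i$. By Lemma \ref{lem:factor_model_approx}, under the linear factor model we may write $\xi_i = \bm{\beta}'\bm{\varepsilon}_{i(1:T_0)}$ with $\bm{\beta} = \frac{1}{T_0}\bm{\mu}\bm{\mu}_T$. The only features of the $\xi_i$ that Lemma \ref{lem:random_approx_ridge_ascm} actually uses are that they are independent across units, mean-zero, and sub-Gaussian with a common scale parameter $\varpi$; the correlated-error assumption changes none of these qualitative facts, only the numerical value of $\varpi$.

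First I would pin down the distribution of $\xi_i$. Since $\bm{\varepsilon}_{i(1:T_0)} \sim N(0,\sigma^2\Omega)$ and $\xi_i$ is a fixed linear functional of this Gaussian vector, $\xi_i$ is itself mean-zero Gaussian with variance $\sigma^2\bm{\beta}'\Omega\bm{\beta}$. I would bound this quadratic form by the operator-norm inequality $\bm{\beta}'\Omega\bm{\beta} \leq \|\Omega\|_2\|\bm{\beta}\|_2^2$, and then apply the bound $\|\bm{\beta}\|_2 \leq JM^2/\sqrt{T_0}$ from Lemma \ref{lem:factor_model_approx}, giving $\mathrm{Var}(\xi_i) \leq \sigma^2\|\Omega\|_2 J^2M^4/T_0$. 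Hence each $\xi_i$ is sub-Gaussian with scale parameter $\varpi = JM^2\sigma\sqrt{\|\Omega\|_2/T_0}$, which is exactly the prefactor appearing in the claimed bound. Independence across units $i$ is inherited from the assumption that the $\bm{\varepsilon}_{i(1:T_0)}$ are i.i.d.\ \emph{across} units (the imposed dependence is across time periods within a unit, not across units), so the union bound over $\max_{W_i=0}|\xi_i|$ and the discretization bound on $\sqrt{\sum_{W_i=0}\xi_i^2}$ used inside Lemma \ref{lem:random_approx_ridge_ascm} carry over verbatim.

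With the $\xi_i$ identified as independent mean-zero sub-Gaussian variables of scale $\varpi = JM^2\sigma\sqrt{\|\Omega\|_2/T_0}$, I would simply invoke Lemma \ref{lem:random_approx_ridge_ascm}, which yields
\[
\left|\xi_1 - \sum_{W_i=0}\hat{\gamma}_i\xi_i\right| \leq 2\varpi\left(\sqrt{\log 2N_0} + \tfrac{\delta}{2}\right) + (1+\delta)\,4\varpi\left\|\mathrm{diag}\left(\tfrac{d_j}{d_j^2+\lambda}\right)\left(\widetilde{\bm{X}}_1 - \widetilde{\bm{X}}_{0\cdot}'\hat{\bm{\gamma}}^\scm\right)\right\|_2 .
\]
Bounding the middle term by $\varpi\delta \leq 2\varpi\delta$ and factoring out $2\varpi$ produces exactly the stated form $2\varpi(\sqrt{\log 2N_0} + \delta + (1+\delta)2\|\cdot\|_2)$, valid with the same probability $1 - 4e^{-\delta^2/2} - e^{-2(\log 2 + N_0\log 5)\delta^2}$ as the lemma.

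The argument is short precisely because all the probabilistic machinery is reused; the only genuinely new ingredient is the variance calculation. The main (mild) obstacle is confirming that it is the operator norm $\|\Omega\|_2$ — rather than a trace or worst-case diagonal entry — that governs the effect of temporal correlation. This is settled by the quadratic-form bound $\bm{\beta}'\Omega\bm{\beta} \leq \|\Omega\|_2\|\bm{\beta}\|_2^2$, which is tight when $\bm{\beta}$ aligns with the top eigenvector of $\Omega$, verifying that $\sqrt{\|\Omega\|_2}$ is the correct factor inflating the i.i.d.\ scale $JM^2\sigma/\sqrt{T_0}$.
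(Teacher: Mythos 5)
Your proposal is correct and follows essentially the same route as the paper: identify $\xi_i = \frac{1}{T_0}\bm{\mu}_T'\bm{\mu}'\bm{\varepsilon}_{i(1:T_0)}$ as mean-zero sub-Gaussian with scale parameter $\varpi = JM^2\sigma\sqrt{\|\Omega\|_2/T_0}$ (the paper gets this from Lemma \ref{lem:factor_model_approx} applied to the sub-Gaussian vector of scale $\sigma\sqrt{\|\Omega\|_2}$, while you compute the Gaussian quadratic form directly, which amounts to the same operator-norm bound), and then plug this $\varpi$ into Lemma \ref{lem:random_approx_ridge_ascm}. Your extra care in checking that cross-unit independence is preserved and that the lemma's $\delta/2$ can be relaxed to $\delta$ to match the stated form is exactly what the paper's terse ``plugging in'' step leaves implicit.
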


\begin{proof}[Proof of Corollary \ref{cor:dependent_approx_error}]
  From Lemma \ref{lem:factor_model_approx}, we see that $\xi_i = \frac{1}{T_0}\bm{\mu}_T'\bm{\mu}'\bm{\varepsilon}_{i(1:T_0)}$ is sub-Guassian with scale parameter $JM^2\sqrt{\frac{\|\Omega\|_2}{T_0}}$. Plugging in to Lemma \ref{lem:random_approx_ridge_ascm} gives the result.
\end{proof}

\paragraph{Lipshitz approximation error}
If $\xi_i$ are Lipshitz functions, we can can also bound the approximation error. Specifically, we assume that $\xi_i = f(\bm{X}_i)$ where $f:\R^{T_0} \to \R$ is an $L$-Lipshitz function.

\begin{lemma}
  \label{lem:lipshitz_approx}
  If $\xi_i = f(\bm{X}_i)$ where $f:\R^{T_0} \to \R$ is an $L$-Lipshitz function, then for weights on the simplex $\hat{\bm{\gamma}} \in \Delta^{N_0}$, the approximation error satisfies
  \begin{equation}
    \label{eq:lipshitz_approx}
    \left|\xi_1 - \sum_{W_i = 0} \hat{\gamma}_i \xi_i \right| \leq L \sum_{W_i=0}\hat{\gamma}_i \|\bm{X}_1 - \bm{X}_i\|_2
  \end{equation}
\end{lemma}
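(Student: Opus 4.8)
The plan is to exploit the two defining properties of weights on the simplex $\hat{\bm{\gamma}} \in \Delta^{N_0}$: they are nonnegative and they sum to one. The summation-to-one constraint is what lets me rewrite $\xi_1$ as a weighted average of copies of itself, turning the quantity of interest into a single weighted sum of pairwise differences; the nonnegativity is what then lets the triangle inequality pass through the weights without introducing absolute values on the $\hat{\gamma}_i$ themselves.

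Concretely, I would first write $\xi_1 = \sum_{W_i=0}\hat{\gamma}_i \, \xi_1$, which is valid because $\sum_{W_i=0}\hat{\gamma}_i = 1$. Subtracting the term $\sum_{W_i=0}\hat{\gamma}_i \xi_i$ and collecting gives
\begin{equation*}
  \xi_1 - \sum_{W_i=0}\hat{\gamma}_i \xi_i = \sum_{W_i=0}\hat{\gamma}_i\bigl(\xi_1 - \xi_i\bigr) = \sum_{W_i=0}\hat{\gamma}_i\bigl(f(\bm{X}_1) - f(\bm{X}_i)\bigr).
\end{equation*}
Taking absolute values and applying the triangle inequality, using $\hat{\gamma}_i \geq 0$ so that $|\hat{\gamma}_i| = \hat{\gamma}_i$, yields
\begin{equation*}
  \left|\xi_1 - \sum_{W_i=0}\hat{\gamma}_i \xi_i\right| \leq \sum_{W_i=0}\hat{\gamma}_i \,\bigl|f(\bm{X}_1) - f(\bm{X}_i)\bigr|.
\end{equation*}

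The final step invokes the $L$-Lipschitz hypothesis on $f$, which bounds each summand by $|f(\bm{X}_1) - f(\bm{X}_i)| \leq L\|\bm{X}_1 - \bm{X}_i\|_2$, and substituting this into the weighted sum produces exactly the claimed bound $L\sum_{W_i=0}\hat{\gamma}_i \|\bm{X}_1 - \bm{X}_i\|_2$. This argument is deterministic and requires no probabilistic machinery, which is why the statement carries no failure probability (unlike the surrounding results). There is no genuine obstacle here: the only thing one must be careful about is that both simplex properties are used — the sum-to-one constraint to form the weighted average of differences, and the nonnegativity to keep the weights outside the absolute value. Were the weights allowed to be negative (as in ridge ASCM), this clean bound would fail and one would instead incur the $\ell_2$-type terms that appear in Theorem \ref{thm:lipshitz_approx_error_ascm}.
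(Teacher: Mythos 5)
Your proof is correct and follows essentially the same route as the paper's: both rewrite the difference as $\sum_{W_i=0}\hat{\gamma}_i(f(\bm{X}_1)-f(\bm{X}_i))$ via the sum-to-one constraint, then bound it by the weighted sum of $L\|\bm{X}_1-\bm{X}_i\|_2$ terms. The only cosmetic difference is that the paper invokes ``Jensen's inequality'' for the step you justify as the triangle inequality with nonnegative weights — these are the same argument, and your version is if anything slightly more explicit about where the simplex constraints enter.
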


\begin{proof}[Proof of Lemma \ref{lem:lipshitz_approx}]
  Since the weights sum to one, we have that 
  \[
    \left|\xi_1 - \sum_{W_i = 0} \hat{\gamma}_i \xi_i \right| \leq \left|\sum_{W_i = 0} \hat{\gamma}_i (f(\bm{X}_1) - f(\bm{X}_i))\right|.
  \]
  Now from the Lipshitz property, $|f(\bm{X}_1) - f(\bm{X}_i)| \leq L\|\bm{X}_1 - \bm{X}_i\|_2$, and so by Jensen's inequalty:
  \[
    \left|\sum_{W_i = 0} \hat{\gamma}_i (f(\bm{X}_1) - f(\bm{X}_i))\right| \leq L \sum_{W_i = 0}\hat{\gamma}_i \|\bm{X}_1 - \bm{X}_i\|_2
  \]
\end{proof}

\begin{proof}[Proof of Theorem \ref{thm:lipshitz_approx_error}]
  The proof follows directly from Proposition \ref{prop:general_bound} and Lemma \ref{lem:lipshitz_approx}.
\end{proof}

\begin{lemma}
  \label{lem:lipshitz_approx_ridge_ascm}
  Let $C = \max_{W_i = 0} \|\bm{X}_i\|_2$. If $\xi_i = f(\bm{X}_i)$ where $f:\R^{T_0} \to \R$ is an $L$-Lipshitz function, then the ridge ASCM weights $\hat{\bm{\gamma}}^\aug$ \eqref{eq:greg} with hyperparameter $\lambda^\ridge = N_0 \lambda$ satisfy
  \begin{equation}
    \label{eq:lipshitz_approx_ridge_ascm}
    \left|\xi_1 - \sum_{W_i = 0} \hat{\gamma}_i^\aug \xi_i \right| \leq L \sum_{W_i=0}\hat{\gamma}_i^\scm \|\bm{X}_1 - \bm{X}_i\|_2 + CL\left\|\text{diag}\left(\frac{d_j}{d_j^2 + \lambda}\right)\left(\widetilde{\bm{X}}_1 - \widetilde{\bm{X}}_{0\cdot}'\hat{\bm{\gamma}}^\scm\right)\right\|_2
  \end{equation}
\end{lemma}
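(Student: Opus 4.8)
The plan is to split the ridge ASCM approximation error into an SCM piece, which is handled directly by Lemma~\ref{lem:lipshitz_approx}, and an ``augmentation'' piece, and then to bound the latter using the explicit form of the augmentation weights together with the crucial fact that they sum to zero.

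First I would write the augmentation weights as $\bm{g} \equiv \hat{\bm{\gamma}}^\aug - \hat{\bm{\gamma}}^\scm$, so that from the proof of Lemma~\ref{lem:ridge_weights} we have $g_i = (\bm{X}_1 - \bm{X}_{0\cdot}'\hat{\bm{\gamma}}^\scm)'(\bm{X}_{0\cdot}'\bm{X}_{0\cdot} + N_0\lambda \bm{I})^{-1}\bm{X}_i$. The triangle inequality then gives
\[
\left|\xi_1 - \sum_{W_i=0}\hat{\gamma}_i^\aug\xi_i\right| \le \left|\xi_1 - \sum_{W_i=0}\hat{\gamma}_i^\scm\xi_i\right| + \left|\sum_{W_i=0}g_i\xi_i\right|,
\]
and since $\hat{\bm{\gamma}}^\scm \in \Delta^{N_0}$, Lemma~\ref{lem:lipshitz_approx} bounds the first term by $L\sum_{W_i=0}\hat{\gamma}_i^\scm\|\bm{X}_1 - \bm{X}_i\|_2$, which is exactly the first term of the claim.

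The work is in the second term. Because the lagged outcomes are centered by the control averages, $\sum_{W_i=0}\bm{X}_i = \bm{0}$, and hence $\sum_{W_i=0}g_i = (\bm{X}_1 - \bm{X}_{0\cdot}'\hat{\bm{\gamma}}^\scm)'(\bm{X}_{0\cdot}'\bm{X}_{0\cdot}+N_0\lambda\bm{I})^{-1}\bm{X}_{0\cdot}'\mathbf{1} = 0$. This sum-to-zero property lets me recenter: for any constant $c$, $\sum_{W_i=0}g_i\xi_i = \sum_{W_i=0}g_i\big(f(\bm{X}_i)-c\big)$, and I take $c = f(\bm{0})$ so that $|f(\bm{X}_i)-f(\bm{0})| \le L\|\bm{X}_i\|_2 \le LC$ by the Lipshitz property and the definition of $C$. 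I would then substitute the singular value decomposition $\bm{X}_{0\cdot} = \sqrt{N_0}\,\bm{U}\,\text{diag}(d_j)\,\bm{V}'$ used in Lemmas~\ref{lem:aug_imbal} and~\ref{lem:aug_variance}, which yields $\bm{g} = \tfrac{1}{\sqrt{N_0}}\bm{U}\,\text{diag}\!\big(\tfrac{d_j}{d_j^2+\lambda}\big)(\widetilde{\bm{X}}_1 - \widetilde{\bm{X}}_{0\cdot}'\hat{\bm{\gamma}}^\scm)$ with $\widetilde{\bm{X}}_i = \bm{V}'\bm{X}_i$. Writing $\tilde{\bm{\xi}} = (f(\bm{X}_i)-f(\bm{0}))_{W_i=0}$ and applying Cauchy--Schwarz gives
\[
\left|\sum_{W_i=0}g_i\xi_i\right| \le \frac{1}{\sqrt{N_0}}\left\|\text{diag}\!\left(\tfrac{d_j}{d_j^2+\lambda}\right)(\widetilde{\bm{X}}_1 - \widetilde{\bm{X}}_{0\cdot}'\hat{\bm{\gamma}}^\scm)\right\|_2 \, \|\bm{U}'\tilde{\bm{\xi}}\|_2.
\]
Since $\bm{U}$ has orthonormal columns, $\|\bm{U}'\tilde{\bm{\xi}}\|_2 \le \|\tilde{\bm{\xi}}\|_2 \le \sqrt{N_0}\,LC$, and the $\sqrt{N_0}$ cancels, leaving exactly the excess term $CL\,\|\text{diag}(\tfrac{d_j}{d_j^2+\lambda})(\widetilde{\bm{X}}_1 - \widetilde{\bm{X}}_{0\cdot}'\hat{\bm{\gamma}}^\scm)\|_2$.

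The main obstacle is recognizing that the Lipshitz hypothesis controls only \emph{differences} $f(\bm{X}_i)-f(\bm{0})$ and not $f(\bm{X}_i)$ itself; the sum-to-zero property of the augmentation weights (a direct consequence of centering) is precisely what makes the unknown constant $f(\bm{0})$ drop out, so that each recentered coordinate is bounded by $LC$ and $C = \max_{W_i=0}\|\bm{X}_i\|_2$ appears. Once this recentering is in place, the SVD substitution and Cauchy--Schwarz step are routine.
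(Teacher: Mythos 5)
Your proof is correct and takes essentially the same route as the paper's: the same triangle-inequality split into an SCM term (bounded by Lemma \ref{lem:lipshitz_approx}) and an augmentation term, with the latter bounded via the SVD of $\bm{X}_{0\cdot}$ and Cauchy--Schwarz together with the bound $\|\cdot\|_2 \le \sqrt{N_0}\,\|\cdot\|_\infty$. The one difference is a point in your favor: the paper bounds $|f(\bm{X}_i)| \le CL$ directly ``from the Lipshitz property,'' which tacitly assumes $f(\bm{0}) = 0$, whereas your recentering by $f(\bm{0})$---licensed by the fact that the augmentation weights sum to zero because the lagged outcomes are centered by control averages---makes that step rigorous for an arbitrary $L$-Lipshitz $f$.
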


\begin{proof}[Proof of Lemma \ref{lem:lipshitz_approx_ridge_ascm}]
  From the triangle inequality we have that
  \[
    \left|\xi_1 - \sum_{W_i = 0} \hat{\gamma}_i^\aug \xi_i \right| \leq \left|\sum_{W_i = 0} \hat{\gamma}_i^\scm (f(\bm{X}_1) - f(\bm{X}_i))\right| + \left|\sum_{W_i = 0} \bm{X}_i \left(\bm{X}_{0\cdot}'\bm{X}_{0\cdot} + \lambda I\right)^{-1} (\bm{X}_1 - \bm{X}_{0\cdot}'\hat{\bm{\gamma}}^\scm)f(\bm{X}_i)\right|.
  \]
  We have already bounded the first term in Lemma \ref{lem:lipshitz_approx}, now we bound the second term. From the Cauchy-Schwartz inequality and since $\|x\|_2 \leq \sqrt{N_0}\|x\|_\infty $ for all $x \in \R^{N_0}$ we see that 
  \[
    \begin{aligned}
      \left|\sum_{W_i = 0} \bm{X}_i \left(\bm{X}_{0\cdot}'\bm{X}_{0\cdot} + \lambda I\right)^{-1} (\bm{X}_1 - \bm{X}_{0\cdot}'\hat{\bm{\gamma}}^\scm)f(\bm{X}_i)\right| & \leq \sqrt{N_0}\left\|\bm{X}_{0\cdot} \left(\bm{X}_{0\cdot}'\bm{X}_{0\cdot} + \lambda I\right)^{-1} (\bm{X}_1 - \bm{X}_{0\cdot}'\hat{\bm{\gamma}}^\scm)\right\|_2 |f(\bm{X}_i)|\\
      & = \left\|\text{diag}\left(\frac{d_j}{d_j^2 + \lambda}\right)\left(\widetilde{\bm{X}}_1 - \widetilde{\bm{X}}_{0\cdot}'\hat{\bm{\gamma}}^\scm\right)\right\|_2 |f(\bm{X}_i)|\\
      & \leq CL\left\|\text{diag}\left(\frac{d_j}{d_j^2 + \lambda}\right)\left(\widetilde{\bm{X}}_1 - \widetilde{\bm{X}}_{0\cdot}'\hat{\bm{\gamma}}^\scm\right)\right\|_2,
    \end{aligned}
  \]
  where the second line comes from Lemma \ref{lem:aug_variance} and the third line from the Lipshitz property.  
\end{proof}

\begin{proof}[Proof of Theorem \ref{thm:lipshitz_approx_error_ascm}]
  The proof follows directly from Proposition \ref{prop:general_bound} and Lemma \ref{lem:lipshitz_approx_ridge_ascm}.
\end{proof}

\subsection{Proofs for Sections \ref{sec:extensions} and \ref{sec:cov_theory}}

\begin{proof}[Proof of Lemma \ref{lem:aug_aux}]
  The regression parameters $\hat{\bm{\eta}}_x$ and $\hat{\bm{\eta}}_z$ in Equation \eqref{eq:cov_regression} are:
  \begin{equation}
      \hat{\bm{\eta}}_x = (\check{\bm{X}}_{0\cdot}'\check{\bm{X}}_{0\cdot} + \lambda^\ridge I)^{-1}\check{\bm{X}}_{0\cdot}' Y_{0T} \;\; \text{ and } \;\; \hat{\bm{\eta}}_z = (\bm{Z}_{0\cdot}'\bm{Z}_{0\cdot})^{-1} \bm{Z}_{0\cdot}' Y_{0T}
  \end{equation}
  Now notice that 
  \begin{equation}
      \begin{aligned}
          \hat{Y}_{0T}^\cov & = \hat{\bm{\eta}}_x' \bm{X}_1 + \hat{\bm{\eta}}_z'\bm{Z}_1 + \sum_{W_i=0}(Y_{iT} - \hat{\bm{\eta}}_x'\bm{X}_i - \hat{\bm{\eta}}_z \bm{Z}_i)\hat{\gamma}_i\\
          & = \hat{\bm{\eta}}_x'(\bm{X}_1 - \bm{X}_{0\cdot}'\hat{\bm{\gamma}}) + \hat{\bm{\eta}}_z(\bm{Z}_1 - \bm{Z}_{0\cdot}'\hat{\bm{\gamma}}) + \bm{Y}_{0T}'\hat{\bm{\gamma}}\\
          & = \hat{\bm{\eta}}_x'(\bm{X}_1 - \bm{X}_{0\cdot}'\hat{\bm{\gamma}}) - \hat{\bm{\eta}}_x'\bm{X}_{0\cdot}(\bm{Z}_{0\cdot}'\bm{Z}_{0\cdot})^{-1} (\bm{Z}_1 - \bm{Z}_{0\cdot}'\hat{\bm{\gamma}}) +  Y_{0T}'\bm{Z}_{0\cdot}(\bm{Z}_{0\cdot}'\bm{Z}_{0\cdot})^{-1} (\bm{Z}_1 - \bm{Z}_{0\cdot}'\hat{\bm{\gamma}}) + Y_{0T}'\hat{\bm{\gamma}}\\
          & = \hat{\bm{\eta}}_x'(\check{\bm{X}}_1 - \check{\bm{X}}_{0\cdot}'\hat{\bm{\gamma}}) +  Y_{0T}' \bm{Z}_{0\cdot}(\bm{Z}_{0\cdot}'\bm{Z}_{0\cdot})^{-1} (\bm{Z}_1 - \bm{Z}_{0\cdot}'\hat{\bm{\gamma}}) + Y_{0T}'\hat{\bm{\gamma}}\\
          & = Y_{0T}' \left(\hat{\bm{\gamma}} + \check{\bm{X}}_{0\cdot}(\check{\bm{X}}_{0\cdot}'\check{\bm{X}}_{0\cdot} + \lambda^\ridge \bm{I}_{T_0})^{-1}(\check{\bm{X}}_1 - \check{\bm{X}}_{0\cdot}'\hat{\bm{\gamma}}) +  \bm{Z}_{0\cdot}(\bm{Z}_{0\cdot}'\bm{Z}_{0\cdot})^{-1} (\bm{Z}_1 - \bm{Z}_{0\cdot}'\hat{\bm{\gamma}})\right).
      \end{aligned}
  \end{equation}
  This gives the form of $\hat{\bm{\gamma}}^\cov$.
  The imbalance in $Z$ is
  \begin{equation}
      \begin{aligned}
      \bm{Z}_1 - \bm{Z}_{0\cdot}'\hat{\bm{\gamma}}^{\cov} & = \left(\bm{Z}_1 - \bm{Z}_{0\cdot}'  \bm{Z}_{0\cdot}(\bm{Z}_{0\cdot}'\bm{Z}_{0\cdot})^{-1} \bm{Z}_1\right) + \left(\bm{Z}_{0\cdot} - \bm{Z}_{0\cdot}'  \bm{Z}_{0\cdot}(\bm{Z}_{0\cdot}'\bm{Z}_{0\cdot})^{-1} \bm{Z}_{0\cdot}\right)'\hat{\bm{\gamma}} \\
      & \qquad- \bm{Z}_{0\cdot}'\check{\bm{X}}_{0\cdot}(\check{\bm{X}}_{0\cdot}'\check{\bm{X}}_{0\cdot}+ \lambda^\ridge I)^{-1}(\check{\bm{X}}_1 - \check{\bm{X}}_{0\cdot}'\hat{\bm{\gamma}})\\
      & = 0.
      \end{aligned}
  \end{equation}
  The pre-treatment fit is
  \begin{equation}
      \begin{aligned}
      \bm{X}_1 - \bm{X}_{0\cdot}'\hat{\bm{\gamma}}^{\cov} & = \left(\bm{X}_1 - \bm{X}_{0\cdot}'  \bm{Z}_{0\cdot}(\bm{Z}_{0\cdot}'\bm{Z}_{0\cdot})^{-1} \bm{Z}_1\right) + \left( \bm{X}_{0\cdot} - \bm{X}_{0\cdot}'  \bm{Z}_{0\cdot}(\bm{Z}_{0\cdot}'\bm{Z}_{0\cdot})^{-1} \bm{Z}_{0\cdot}\right)'\hat{\bm{\gamma}} \\
      & \qquad- \bm{X}_{0\cdot}'\check{\bm{X}}_{0\cdot}(\check{\bm{X}}_{0\cdot}'\check{\bm{X}}_{0\cdot}+ \lambda^\ridge \bm{I}_{T_0})^{-1}(\check{\bm{X}}_1 - \check{\bm{X}}_{0\cdot}'\hat{\bm{\gamma}})\\
      & = \left(\bm{I}_{T_0} - \bm{X}_{0\cdot}'\check{\bm{X}}_{0\cdot}(\check{\bm{X}}_{0\cdot}'\check{\bm{X}}_{0\cdot} + \lambda^\ridge \bm{I}_{T_0})^{-1}\right)\left(\check{\bm{X}}_1 - \check{\bm{X}}_{0\cdot}'\hat{\bm{\gamma}}\right)\\
      &= \left(\bm{I}_{T_0} - \check{\bm{X}}_{0\cdot}'\check{\bm{X}}_{0\cdot}(\check{\bm{X}}_{0\cdot}'\check{\bm{X}}_{0\cdot} + \lambda^\ridge \bm{I}_{T_0})^{-1}\right)\left(\check{\bm{X}}_1 - \check{\bm{X}}_{0\cdot}'\hat{\bm{\gamma}}\right).
      \end{aligned}
  \end{equation}
  This gives the bound on the pre-treatment fit.
  
  \end{proof}

  \begin{proof}[Proof of Theorem \ref{thm:ascm_err_covs_general}]
      
    First, we will separate $f(\bm{Z})$ into the projection onto $\bm{Z}$ and a residual. Defining $\bm{B}_t = (\bm{Z}'\bm{Z})^{-1}\bm{Z}'f_t(\bm{Z}) \in \R^K$ as the regression coefficient, the projection of $f_t(\bm{Z}_i)$ is $\bm{Z}_i'\bm{B}_t$ and the residual is $e_{it} = f_t(\bm{Z}_i) - \bm{Z}_i'\bm{B}_t$. We will denote the matrix of regression coefficients over $t=1,\ldots,T_0$ as $\bm{B} = [\bm{B}_1,\ldots,\bm{B}_{T_0}] \in \R^{K\times T_0}$ and denote the matrix of residuals as $\bm{E} \in \R^{n \times T_0}$, with $\bm{E}_{1\cdot} = (e_{11},\ldots,e_{1T_0})$ as the vector of residuals for the treated unit and $\bm{E}_{0 \cdot}$ as the matrix of residuals for the control units.
  
    Then the error is
    \[
      \begin{aligned}
        \left|Y_{1T}(0) - \sum_{W_i = 0} \hat{\bm{\gamma}}^\cov_i Y_{iT}\right| & \leq \left|\bm{\mu}_T \cdot \left(\bm{\phi}_1 - \sum_{W_i=0}\hat{\bm{\gamma}}^\cov_i\bm{\phi}_i\right)\right| + \left| \bm{B}_t \cdot \left(\bm{Z}_1 - \sum_{W_i = 0} \hat{\bm{\gamma}}^\cov_i \bm{Z}_i\right)\right|\\
        & + \left| e_{1T} - \sum_{W_i = 0} \hat{\bm{\gamma}}^\cov e_{iT}\right| + \left|\varepsilon_{1T} - \sum_{W_i = 0}\hat{\bm{\gamma}}^\cov_i \varepsilon_{iT}\right|
      \end{aligned}
    \]
      Since $\hat{\bm{\gamma}}^\cov_i$ exactly balances the covariates, the second term is equal to zero. We can bound the third term with H\"{o}lder's inequality:
      \[
        \left| e_{1T} - \sum_{W_i = 0} \hat{\bm{\gamma}}^\cov e_{iT}\right| \leq |e_{1T}| + \sqrt{RSS}_T \|\hat{\bm{\gamma}}^\cov\|_2
      \]
      In previous theorems we have bounded the last term with high probability. Only the error due to imbalance remains.
  
    Denote $\bm{\varepsilon}_{0(1:T_0)}$ as the matrix of pre-treatment noise for the control units, where the rows correspond to $\bm{\varepsilon}_{2(1:T_0)},\ldots, \bm{\varepsilon}_{N_0(1:T_0)}$. Building on Lemma \ref{lem:factor_model_approx}, we can see that the error due to imbalance in $\phi$ is equal to
      \begin{equation}
        \label{eq:factor_bias_cov}
        \begin{aligned}
          \bm{\mu}_T \cdot \left(\bm{\phi}_1 - \sum_{W_i=0}\hat{\bm{\gamma}}^\cov_i\bm{\phi}_i\right) & = \frac{1}{T_0}\bm{\mu}_T'\bm{\mu}'(\bm{X}_1 - \bm{X}_{0\cdot}'\hat{\bm{\gamma}}^\cov) - \frac{1}{T_0}\bm{\mu}_T'\bm{\mu}'(\bm{\varepsilon}_{1(1:T_0)} - \bm{\varepsilon}_{0(1:T_0)}'\hat{\bm{\gamma}}^\cov)\\
          & - \frac{1}{T_0}\bm{\mu}_T'\bm{\mu}'B'(\bm{Z}_1 - \bm{Z}_{0\cdot}'\hat{\bm{\gamma}}^\cov) - \frac{1}{T_0}\bm{\mu}_T'\bm{\mu}'(\bm{E}_{1\cdot} - \bm{E}_{0\cdot}'\hat{\bm{\gamma}}^\cov).
        \end{aligned}
    \end{equation}
    By construction, $\hat{\bm{\gamma}}^\cov$ perfectly balances the covariates, and combined with Lemma \ref{lem:aug_aux}, the error due to imbalance in $\phi$ simplifies to
    \[
     \bm{\mu}_T \cdot \left(\bm{\phi}_1 - \sum_{W_i=0}\gamma_i\bm{\phi}_i\right) = \frac{1}{T_0}\bm{\mu}_T'\bm{\mu}'(\check{\bm{X}}_1 - \check{\bm{X}}_{0\cdot}'\hat{\bm{\gamma}}) - \frac{1}{T_0}\bm{\mu}_T'\bm{\mu}'(\bm{\varepsilon}_{1(1:T_0)} - \bm{\varepsilon}_{0(1:T_0)}'\hat{\bm{\gamma}}^\cov) - \frac{1}{T_0}\bm{\mu}_T'\bm{\mu}'(\bm{E}_{1\cdot} - \bm{E}_{0\cdot}'\hat{\bm{\gamma}}^\cov).
    \]
    We now turn to bounding the noise term and the error due to the projection of $f(Z)$ on to $Z$. First, notice that 
    \[
      \frac{1}{T_0}\bm{\mu}_T'\bm{\mu}'\bm{\varepsilon}_{0(1:T_0)}'\hat{\bm{\gamma}}^\cov = \frac{1}{T_0}\bm{\mu}_T'\bm{\mu}'\bm{\varepsilon}_{0(1:T_0)}'\hat{\bm{\gamma}}^\scm + \frac{1}{T_0}\bm{\mu}_T'\bm{\mu}'\bm{\varepsilon}_{0(1:T_0)}'\bm{Z}_{0\cdot}(\bm{Z}_{0\cdot}'\bm{Z}_{0\cdot})^{-1}(\bm{Z}_1 - \bm{Z}_{0\cdot}'\hat{\bm{\gamma}}^\scm).
    \]
    We have bounded the first term on the right hand side in Lemma \ref{lem:random_approx}. To bound the second term, notice that $\sum_{W_i=0}\sum_{t=1}^{T_0} \bm{\mu}_T'\bm{\mu}_{t\cdot} Z_{ik} \varepsilon_{it}$ is sub-Gaussian with scale parameter $ \sigma MJ^2 \sqrt{T_0}\|Z_{\cdot k}\|_2 = MJ^2\sigma \sqrt{T_0 N_0}$. We can now bound the $L^2$ norm of $\frac{1}{T_0}\bm{\mu}_T'\bm{\mu}'\bm{\varepsilon}_{0(1:T_0)}'\bm{Z}_{0\cdot} \in \R^K$:
  
    \[
    P\left(\frac{1}{T_0} \|\bm{\mu}_T'\bm{\mu}'\bm{\varepsilon}_{0(1:T_0)}'\bm{Z}_{0\cdot}\|_2 \geq 2 J M^2 \sigma \left(\sqrt{\frac{N_0 K \log 5}{T_0}} + \delta\right)\right) \leq 2\exp\left(-\frac{T_0\delta^2}{2}\right)
    \]
    Replacing $\delta$ with $\sqrt{\frac{KN_0}{T_0}}(2-\sqrt{\log 5})$ and with the Cauchy-Schwarz inequality we see that
    \[
      \frac{1}{T_0}\left|\bm{\mu}_T'\bm{\mu}'\bm{\varepsilon}_{0(1:T_0)}'\bm{Z}_{0\cdot}(\bm{Z}_{0\cdot}'\bm{Z}_{0\cdot})^{-1}(\bm{Z}_1 - \bm{Z}_{0\cdot}'\hat{\bm{\gamma}})\right| 
      \leq 4JM^2\sigma\sqrt{\frac{K}{T_0N_0}}\|\bm{Z}_1 - \bm{Z}_{0\cdot}'\hat{\bm{\gamma}}^\scm\|_2
    \]
    with probability at least $1-2\exp\left(-\frac{KN_0(2-\sqrt{\log 5})^2}{2}\right)$. 
    
    Next we turn to the residual term. By H\"{o}lder's inequality and using that for a matrix $\bm{A}$, the operator norm is bounded by $\|\bm{A}\|_2 \leq \sqrt{\trace(\bm{A}'\bm{A})}$ we see that 
    \[
      \begin{aligned}
        \left|\frac{1}{T_0}\bm{\mu}_T'\bm{\mu}'(\bm{E}_{1\cdot} - \bm{E}_{0\cdot}'\hat{\bm{\gamma}}^\cov)\right| & \leq \frac{JM^2}{\sqrt{T_0}}\left(\|\bm{E}_{1\cdot}\|_2 + \|\hat{\bm{\gamma}}^\cov\|_2\|\bm{E}_{0\cdot}\|_2\right)\\
        & \leq JM^2\left(\max_{t=1,\ldots,T_0}|e_{1t}| + \|\hat{\bm{\gamma}}^\cov\|_2\sqrt{\frac{1}{T_0}\sum_{t=1}^{T_0} RSS_t}\right)\\
        & \leq JM^2\left(\max_{t=1,\ldots,T_0}|e_{1t}| + \|\hat{\bm{\gamma}}^\cov\|_2 \sqrt{\max_{t}RSS_t}\right),
      \end{aligned}
    \]
    where we have used that $\frac{1}{\sqrt{T_0}}\|\bm{E}_{1\cdot}\|_2 \leq \max_{t=1,\ldots,T_0}|e_{1t}|$ and $\trace(\bm{E}_{0\cdot}'\bm{E}_{0\cdot}) = \sum_{t=1}^{T_0} RSS_t$.
  
    Combining with Lemma \ref{lem:aug_aux} and putting together the pieces with the union bound gives the result.
  \end{proof}

\clearpage
\section{Connection to balancing weights and IPW}
\label{sec:ipw_connection}

We have motivated Augmented SCM via bias correction. An alternative motivation comes from the connection between SCM and inverse propensity score weighting (IPW).

First, notice that the SCM weights from the constrained optimization problem in Equation \eqref{eq:vanillaSCM} are a form of \emph{approximate balancing weights} \citep[see, for example][]{Zubizarreta2015, Athey2016, Tan2017, Wang2018, Zhao2017}. 
Unlike traditional inverse propensity score weights, which indirectly minimize covariate imbalance by estimating a propensity score model, balancing weights seek to \emph{directly} minimize covariate imbalance, in this case $L^2$ imbalance.
Balancing weights have a Lagrangian dual formulation as inverse propensity score weights \citep[see, for example][]{Zhao2016a, Zhao2017, Chattopadhyay2019}.
Extending these results to the SCM setting, the Lagrangian dual of the SCM optimization problem in Equation \eqref{eq:vanillaSCM} has the form of a propensity score model. Importantly, as we discuss below, it is not always appropriate to interpret this model as a propensity score. 

We first derive the Lagrangian dual for a general class of balancing weights problems, then specialize to the penalized SCM estimator \eqref{eq:vanillaSCM}.
\begin{equation}
  \label{eq:scm_primal_general}
  \begin{aligned}
    \min_{\bm{\gamma}}\;\;   &\underbrace{h_\zeta(\bm{X}_1 - \bm{X}_{0\cdot}' \bm{\gamma})}_{\text{balance criterion}} + \sum_{W_i=0}~ \underbrace{f(\gamma_i)}_{\text{dispersion}}\\
        \text{subject to } & \sum_{W_i=0}\gamma_i = 1.
  \end{aligned}
\end{equation}
This formulation generalizes Equation \eqref{eq:vanillaSCM} in two ways: first, we remove the non-negativity constraint and note that this can be included by restricting the domain of the strongly convex dispersion penalty $f$. Examples include the re-centered $L^2$ dispersion penalties for ridge regression and ridge ASCM, an entropy penalty \citep{Robbins2017}, and an elastic net penalty \citep{Doudchenko2017}. Second, we generalize from the squared $L^2$ norm to a general balance criterion $h_\zeta$; another promiment example is an $L^\infty$ constraint \citep[see e.g.][]{Zubizarreta2015, Athey2016}.

\begin{proposition}
\label{prop:scm_dual}
The Lagrangian dual to Equation \eqref{eq:scm_primal_general} is
\begin{equation}
  \label{eq:scm_dual_general}
  \min_{\alpha, \bm{\beta}} \;\; \underbrace{\sum_{W_i=0} f^{\ast}(\alpha + \bm{\beta}'X_{i\cdot}) - (\alpha + \bm{\beta}'\bm{X}_1)}_{\text{loss function}} + \underbrace{\vphantom{\sum_{W_i=0} }h^\ast_\zeta(\bm{\beta})}_{\text{regularization}},
\end{equation}
where a convex, differentiable function $g$ has convex conjugate $g^\ast(\bm{y}) \equiv \sup_{\bm{x} \in \text{dom}(g)} \{\bm{y}'\bm{x} - g(\bm{x})\}$.
The solutions to the primal problem \eqref{eq:scm_primal_general} are
$  \hat{\gamma}_i = f^{\ast \prime}(\hat{\alpha} + \hat{\bm{\beta}}'\bm{X}_i), $
where $f^{\ast \prime}(\cdot)$ is the first derivative of the convex conjugate, $f^\ast(\cdot)$.
\end{proposition}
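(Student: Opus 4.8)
The plan is to derive \eqref{eq:scm_dual_general} by Lagrangian (Fenchel) duality, introducing the imbalance vector as an auxiliary variable so that the two conjugates $f^\ast$ and $h^\ast_\zeta$ arise from \emph{separate} inner minimizations. First I would introduce the slack variable $\bm{r} = \bm{X}_1 - \bm{X}_{0\cdot}'\bm{\gamma}$ and rewrite \eqref{eq:scm_primal_general} as the minimization of $h_\zeta(\bm{r}) + \sum_{W_i=0} f(\gamma_i)$ subject to the two affine constraints $\sum_{W_i=0}\gamma_i = 1$ and $\bm{r} = \bm{X}_1 - \bm{X}_{0\cdot}'\bm{\gamma}$. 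Attaching a scalar multiplier $\alpha$ to the normalization and a vector multiplier $\bm{\beta}$ to the balance constraint gives the Lagrangian
\begin{equation}
L = h_\zeta(\bm{r}) + \sum_{W_i=0} f(\gamma_i) + \alpha\Big(1 - \sum_{W_i=0}\gamma_i\Big) + \bm{\beta}'\big(\bm{X}_1 - \bm{X}_{0\cdot}'\bm{\gamma} - \bm{r}\big).
\end{equation}

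Next I would form the dual function by minimizing $L$ over the primal variables, exploiting its separability. Minimizing over $\bm{r}$ gives $\min_{\bm{r}}\{h_\zeta(\bm{r}) - \bm{\beta}'\bm{r}\} = -h^\ast_\zeta(\bm{\beta})$ directly from the definition of the convex conjugate. Minimizing over each $\gamma_i$ --- after collecting the linear coefficient $\alpha + \bm{\beta}'\bm{X}_i$ from the terms $-\alpha\gamma_i$ and $-\gamma_i\bm{\beta}'\bm{X}_i$, using $\bm{\beta}'\bm{X}_{0\cdot}'\bm{\gamma} = \sum_{W_i=0}\gamma_i\bm{\beta}'\bm{X}_i$ --- gives $\min_{\gamma_i}\{f(\gamma_i) - (\alpha+\bm{\beta}'\bm{X}_i)\gamma_i\} = -f^\ast(\alpha+\bm{\beta}'\bm{X}_i)$. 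Assembling the pieces yields the dual function $\alpha + \bm{\beta}'\bm{X}_1 - h^\ast_\zeta(\bm{\beta}) - \sum_{W_i=0} f^\ast(\alpha+\bm{\beta}'\bm{X}_i)$; maximizing it over $(\alpha,\bm{\beta})$ is the same as minimizing its negative, which is exactly \eqref{eq:scm_dual_general}. Since the objective is convex and the constraints are affine, strong duality holds (the refined Slater condition for affine constraints requires only feasibility), so there is no duality gap and the dual maximizer is attained.

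Finally I would recover the primal solution from stationarity. At the optimum, differentiating $L$ in $\gamma_i$ gives $f'(\hat{\gamma}_i) = \hat{\alpha} + \hat{\bm{\beta}}'\bm{X}_i$, and inverting through the correspondence $(f')^{-1} = f^{\ast\prime}$ for a strongly convex, differentiable $f$ produces $\hat{\gamma}_i = f^{\ast\prime}(\hat{\alpha}+\hat{\bm{\beta}}'\bm{X}_i)$. The main obstacle is bookkeeping rather than a hard estimate: I must fix the sign convention on $\bm{\beta}$ so the conjugates appear as $h^\ast_\zeta(\bm{\beta})$ and $f^\ast(\alpha+\bm{\beta}'\bm{X}_i)$ rather than with negated arguments, and I must justify the conjugate--gradient identity under the correct regularity. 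In particular, when $f$ encodes the nonnegativity constraint by taking the value $+\infty$ off its domain, the stationarity condition becomes a subgradient inclusion and the identity holds on the interior of $\dom f^\ast$, so that $f^{\ast\prime}$ automatically enforces $\hat{\gamma}_i \geq 0$.
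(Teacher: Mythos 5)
Your proposal is correct and follows essentially the same route as the paper's proof: introducing a slack variable for the imbalance, forming the Lagrangian with multipliers $\alpha$ and $\bm{\beta}$, exploiting separability to produce the conjugates $f^\ast$ and $h_\zeta^\ast$, invoking strong duality, and recovering $\hat{\gamma}_i = f^{\ast\prime}(\hat{\alpha} + \hat{\bm{\beta}}'\bm{X}_i)$ from stationarity. Your added care about the Slater condition and the subgradient treatment of the nonnegativity constraint is sound but not a departure from the paper's argument.
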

There is a large literature relating balancing weights to propensity score weights. This literature shows that the loss function in Equation \eqref{eq:scm_dual_general} is an M-estimator for the propensity score and thus will be consistent for the propensity score parameters under large $N$ asymptotics.
The dispersion measure $f(\cdot)$ determines the link function of the propensity score model, where the odds of treatment are $\frac{\pi(x)}{1-\pi(x)} = f^{\ast \prime}(\alpha + \bm{\beta}'x)$.
Note that un-penalized SCM, which can yield multiple solutions, does not have a well-defined link function.
We extend the duality to a general set of balance criteria so that Equation \eqref{eq:scm_dual_general} is a regularized M-estimator of the propensity score parameters where the balance criterion $h_\zeta(\cdot)$ determines the type of regularization through its conjugate $h_\zeta^\ast(\cdot)$. This formulation recovers the duality between entropy balancing and a logistic link \citep{Zhao2016a}, Oaxaca-Blinder weights and a log-logistic link \citep{Kline2011}, and $L^\infty$ balance and $L^1$ regularization \citep{Wang2018}.
This more general formulation also suggests natural extensions of both SCM and ASCM beyond the $L^2$ setting to other forms, especially $L^1$ regularization.

Specializing proposition \ref{prop:scm_dual} to a squared $L^2$ balance criterion $h_\zeta(x) = \frac{1}{2\zeta}\|x\|_2^2$ as in the penalized SCM problems yields that the dual propensity score coefficients $\bm{\beta}$ are regularized by a ridge penalty. In the case of an entropy dispersion penalty as \citet{Robbins2017} consider, the donor weights $\hat{\bm{\gamma}}$ have the form of IPW weights with a logistic link function, 
where the propensity score is $\pi(\bm{X}_i) = \text{logit}^{-1}(\alpha + \bm{\beta}'\bm{X}_i)$, 
the odds of treatment are $\frac{\pi(\bm{X}_i)}{1-\pi(\bm{X}_i)}=\exp(\alpha + \bm{\beta}'\bm{X}_i)=\gamma_i$.

We emphasize that while Proposition \ref{prop:scm_dual} shows that the the estimated weights have the IPW form, in SCM settings it may not always be appropriate to interpret the dual problem as a propensity score reflecting stochastic selection into treatment. For example, this interpretation would not be appropriate in some canonical SCM examples, such as the analysis of German reunification in \citet{Abadie2015}.

\begin{proof}[Proof of Proposition \ref{prop:scm_dual}]
We can augment the optimization problem \eqref{eq:scm_primal_general} with auxiliary variables $\epsilon$, yielding:

\begin{equation}
  \begin{aligned}
    \min_{\bm{\gamma}, \bm{\epsilon}}\;\;   & h_\zeta(\bm{\epsilon})+ \sum_{W_i=0} f(\gamma_i).    \\
    \text{subject to } & \bm{\epsilon} = \bm{X}_1 - \bm{X}_{0\cdot}'\bm{\gamma}\\
    & \sum_{W_i=0}\gamma_i = 1
  \end{aligned}
\end{equation}

The Lagrangian is
\begin{equation}
\calL(\bm{\gamma}, \bm{\epsilon}, \alpha, \bm{\beta}) = \sum_{i \mid W_i=0} f(\gamma_i) + \alpha (1 - \gamma_i) + h_\zeta(\bm{\epsilon}) +  \bm{\beta}'(\bm{X}_1 - \bm{X}_{0\cdot}' \bm{\gamma} -\bm{\epsilon}).
\label{eq:lagrange}
\end{equation}

The dual maximizes the objective

\begin{equation}
\begin{aligned}
  q(\alpha, \bm{\beta})  & = \min_{\bm{\gamma}, \bm{\epsilon}}\calL(\bm{\gamma}, \epsilon, \alpha, \bm{\beta})\\
  & =\sum_{W_i=0} \min_{\gamma_i} \{f(\gamma_i) - (\alpha + \bm{\beta}'\bm{X}_i )\gamma_i \} + \min_{\bm{\epsilon}} \{h_\zeta(\bm{\epsilon}) - \bm{\beta}'\bm{\epsilon}\} + \alpha + \bm{\beta}' \bm{X}_1\\
    & = -\sum_{W_i =0} f^*(\alpha + \bm{\beta}'\bm{X}_{i}) + \alpha + \bm{\beta}'\bm{X}_1' - h_\zeta^*(\bm{\beta}),
\end{aligned}  
\end{equation}
 By strong duality the general dual problem \eqref{eq:scm_dual_general}, which minimizes $-q(\alpha, \bm{\beta})$, is equivalent to the primal balancing weights problem. Given the $\hat{\alpha}$ and  $\hat{\bm{\beta}}$ that minimize the Lagrangian dual objective, $-q(\alpha, \bm{\beta})$, we recover the donor weights solution to \eqref{eq:scm_primal_general} as
    \begin{equation}
  \label{eq:weight_sol2}
  \hat{\gamma}_i = f^{\ast \prime}(\hat{\alpha} + \hat{\bm{\beta}}'\bm{X}_i).
\end{equation}
\end{proof}

\clearpage
\section{Additional figures}
\label{sec:additional_plots}

\begin{figure}[htbp]
\centering \includegraphics[width=0.5\textwidth]{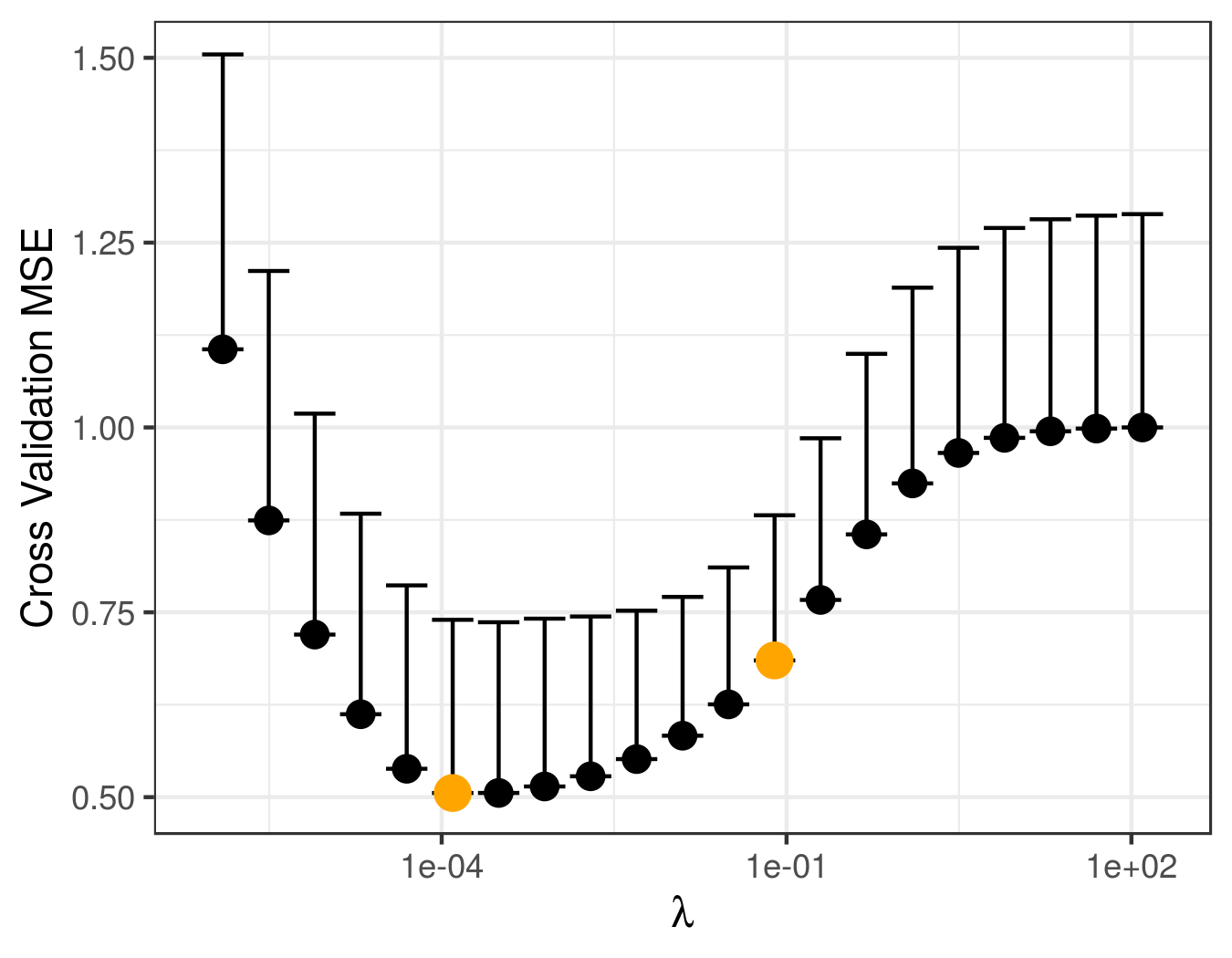}
  \caption{Cross validation MSE and one standard error computed according to Equation \eqref{eq:loocv}. The minimal point, and the maximum $\lambda$ within one standard error of the minimum are highlighted.}
    \label{fig:lambda_cv}
\end{figure}

\begin{figure}
  {\centering \includegraphics[width=\maxwidth]{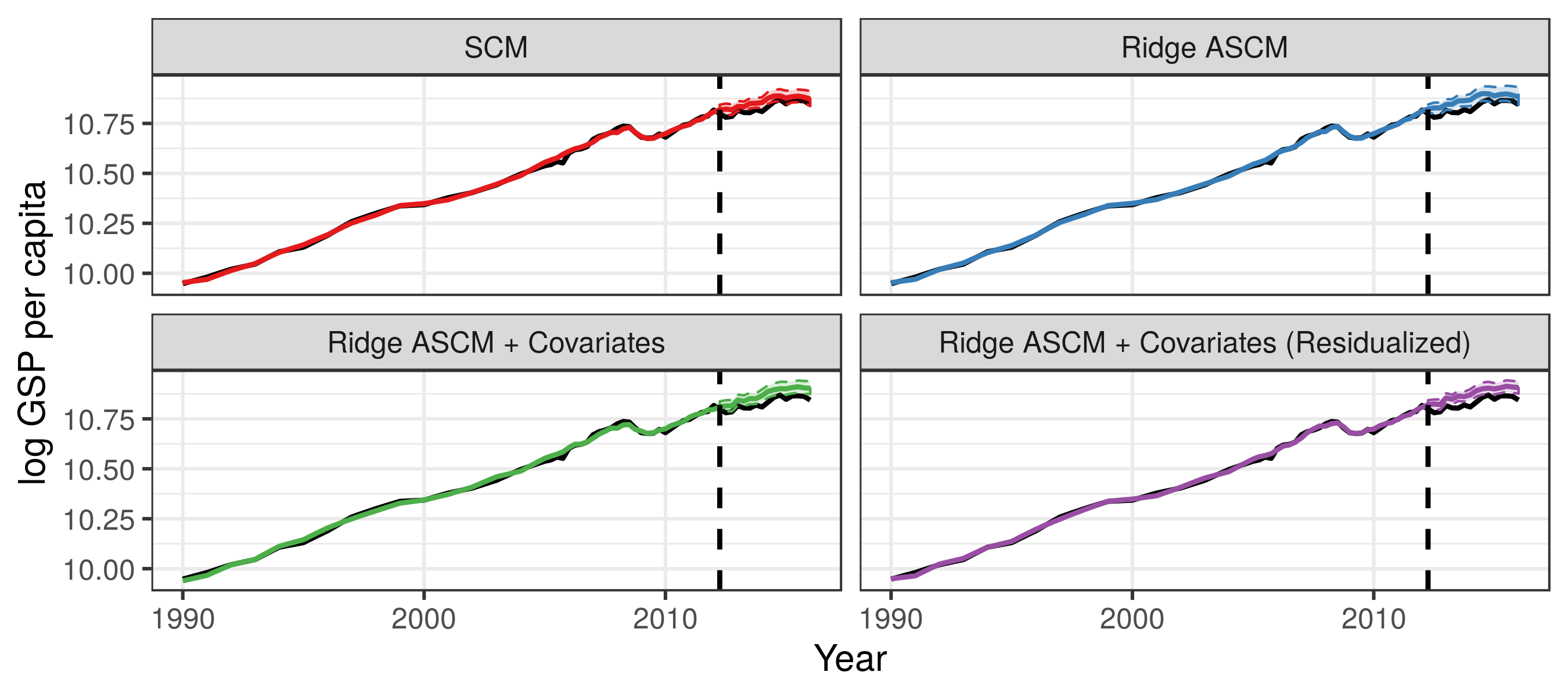}}
    \caption{Point estimates along with point-wise 95\% conformal prediction intervals for counterfactual log GSP per capita without the tax cuts using SCM, ridge ASCM, and ridge ASCM with covariates, plotting with the observed log GSP per capita in black.}
      \label{fig:synth_estimates_raw}
  \end{figure}

  \begin{figure}
    {\centering \includegraphics[width=\maxwidth]{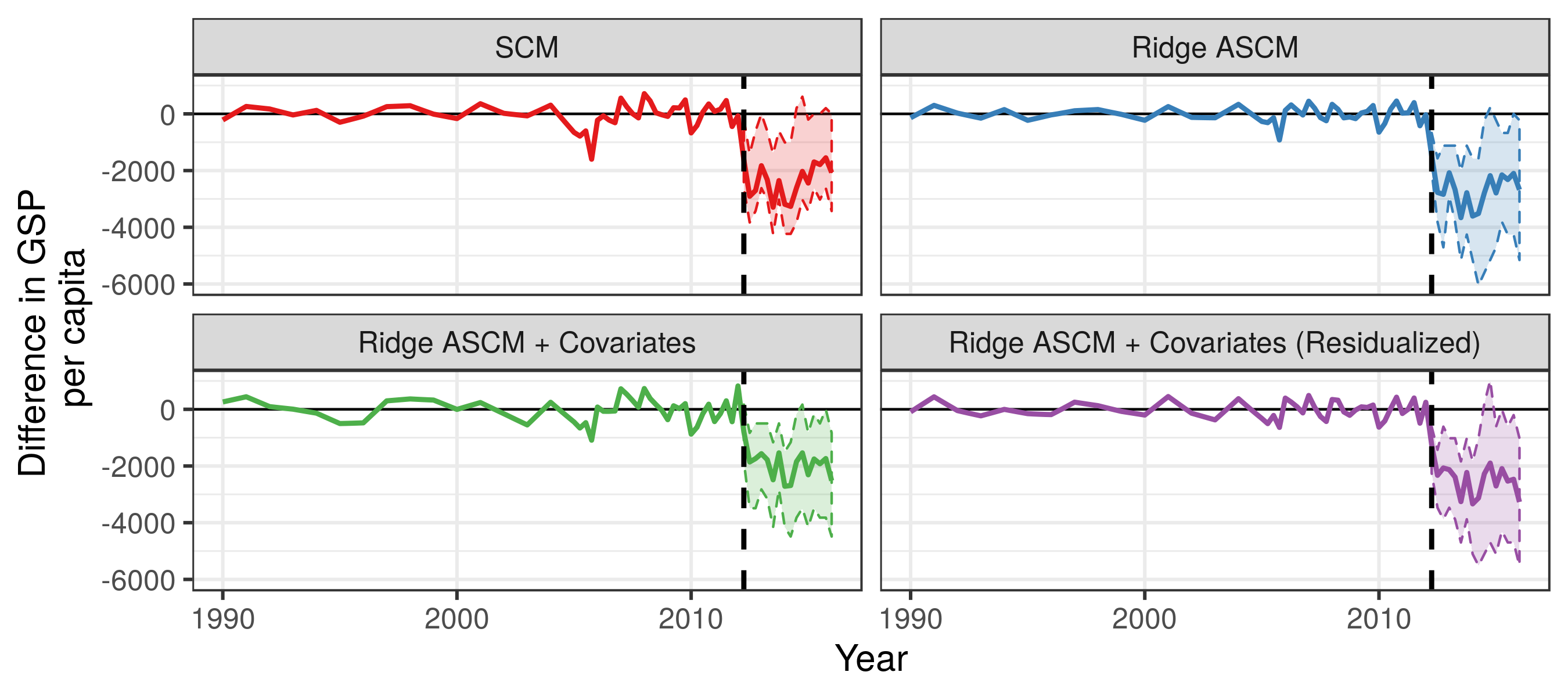}}
      \caption{Point estimates along with point-wise 95\% conformal confidence intervals for the effect of the tax cuts on GSP per capita using SCM, ridge ASCM, and ridge ASCM with covariates.}
      \label{fig:synth_estimates_levels}
    \end{figure}

\begin{figure}[!hbtp]
{\centering \includegraphics[width=\maxwidth]{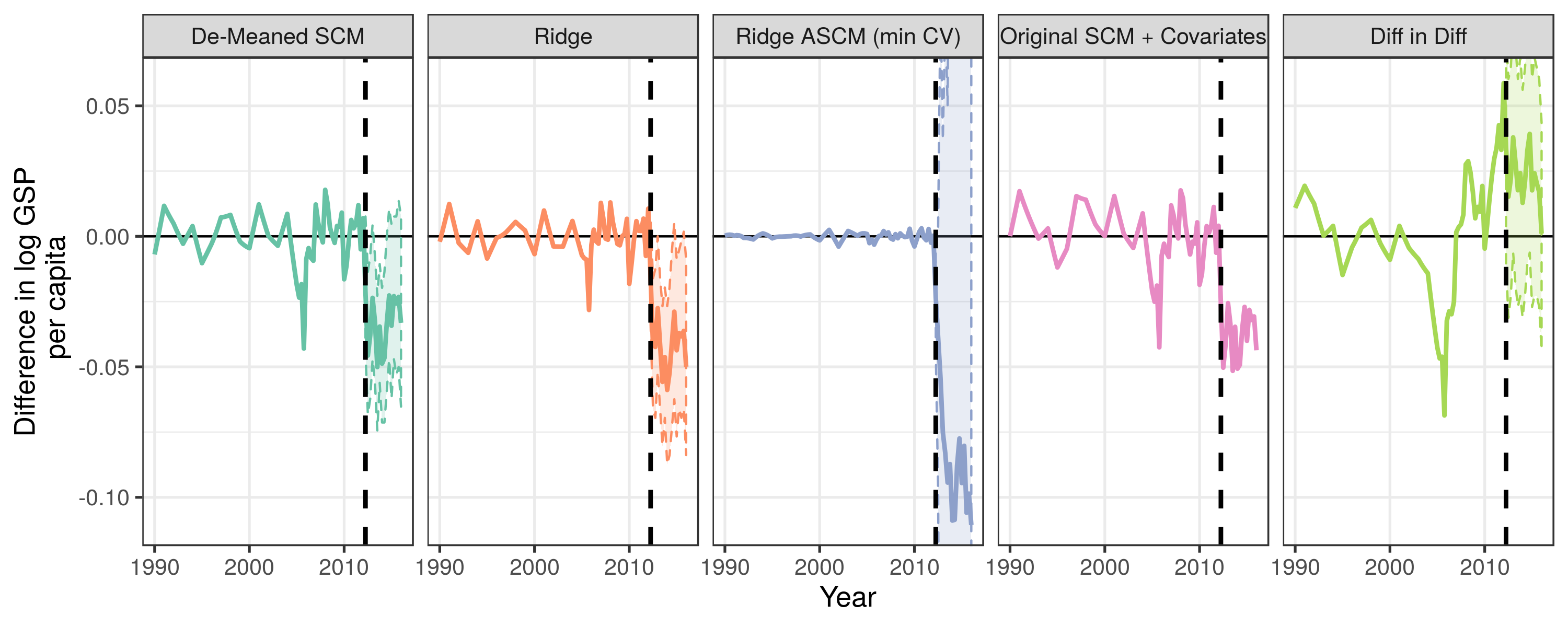} 
}
\caption{Point estimates $\pm$ two standard errors of the ATT for the effect of the tax cuts on log GSP per capita using de-meaned SCM, ridge regression alone, ridge ASCM with $\lambda$ chosen to minimize the cross validated MSE, the original SCM proposal with covariates \citep{AbadieAlbertoDiamond2010}, and a two-way fixed effects differences in differences estimate.}
    \label{fig:synth_estimates_appendix}
\end{figure}

\begin{figure}[!hbtp]
{\centering \includegraphics[width=0.8\maxwidth]{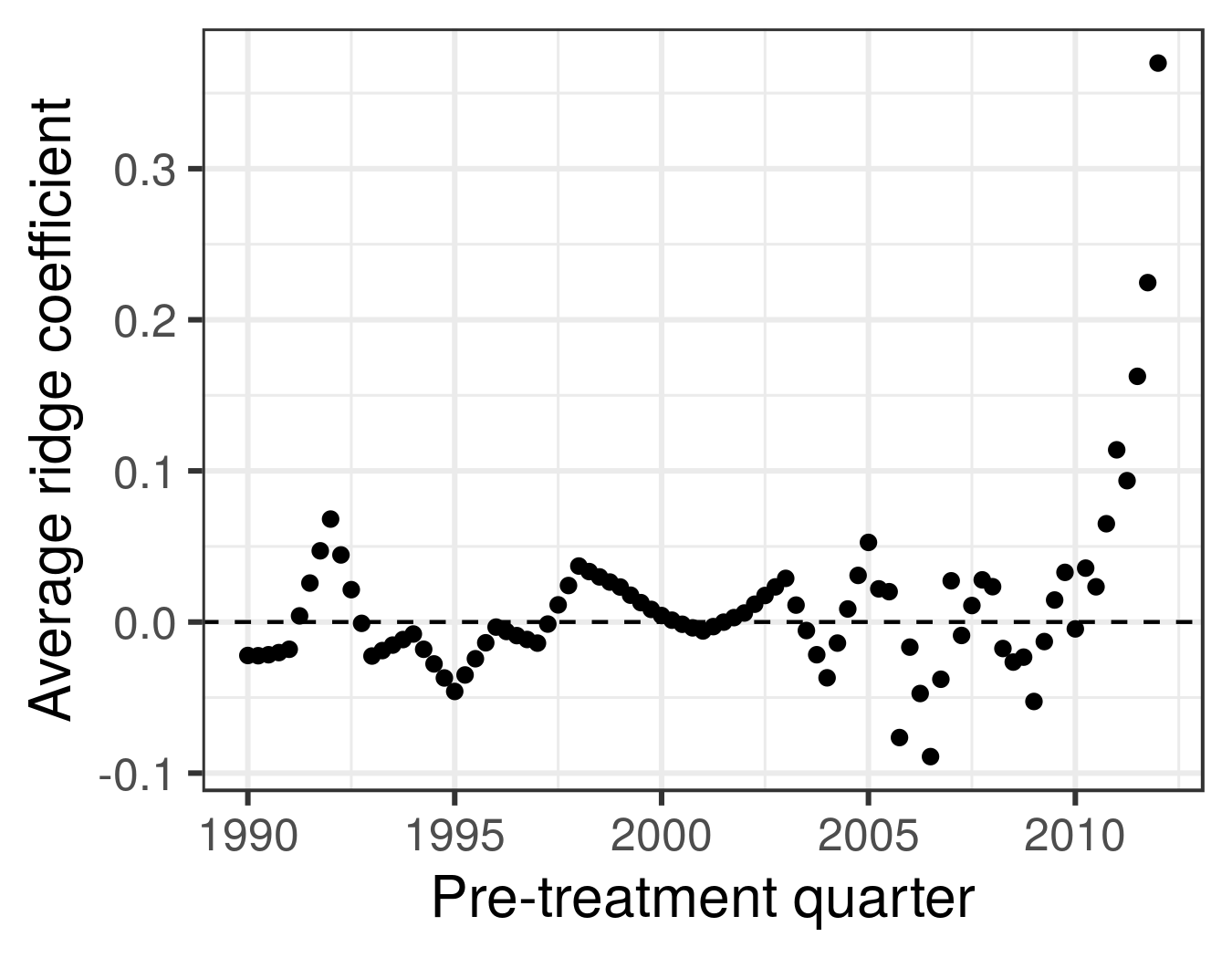}
}
\caption{Ridge regression coefficients for each pre-treatment quarter, averaged across post-treatment quarters.}
    \label{fig:ridge_coefs}
\end{figure}

\begin{figure}[htbp]
{\centering \includegraphics[width=\maxwidth]{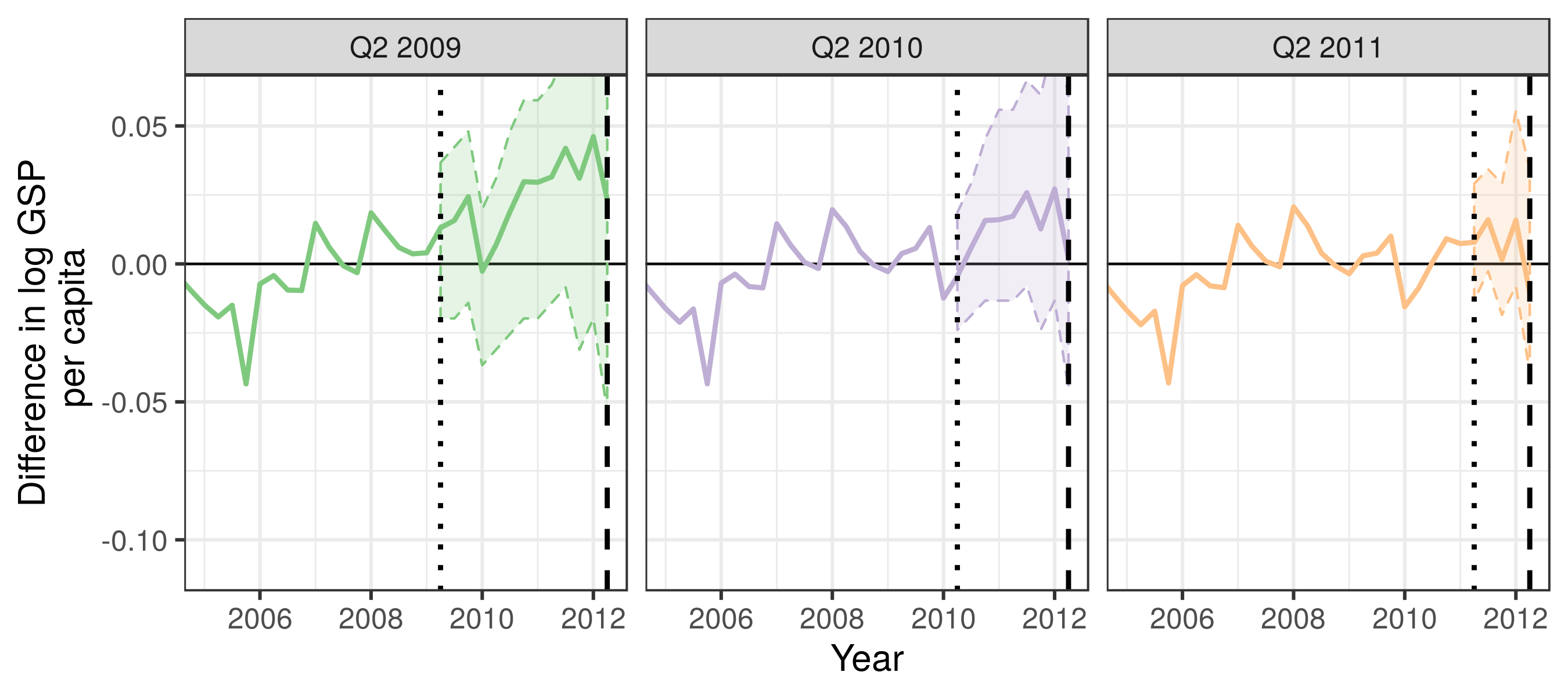}}
  \caption{Placebo point estimates $\pm$ two standard errors for SCM with placebo treatment times in Q2 2009, 2010, and 2011. Scale begins in 2005 to highlight placebo estimates.}
    \label{fig:lngdpcapita_placebo_scm}
\end{figure}

\begin{figure}
{\centering \includegraphics[width=\maxwidth]{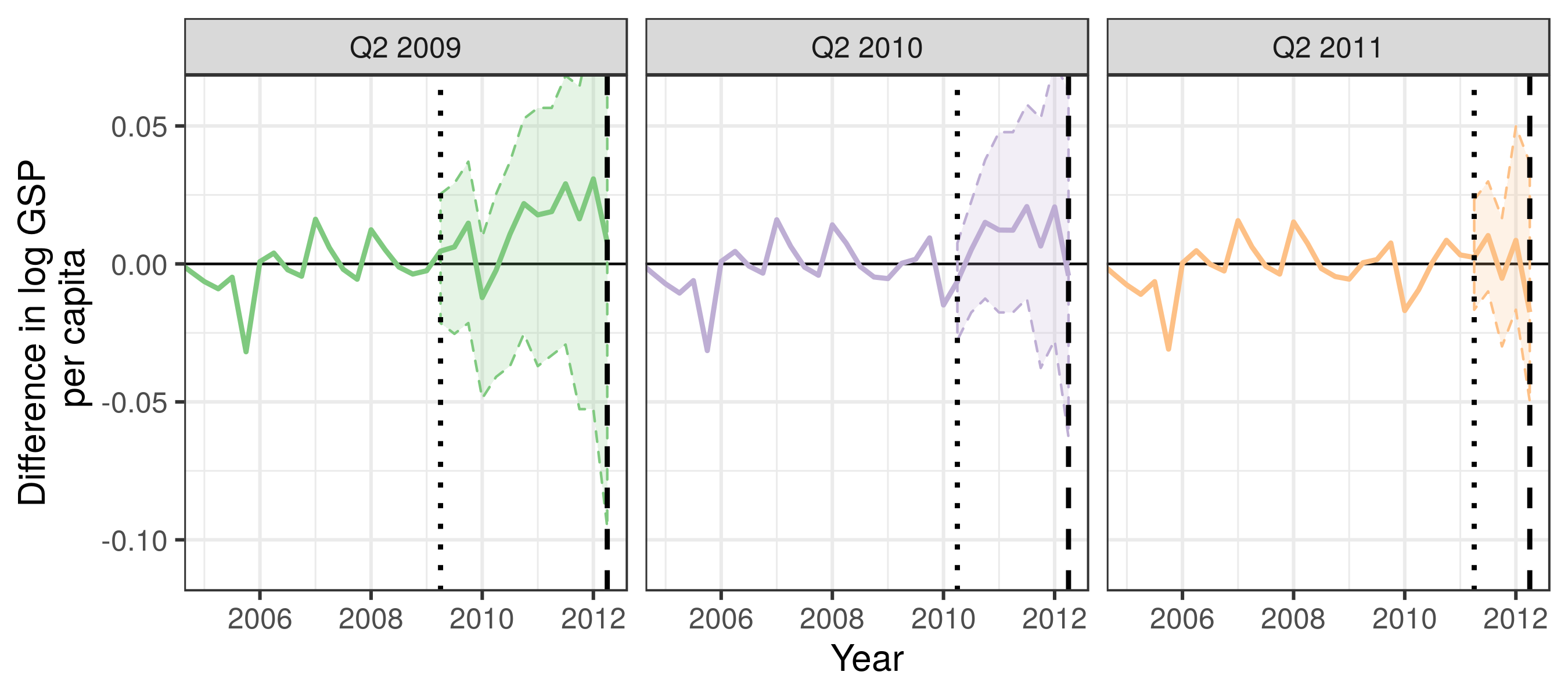}}
  \caption{Placebo point estimates $\pm$ two standard errors for ridge ASCM with placebo treatment times in Q2 2009, 2010, and 2011. Scale begins in 2005 to highlight placebo estimates.}
    \label{fig:lngdpcapita_placebo_ascm}
\end{figure}

\begin{figure}[!hbtp]
\centering \includegraphics[width=\maxwidth]{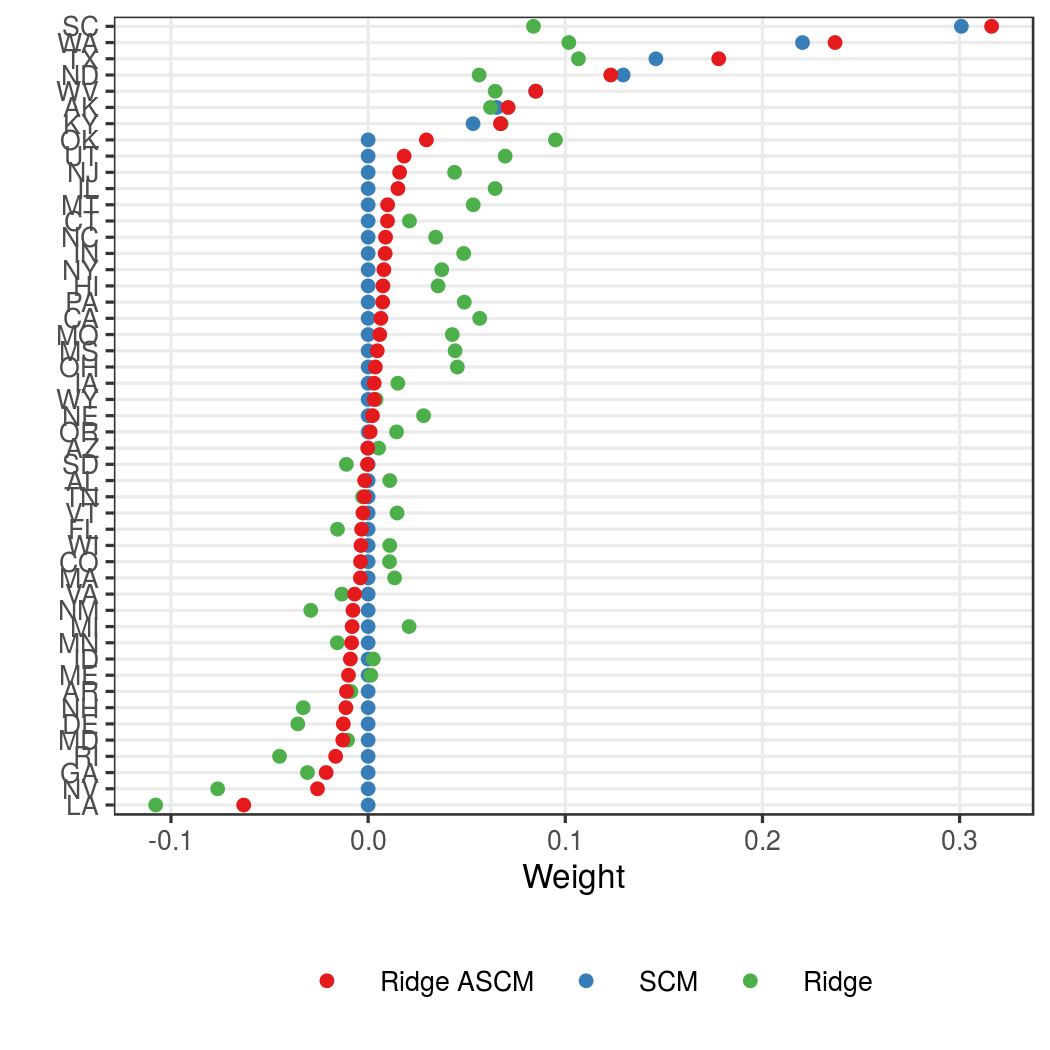}
\caption{Donor unit weights for SCM, ridge regression, and ridge ASCM balancing lagged outcomes.}
\label{fig:weight_plot_appendix}
\end{figure}

\begin{figure}[!hbtp]
  \centering \includegraphics[width=\maxwidth]{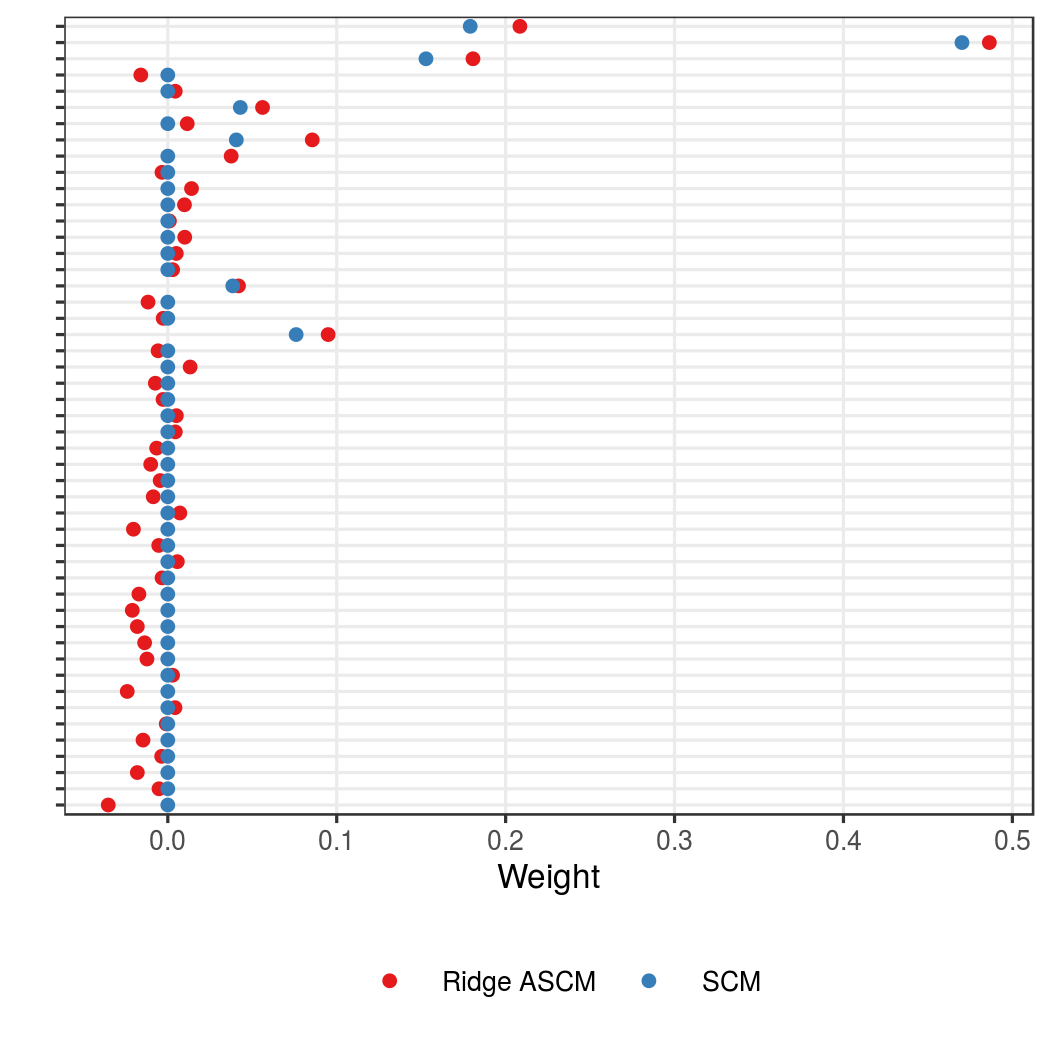}
  \caption{Donor unit weights for SCM and ridge ASCM fit on lagged outcomes after residualizing out auxiliary covariates.}
  \label{fig:weight_plot_resid}
  \end{figure}

\begin{figure}[h]
{\centering \includegraphics[width=\maxwidth]{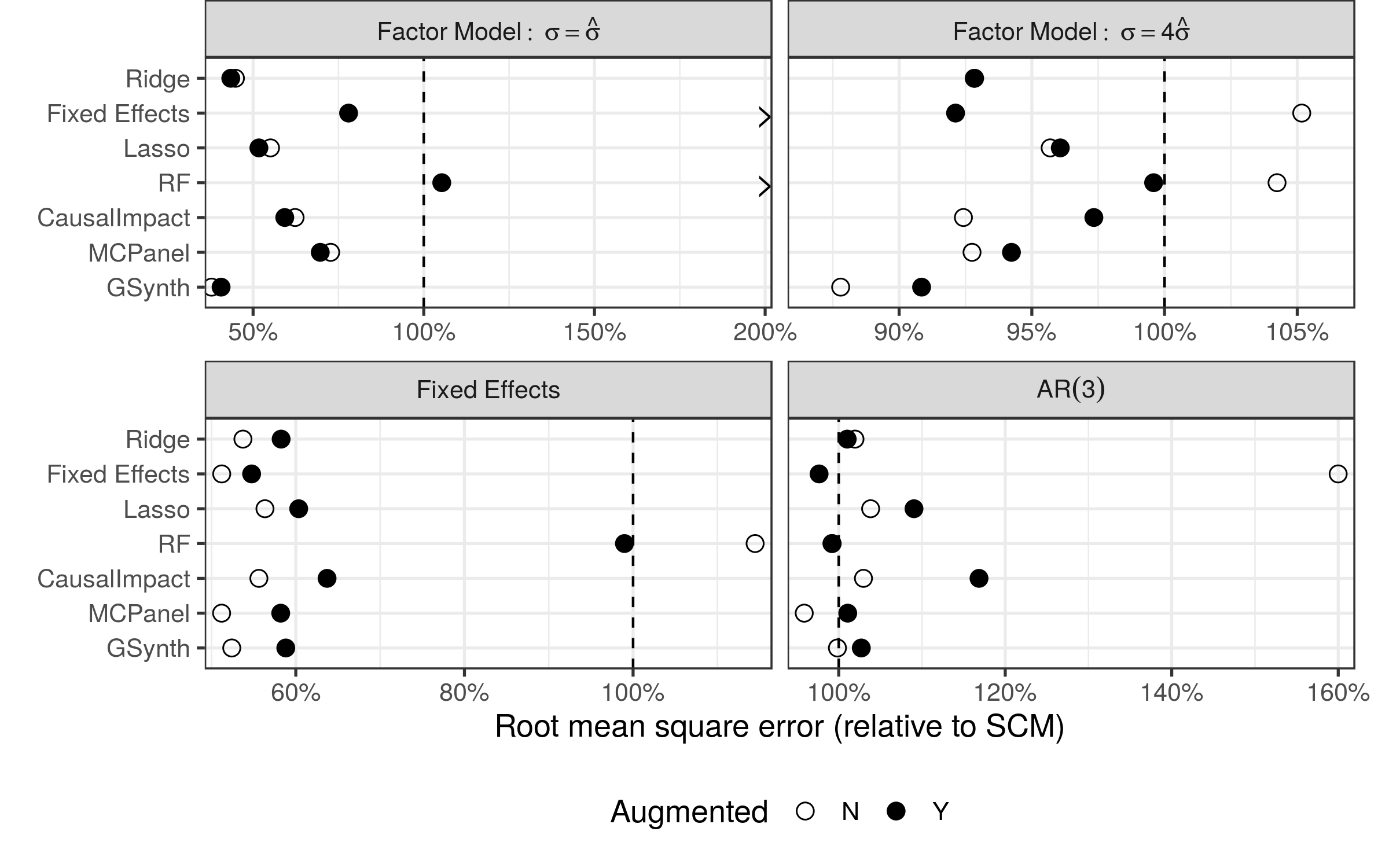} 
}
\caption{RMSE for different augmented and non-augmented estimators across outcome models.}
\label{fig:overview_rmse_plots}
\end{figure}

\begin{figure}[h]
{\centering \includegraphics[width=\maxwidth]{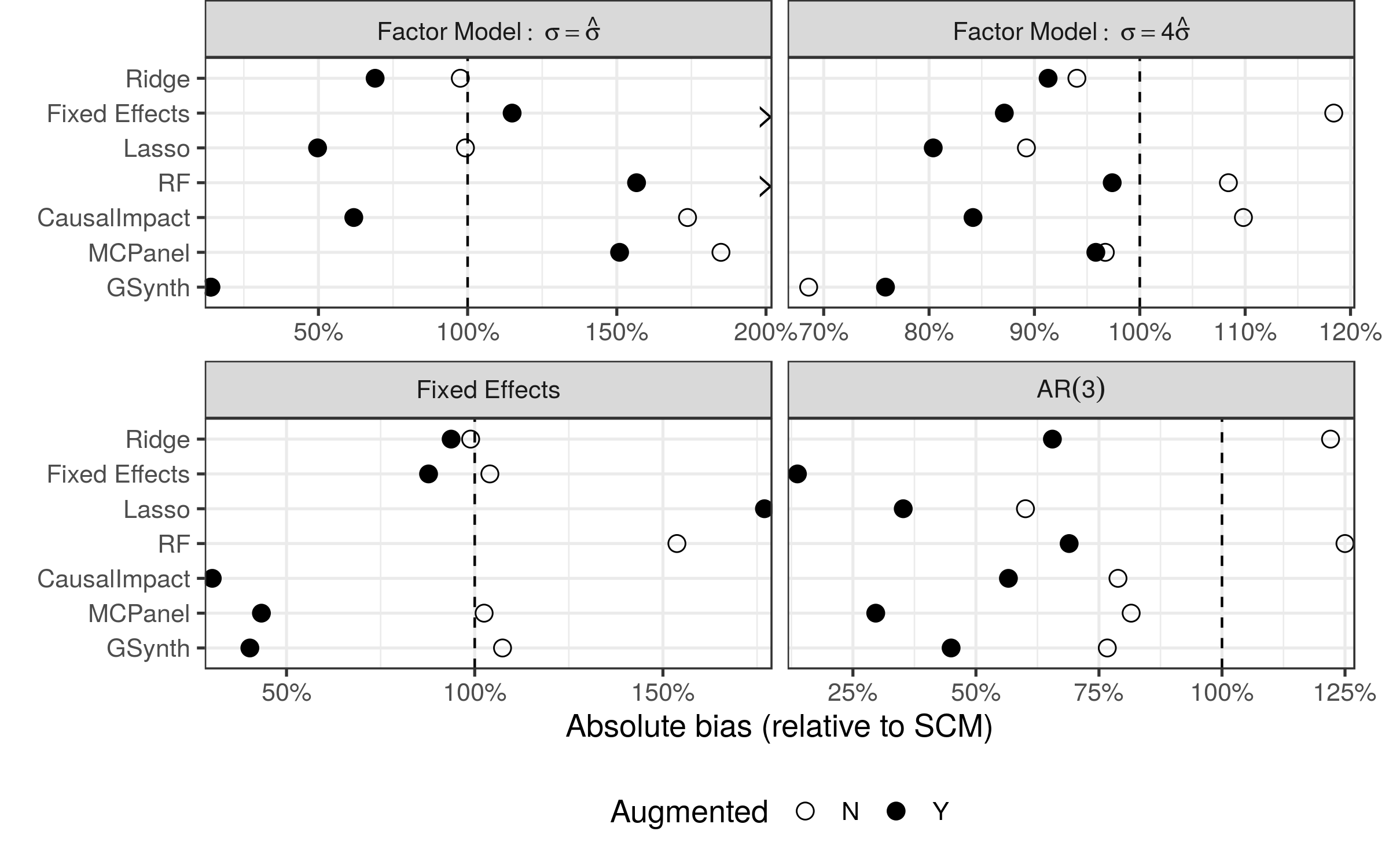} 
}
\caption{Bias for different augmented and non-augmented estimators across outcome models conditioned on SCM fit in the top quintile.}
\label{fig:bias_good_fit}
\end{figure}

\begin{figure}[htbp]
{\centering \includegraphics[width=\maxwidth]{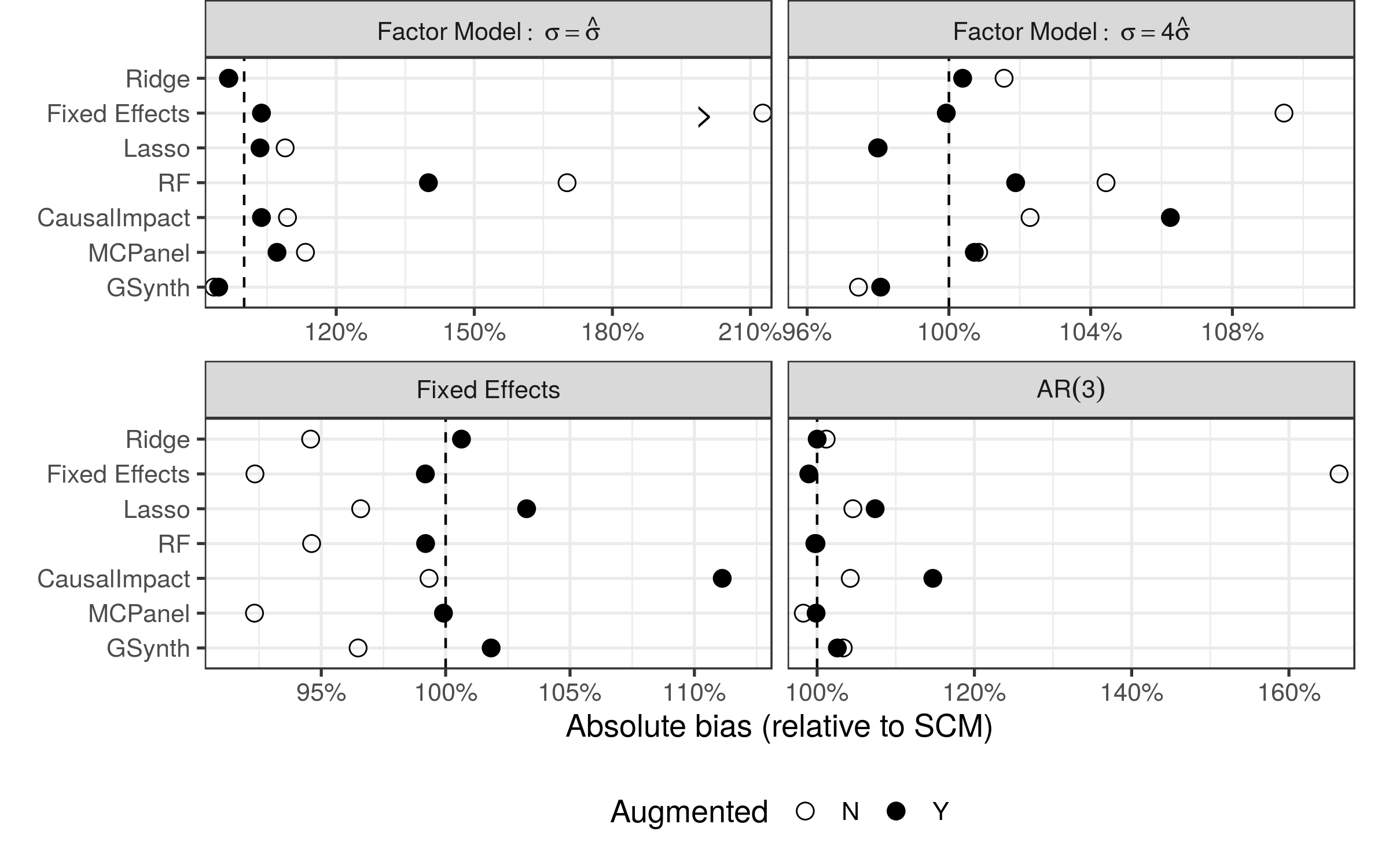} 
}
\caption{RMSE for different augmented and non-augmented estimators across outcome models conditioned on SCM fit in the top quntile.}
\label{fig:rmse_good_fit}
\end{figure}

\begin{figure}[htbp]
{\centering \includegraphics[width=\maxwidth]{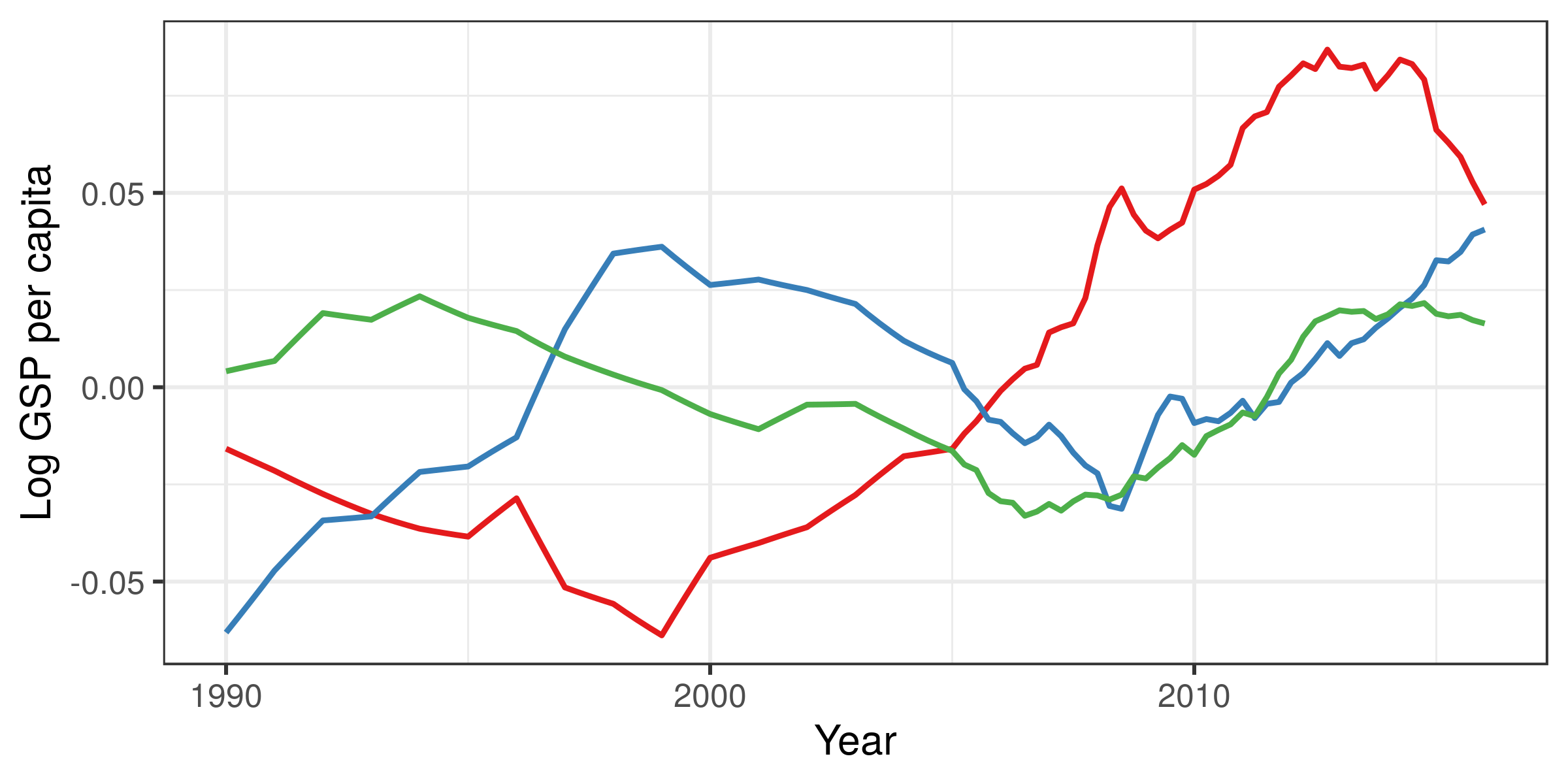} 
}
\caption{Latent factors for calibrated simulation studies.}
\label{fig:factors}
\end{figure}


\clearpage
\singlespacing
\bibliographystyle{chicago}
\bibliography{syn_ctrls}

\begin{thebibliography}{}

\bibitem[\protect\citeauthoryear{Abadie}{Abadie}{2019}]{abadie2019synthreview}
Abadie, A. (2019).
\newblock Using synthetic controls: Feasibility, data requirements, and
  methodological aspects.
\newblock {\em Journal of Economic Literature\/}.

\bibitem[\protect\citeauthoryear{Abadie, Diamond, and Hainmueller}{Abadie
  et~al.}{2010}]{AbadieAlbertoDiamond2010}
Abadie, A., A.~Diamond, and J.~Hainmueller (2010).
\newblock {Synthetic Control Methods for Comparative Case Studies: Estimating
  the Effect of California's Tobacco Control Program}.
\newblock {\em Journal of the American Statistical Association\/}~{\em
  105\/}(490), 493--505.

\bibitem[\protect\citeauthoryear{Abadie, Diamond, and Hainmueller}{Abadie
  et~al.}{2015}]{Abadie2015}
Abadie, A., A.~Diamond, and J.~Hainmueller (2015).
\newblock {Comparative Politics and the Synthetic Control Method}.
\newblock {\em American Journal of Political Science\/}~{\em 59\/}(2),
  495--510.

\bibitem[\protect\citeauthoryear{Abadie and Gardeazabal}{Abadie and
  Gardeazabal}{2003}]{Abadie2003}
Abadie, A. and J.~Gardeazabal (2003).
\newblock {The Economic Costs of Conflict: A Case Study of the Basque Country}.
\newblock {\em The American Economic Review\/}~{\em 93\/}(1), 113--132.

\bibitem[\protect\citeauthoryear{Abadie and Imbens}{Abadie and
  Imbens}{2011}]{abadie2011bias}
Abadie, A. and G.~W. Imbens (2011).
\newblock Bias-corrected matching estimators for average treatment effects.
\newblock {\em Journal of Business \& Economic Statistics\/}~{\em 29\/}(1),
  1--11.

\bibitem[\protect\citeauthoryear{Abadie and {L'Hour}}{Abadie and
  {L'Hour}}{2018}]{Abadie_LHour}
Abadie, A. and J.~{L'Hour} (2018).
\newblock A penalized synthetic control estimator for disaggregated data.

\bibitem[\protect\citeauthoryear{Amjad, Shah, and Shen}{Amjad
  et~al.}{2018}]{amjad2018robust}
Amjad, M., D.~Shah, and D.~Shen (2018).
\newblock Robust synthetic control.
\newblock {\em The Journal of Machine Learning Research\/}~{\em 19\/}(1),
  802--852.

\bibitem[\protect\citeauthoryear{Arkhangelsky, Athey, Hirshberg, Imbens, and
  Wager}{Arkhangelsky et~al.}{2019}]{arkhangelsky2018synthetic}
Arkhangelsky, D., S.~Athey, D.~A. Hirshberg, G.~W. Imbens, and S.~Wager (2019).
\newblock Synthetic difference in differences.
\newblock {\em arXiv preprint arXiv:1812.09970\/}.

\bibitem[\protect\citeauthoryear{Athey, Bayati, Doudchenko, Imbens, and
  Khosravi}{Athey et~al.}{2017}]{athey2017mcp}
Athey, S., M.~Bayati, N.~Doudchenko, G.~Imbens, and K.~Khosravi (2017).
\newblock {Matrix Completion Methods for Causal Panel Data Models}.
\newblock {\em arxiv 1710.10251\/}.

\bibitem[\protect\citeauthoryear{Athey and Imbens}{Athey and
  Imbens}{2017}]{athey2017state}
Athey, S. and G.~W. Imbens (2017).
\newblock The state of applied econometrics: Causality and policy evaluation.
\newblock {\em Journal of Economic Perspectives\/}~{\em 31\/}(2), 3--32.

\bibitem[\protect\citeauthoryear{Athey, Imbens, and Wager}{Athey
  et~al.}{2018}]{Athey2016}
Athey, S., G.~W. Imbens, and S.~Wager (2018).
\newblock Approximate residual balancing: debiased inference of average
  treatment effects in high dimensions.
\newblock {\em Journal of the Royal Statistical Society: Series B (Statistical
  Methodology)\/}~{\em 80\/}(4), 597--623.

\bibitem[\protect\citeauthoryear{Bai}{Bai}{2009}]{bai2009panel}
Bai, J. (2009).
\newblock Panel data models with interactive fixed effects.
\newblock {\em Econometrica\/}~{\em 77\/}(4), 1229--1279.

\bibitem[\protect\citeauthoryear{Barber, Candes, Ramdas, and Tibshirani}{Barber
  et~al.}{2019}]{Barber2019}
Barber, R.~F., E.~J. Candes, A.~Ramdas, and R.~J. Tibshirani (2019).
\newblock {Predictive inference with the jackknife+}.
\newblock {\em arXiv preprint arXiv:1905.02928\/}.

\bibitem[\protect\citeauthoryear{Ben-Michael, Feller, and
  Rothstein}{Ben-Michael et~al.}{2019}]{benmichael2019multisynth}
Ben-Michael, E., A.~Feller, and J.~Rothstein (2019).
\newblock Synthetic controls and weighted event studies with staggered
  adoption.
\newblock {\em arXiv preprint arXiv:1912.03290\/}.

\bibitem[\protect\citeauthoryear{Bilinski and Hatfield}{Bilinski and
  Hatfield}{2020}]{bilinski2020goldilocks}
Bilinski, A. and L.~Hatfield (2020).
\newblock Goldilocks and the pre-intervention time series.
\newblock Technical report.

\bibitem[\protect\citeauthoryear{Botosaru and Ferman}{Botosaru and
  Ferman}{2019}]{botosaru2019role}
Botosaru, I. and B.~Ferman (2019).
\newblock On the role of covariates in the synthetic control method.
\newblock {\em The Econometrics Journal\/}~{\em 22\/}(2), 117--130.

\bibitem[\protect\citeauthoryear{Breidt and Opsomer}{Breidt and
  Opsomer}{2017}]{Breidt2017}
Breidt, F.~J. and J.~D. Opsomer (2017).
\newblock {Model-Assisted Survey Estimation with Modern Prediction Techniques}.
\newblock {\em Statistical Science\/}~{\em 32\/}(2), 190--205.

\bibitem[\protect\citeauthoryear{Brodersen, Gallusser, Koehler, Remy, and
  Scott}{Brodersen et~al.}{2015}]{Brodersen2015}
Brodersen, K.~H., F.~Gallusser, J.~Koehler, N.~Remy, and S.~L. Scott (2015).
\newblock {Inferring Causal Impact using Bayesian Structural Time-Series
  Models}.
\newblock {\em The Annals of Applied Statistics\/}~{\em 9\/}(1), 247--274.

\bibitem[\protect\citeauthoryear{Cassel, Sarndal, and Wretman}{Cassel
  et~al.}{1976}]{Cassel1976}
Cassel, C.~M., C.-E. Sarndal, and J.~H. Wretman (1976).
\newblock {Some results on generalized difference estimation and generalized
  regression estimation for finite populations}.
\newblock {\em Biometrika\/}~{\em 63\/}(3), 615--620.

\bibitem[\protect\citeauthoryear{Cattaneo, Feng, and Titiunik}{Cattaneo
  et~al.}{2019}]{cattaneo2019prediction}
Cattaneo, M.~D., Y.~Feng, and R.~Titiunik (2019).
\newblock Prediction intervals for synthetic control methods.
\newblock {\em arXiv preprint arXiv:1912.07120\/}.

\bibitem[\protect\citeauthoryear{Chattopadhyay, {Christopher H. Hase}, and
  Zubizarreta}{Chattopadhyay et~al.}{2020}]{Chattopadhyay2019}
Chattopadhyay, A., {Christopher H. Hase}, and J.~R. Zubizarreta (2020).
\newblock {Balancing Versus Modeling Approaches to Weighting in Practice}.
\newblock {\em Statistics in Medicine\/}~{\em in press}.

\bibitem[\protect\citeauthoryear{Chernozhukov, Wuthrich, and Zhu}{Chernozhukov
  et~al.}{2018}]{chernozhukov2018inference}
Chernozhukov, V., K.~Wuthrich, and Y.~Zhu (2018).
\newblock Inference on average treatment effects in aggregate panel data
  settings.
\newblock {\em arXiv preprint arXiv:1812.10820\/}.

\bibitem[\protect\citeauthoryear{Chernozhukov, W{\"{u}}thrich, and
  Zhu}{Chernozhukov et~al.}{2019}]{chernozhukov2017exact}
Chernozhukov, V., K.~W{\"{u}}thrich, and Y.~Zhu (2019).
\newblock {An Exact and Robust Conformal Inference Method for Counterfactual
  and Synthetic Controls}.
\newblock Technical report.

\bibitem[\protect\citeauthoryear{Donohue, Aneja, and Weber}{Donohue
  et~al.}{2017}]{donohue2017right}
Donohue, J.~J., A.~Aneja, and K.~D. Weber (2017).
\newblock Right-to-carry laws and violent crime: A comprehensive assessment
  using panel data and a state-level synthetic control analysis.
\newblock Technical report, National Bureau of Economic Research.

\bibitem[\protect\citeauthoryear{Doudchenko and Imbens}{Doudchenko and
  Imbens}{2017}]{Doudchenko2017}
Doudchenko, N. and G.~W. Imbens (2017).
\newblock {Difference-In-Differences and Synthetic Control Methods: A
  Synthesis}.
\newblock {\em arxiv 1610.07748\/}.

\bibitem[\protect\citeauthoryear{Dube and Zipperer}{Dube and
  Zipperer}{2015}]{dube2015pooling}
Dube, A. and B.~Zipperer (2015).
\newblock Pooling multiple case studies using synthetic controls: An
  application to minimum wage policies.

\bibitem[\protect\citeauthoryear{Ferman}{Ferman}{2019}]{Ferman2019}
Ferman, B. (2019).
\newblock {On the Properties of the Synthetic Control Estimator with Many
  Periods and Many Controls}.

\bibitem[\protect\citeauthoryear{Ferman and Pinto}{Ferman and
  Pinto}{2018}]{ferman2018revisiting}
Ferman, B. and C.~Pinto (2018).
\newblock Synthetic controls with imperfect pre-treatment fit.

\bibitem[\protect\citeauthoryear{Gobillon and Magnac}{Gobillon and
  Magnac}{2016}]{gobillon2016regional}
Gobillon, L. and T.~Magnac (2016).
\newblock Regional policy evaluation: Interactive fixed effects and synthetic
  controls.
\newblock {\em Review of Economics and Statistics\/}~{\em 98\/}(3), 535--551.

\bibitem[\protect\citeauthoryear{Hastie, Friedman, and Tibshirani}{Hastie
  et~al.}{2009}]{hastie2009elements}
Hastie, T., J.~Friedman, and R.~Tibshirani (2009).
\newblock {\em The elements of statistical learning}.
\newblock Springer series in statistics New York.

\bibitem[\protect\citeauthoryear{Hazlett and Xu}{Hazlett and
  Xu}{2018}]{hazlett2018trajectory}
Hazlett, C. and Y.~Xu (2018).
\newblock Trajectory balancing: A general reweighting approach to causal
  inference with time-series cross-sectional data.

\bibitem[\protect\citeauthoryear{Hirshberg, Maleki, and Zubizarreta}{Hirshberg
  et~al.}{2019}]{hirshberg2019minimax}
Hirshberg, D.~A., A.~Maleki, and J.~Zubizarreta (2019).
\newblock Minimax linear estimation of the retargeted mean.
\newblock {\em arXiv preprint arXiv:1901.10296\/}.

\bibitem[\protect\citeauthoryear{Hirshberg and Wager}{Hirshberg and
  Wager}{2018}]{Hirshberg2018}
Hirshberg, D.~A. and S.~Wager (2018).
\newblock {Augmented Minimax Linear Estimation}.

\bibitem[\protect\citeauthoryear{Hsiao, Zhou, et~al.}{Hsiao
  et~al.}{2018}]{hsiao2018panel}
Hsiao, C., Q.~Zhou, et~al. (2018).
\newblock Panel parametric, semi-parametric and nonparametric construction of
  counterfactuals-california tobacco control revisited.
\newblock Technical report.

\bibitem[\protect\citeauthoryear{Kellogg, Mogstad, Pouliot, and
  Torgovitsky}{Kellogg et~al.}{2020}]{kellogg2020combining}
Kellogg, M., M.~Mogstad, G.~Pouliot, and A.~Torgovitsky (2020).
\newblock Combining matching and synthetic controls to trade off biases from
  extrapolation and interpolation.
\newblock Technical report, National Bureau of Economic Research.

\bibitem[\protect\citeauthoryear{King and Zeng}{King and
  Zeng}{2006}]{king2006dangers}
King, G. and L.~Zeng (2006).
\newblock The dangers of extreme counterfactuals.
\newblock {\em Political Analysis\/}~{\em 14\/}(2), 131--159.

\bibitem[\protect\citeauthoryear{Kline}{Kline}{2011}]{Kline2011}
Kline, P. (2011).
\newblock {Oaxaca-Blinder as a reweighting estimator}.
\newblock In {\em American Economic Review}, Volume 101, pp.\  532--537.

\bibitem[\protect\citeauthoryear{Kreif, Grieve, Hangartner, Turner, Nikolova,
  and Sutton}{Kreif et~al.}{2016}]{kreif2016examination}
Kreif, N., R.~Grieve, D.~Hangartner, A.~J. Turner, S.~Nikolova, and M.~Sutton
  (2016).
\newblock Examination of the synthetic control method for evaluating health
  policies with multiple treated units.
\newblock {\em Health economics\/}~{\em 25\/}(12), 1514--1528.

\bibitem[\protect\citeauthoryear{Li}{Li}{2017}]{Li2017}
Li, K.~T. (2017).
\newblock {Statistical Inference for Average Treatment Effects Estimated by
  Synthetic Control Methods}.

\bibitem[\protect\citeauthoryear{Minard and Waddell}{Minard and
  Waddell}{2018}]{minard2018dispersion}
Minard, S. and G.~R. Waddell (2018).
\newblock Dispersion-weighted synthetic controls.

\bibitem[\protect\citeauthoryear{Neyman}{Neyman}{1923}]{neyman1923}
Neyman, J. ({1990 [1923]}).
\newblock On the application of probability theory to agricultural experiments.
  essay on principles. section 9.
\newblock {\em Statistical Science\/}~{\em 5\/}(4), 465--472.

\bibitem[\protect\citeauthoryear{Powell}{Powell}{2018}]{powell2018imperfect}
Powell, D. (2018).
\newblock Imperfect synthetic controls: Did the massachusetts health care
  reform save lives?

\bibitem[\protect\citeauthoryear{Rickman and Wang}{Rickman and
  Wang}{2018}]{rickman2018two}
Rickman, D.~S. and H.~Wang (2018).
\newblock Two tales of two us states: Regional fiscal austerity and economic
  performance.
\newblock {\em Regional Science and Urban Economics\/}~{\em 68}, 46--55.

\bibitem[\protect\citeauthoryear{Robbins, Saunders, and Kilmer}{Robbins
  et~al.}{2017}]{Robbins2017}
Robbins, M., J.~Saunders, and B.~Kilmer (2017).
\newblock {A Framework for Synthetic Control Methods With High-Dimensional,
  Micro-Level Data: Evaluating a Neighborhood-Specific Crime Intervention}.
\newblock {\em Journal of the American Statistical Association\/}~{\em
  112\/}(517), 109--126.

\bibitem[\protect\citeauthoryear{Robins, Rotnitzky, and Zhao}{Robins
  et~al.}{1994}]{robins1994estimation}
Robins, J.~M., A.~Rotnitzky, and L.~P. Zhao (1994).
\newblock Estimation of regression coefficients when some regressors are not
  always observed.
\newblock {\em Journal of the American Statistical Association\/}~{\em
  89\/}(427), 846--866.

\bibitem[\protect\citeauthoryear{Rubin}{Rubin}{1973}]{rubin1973adjust}
Rubin, D.~B. (1973).
\newblock The use of matched sampling and regression adjustment to remove bias
  in observational studies.
\newblock {\em Biometrics\/}, 185--203.

\bibitem[\protect\citeauthoryear{Rubin}{Rubin}{1974}]{rubin1974}
Rubin, D.~B. (1974).
\newblock Estimating causal effects of treatments in randomized and
  nonrandomized studies.
\newblock {\em Journal of Educational Psychology\/}~{\em 66\/}(5), 688.

\bibitem[\protect\citeauthoryear{Rubin}{Rubin}{1980}]{rubin1980}
Rubin, D.~B. (1980).
\newblock Comment on ``randomization analysis of experimental data: The fisher
  randomization test".
\newblock {\em Journal of the American Statistical Association\/}~{\em
  75\/}(371), 591--593.

\bibitem[\protect\citeauthoryear{Samartsidis, Seaman, Presanis, Hickman,
  De~Angelis, et~al.}{Samartsidis et~al.}{2019}]{samartsidis2019assessing}
Samartsidis, P., S.~R. Seaman, A.~M. Presanis, M.~Hickman, D.~De~Angelis,
  et~al. (2019).
\newblock Assessing the causal effect of binary interventions from
  observational panel data with few treated units.
\newblock {\em Statistical Science\/}~{\em 34\/}(3), 486--503.

\bibitem[\protect\citeauthoryear{Soriano, Ben-Michael, Bickel, Feller, and
  Pimentel}{Soriano et~al.}{2020}]{soriano2019sensitivity}
Soriano, D., E.~Ben-Michael, P.~Bickel, A.~Feller, and S.~Pimentel (2020).
\newblock Sensitivity analysis for balancing weights.
\newblock Technical report.
\newblock working paper.

\bibitem[\protect\citeauthoryear{Tan}{Tan}{2017}]{Tan2017}
Tan, Z. (2017).
\newblock {Regularized calibrated estimation of propensity scores with model
  misspecification and high-dimensional data}.

\bibitem[\protect\citeauthoryear{Toulis and Shaikh}{Toulis and
  Shaikh}{2018}]{toulis2018testing}
Toulis, P. and A.~Shaikh (2018).
\newblock Randomization tests in observational studies with time-varying
  adoption of treatment.

\bibitem[\protect\citeauthoryear{Vovk, Gammerman, and Shafer}{Vovk
  et~al.}{2005}]{Vovk2005}
Vovk, V., A.~Gammerman, and G.~Shafer (2005).
\newblock {\em {Algorithmic learning in a random world}}.
\newblock Springer.

\bibitem[\protect\citeauthoryear{Wainwright}{Wainwright}{2018}]{wainright2018high}
Wainwright, M. (2018).
\newblock {\em High dimensional statistics: a non-asymptomatic viewpoint}.

\bibitem[\protect\citeauthoryear{Wang and Zubizarreta}{Wang and
  Zubizarreta}{2018}]{Wang2018}
Wang, Y. and J.~R. Zubizarreta (2018).
\newblock {Minimal Approximately Balancing Weights: Asymptotic Properties and
  Practical Considerations}.

\bibitem[\protect\citeauthoryear{Xu}{Xu}{2017}]{Xu2017}
Xu, Y. (2017).
\newblock {Generalized Synthetic Control Method: Causal Inference with
  Interactive Fixed Effects Models}.
\newblock {\em Political Analysis\/}~{\em 25}, 57--76.

\bibitem[\protect\citeauthoryear{Zhao}{Zhao}{2018}]{Zhao2017}
Zhao, Q. (2018).
\newblock {Covariate Balancing Propensity Score by Tailored Loss Functions}.
\newblock {\em {Annals of Statistics}\/}, forthcoming.

\bibitem[\protect\citeauthoryear{Zhao and Percival}{Zhao and
  Percival}{2017}]{Zhao2016a}
Zhao, Q. and D.~Percival (2017).
\newblock Entropy balancing is doubly robust.
\newblock {\em Journal of Causal Inference\/}~{\em 5\/}(1).

\bibitem[\protect\citeauthoryear{Zubizarreta}{Zubizarreta}{2015}]{Zubizarreta2015}
Zubizarreta, J.~R. (2015).
\newblock {Stable Weights that Balance Covariates for Estimation With
  Incomplete Outcome Data}.
\newblock {\em Journal of the American Statistical Association\/}~{\em
  110\/}(511), 910--922.

\end{thebibliography}

\end{document}